\newcommand{\debug}[1]{#1}
\colorlet{MyBlue}{DodgerBlue!60!Black}
\colorlet{MyGreen}{DarkGreen!85!Black}
\newcommand\reallywidehat[1]{%
\savestack{\tmpbox}{\stretchto{%
  \scaleto{%
    \scalerel*[\widthof{\ensuremath{#1}}]{\kern.1pt\mathchar"0362\kern.1pt}%
    {\rule{0ex}{\textheight}}
  }{\textheight}%
}{4.4ex}}%
\stackon[-11.9pt]{#1}{\tmpbox}%
}
\numberwithin{equation}{section}  
\newcommand{\reduc}[1]{\widehat#1}
\newcommand{\dreduc}[1]{\Hat{\Hat#1}}
\newcommand{\argdot}{\,\cdot\,}
\newcommand{\R}{\mathbb{R}}
\newcommand{\game}{\boldsymbol{\debug g}}
\newcommand{\gamealt}{\game^{\prime}}
\newcommand{\games}{\mathcal{\debug G}}
\newcommand{\subsgames}{\mathcal{\debug X}}
\newcommand{\act}{\debug s}
\newcommand{\actalt}{\debug t}
\newcommand{\actr}{\debug r}
\newcommand{\actprof}{\boldsymbol{\act}}
\newcommand{\actaltprof}{\boldsymbol{\actalt}}
\newcommand{\actrprof}{\boldsymbol{\actr}}
\newcommand{\actions}{\mathcal{\debug S}}
\newcommand{\actionsalt}{\actions^{\prime}}
\newcommand{\mixed}{\debug x}
\newcommand{\mixedalt}{\debug y}
\newcommand{\mixedprof}{\boldsymbol{\mixed}}
\newcommand{\nplayers}{\debug n}
\newcommand{\players}{\mathcal{\debug N}}
\newcommand{\play}{\debug i}
\newcommand{\playalt}{\debug j}
\newcommand{\payoff}{\debug g}
\newcommand{\equils}{\mathcal{\debug E}}
\newcommand{\potent}{\debug \Psi}
\DeclareMathOperator{\NS}{\mathsf{\debug{NS}}}
\DeclareMathOperator{\No}{\mathsf{\debug{No}}}
\DeclareMathOperator{\Pot}{\mathsf{\debug{P}}}
\DeclareMathOperator{\Har}{\mathsf{\debug{H}}}
\DeclareMathOperator{\NSG}{\mathsf{\debug{NSG}}}
\DeclareMathOperator{\NoG}{\mathsf{\debug{NoG}}}
\DeclareMathOperator{\NoGd}{\mathsf{\debug{-NoG}}}
\DeclareMathOperator{\PGd}{\mathsf{\debug{-PG}}}
\DeclareMathOperator{\HGd}{\mathsf{\debug{-HG}}}
\DeclareMathOperator{\NSd}{\mathsf{\debug{-NS}}}
\DeclareMathOperator{\Potd}{\mathsf{\debug{-P}}}
\DeclareMathOperator{\Hard}{\mathsf{\debug{-H}}}
\newcommand{\simplex}{\mathlarger{\debug \bigtriangleup}}
\newcommand{\measures}{\mathcal{\debug M}}
\newcommand{\bmeas}{\debug b}
\newcommand{\betameas}{\debug \beta}
\newcommand{\cmeas}{\debug {\debug c}}
\newcommand{\gammameas}{\debug \gamma}
\newcommand{\mumeas}{\debug \mu}
\newcommand{\numeas}{\debug \nu}
\newcommand{\thetameas}{\debug \theta}
\newcommand{\bprof}{\boldsymbol{\bmeas}}
\newcommand{\cprof}{\boldsymbol{\cmeas}}
\newcommand{\muprof}{\boldsymbol{\mumeas}}
\newcommand{\setA}{\debug A}
\newcommand{\mapT}{\debug T}
\newcommand{\maps}{\mathcal{\debug T}}
\newcommand{\scala}{\debug \eta}
\newcommand{\scalb}{\debug \theta}
\newcommand{\norml}[1]{\overline#1}
\newcommand{\permut}{\debug \sigma}
\newcommand{\nonstratf}{\debug \ell}
\newcommand{\Czero}{\debug{C_{0}}}
\newcommand{\Cone}{\debug{C_{1}}}
\newcommand{\func}{\debug h}
\newcommand{\funcalt}{\debug f}
\newcommand{\funcvec}{\boldsymbol{\func}}
\newcommand{\threedecmap}{\debug \zeta}
\newcommand{\threedecmapalt}{\threedecmap^{\prime}}
\newcommand{\threedecmaps}{\mathcal{\debug Z}}
\newcommand{\nsgameb}{\boldsymbol{\betameas}}
\newcommand{\nsgameg}{\boldsymbol{\gammameas}}
\newcommand{\varscal}{\debug \beta}
\newcommand{\opLam}{\debug \Lambda}
\newcommand{\opPi}{\debug \Pi}
\newcommand{\flow}{\debug X}
\newcommand{\flowalt}{\debug Y}
\newcommand{\compar}{\debug E}
\newcommand{\graph}{\debug \Gamma}
\newcommand{\symmfunc}{\debug W}
\newcommand{\gradop}{\debug \delta}
\newcommand{\basis}{\debug \varepsilon}
\newcommand{\embop}{\debug D}
\newcommand{\funcphi}{\debug \varphi}
\newcommand{\funcpsi}{\debug \psi}
\newcommand{\subactions}{\mathcal{\debug Y}}
\newcommand{\weight}{\debug w}
\newcommand{\weightprof}{\boldsymbol{\weight}}
\newcommand{\ared}{\debug \alpha}
\newcommand{\dggp}{\debug d}
\DeclareMathOperator{\dup}{\mathsf{\debug{dup}}}
\DeclareMathOperator{\red}{\mathsf{\debug{red}}}
\DeclareMathOperator{\aredred}{\ared-{\red}}
\DeclareMathOperator{\ex}{\mathbb{E}}
\DeclareMathOperator{\Ima}{\mathsf{\debug{Im}}}
\DeclareMathOperator{\Id}{\mathsf{\debug{Id}}}
\DeclareMathOperator{\Ker}{\mathsf{\debug{Ker}}}
\DeclareMathOperator{\prob}{\mathbb{P}}
\DeclareMathOperator{\supp}{supp}
\DeclarePairedDelimiter{\braces}{\{}{\}}
\DeclarePairedDelimiter{\bracks}{[}{]}
\DeclarePairedDelimiter{\parens}{(}{)}
\DeclarePairedDelimiter{\abs}{\lvert}{\rvert}
\DeclarePairedDelimiter{\norm}{\lVert}{\rVert}
\DeclarePairedDelimiterX{\braket}[2]{\langle}{\rangle}{#1\mathopen{}\hspace{1pt}\delimsize\vert\hspace{1pt}\mathopen{}#2}
\DeclarePairedDelimiterX{\inner}[2]{\langle}{\rangle}{#1,#2}
\DeclarePairedDelimiterX{\setdef}[2]{\{}{\}}{#1:#2}
\DeclarePairedDelimiterXPP{\probof}[1]{\prob}{(}{)}{}{%

#1}
\DeclarePairedDelimiterXPP{\exof}[1]{\ex}{[}{]}{}{%

#1}
\theoremstyle{plain}
\newtheorem{theorem}{Theorem}
\newtheorem{corollary}[theorem]{Corollary}
\newtheorem*{corollary*}{Corollary}
\newtheorem{lemma}[theorem]{Lemma}
\newtheorem{proposition}[theorem]{Proposition}
\theoremstyle{definition}
\newtheorem{definition}[theorem]{Definition}
\newtheorem*{definition*}{Definition}
\newtheorem*{hypothesis*}{Hypothesis}
\theoremstyle{remark}
\newtheorem{remark}[theorem]{Remark}
\newtheorem*{remark*}{Remark}
\newtheorem*{notation*}{Notational remark}
\newtheorem{example}[theorem]{Example}
\numberwithin{theorem}{section}
\newacro{GT}[GT]{game transformation}
\newacro{NC}[NC]{Nash-consistent}
\newacro{NP}[NP]{Nash-preserving}
\newacro{3DM}[$3$-DM]{$3$-decomposition map}
\newacro{NE}{Nash equilibrium}
\newacro{BNE}{Bayesian Nash equilibrium}
\newacro{PNE}{pure Nash equilibrium}
\newacro{MNE}{mixed Nash equilibrium}
\begin{document}

\title
[Decomposition of games: some strategic considerations] 
{Decomposition of games: some strategic considerations}

\author
[J.~Abdou]
{Joseph Abdou}
\address{Centre d'\'Economie de la Sorbonne, Universit\'e Paris 1, Panth\'eon-Sorbonne, 106-112 boulevard de l'H\^opital, 75647 Paris Cedex 13, France; email: abdou@univ-paris1.fr.}

\author
[N.~Pnevmatikos]
{Nikolaos Pnevmatikos}
\address{Lemma, Universit\'e Paris 2, Panth\'eon-Assas, 4 Rue Desgoffe, 75006, Paris, France; email: nikolaos.pnevmatikos@u-paris2.fr.}

\author
[M.~Scarsini]
{Marco Scarsini}
\address{Dipartimento di Economia e Finanza, LUISS, Viale Romania 32, 00197 Rome, Italy, email: marco.scarsini@luiss.it.}

\author
[X.~Venel]
{Xavier Venel}
\address{Ecole d'\'Economie de Paris, Universit\'e Paris 1, Panth\'eon-Sorbonne, 106-112 boulevard de l'H\^opital, 75647 Paris Cedex 13, France; email: xavier.venel@univ-paris1.fr}

\thanks{
Nikos Pnevmatikos's research was supported by Labex MME-DII. Part of this research was carried out when he was visiting the Engineering Systems and Design pillar at Singapore University of Technology and Design.
Marco Scarsini is a member of GNAMPA-INdAM.
His work has been partly supported by the INdAM- GNAMPA Project 2019 ``Markov chains and games on networks,'' and by the Italian MIUR PRIN 2017 Project ALGADIMAR ``Algorithms, Games, and Digital Markets.''
Xavier Venel acknowledges the support of the Agence Nationale de la Recherche [ANR CIGNE ANR-15-CE38-0007-01].
This work was partially supported by GAMENET  COST Action CA 16228.
The authors also want to thank Ozan Candogan, Sung-Ha Hwang and Panayotis Mertikopoulos for valuable comments.}

\date{\today}

\subjclass[2010]{Primary 91A70.
\emph{OR/MS subject classification}. Games/group decisions,
noncooperative.
}

\keywords{%
$\nsgameg$-potential games;
duplicate strategies;
gradient operator; 
projection operator;
decomposition of games;
harmonic games.}

\begin{abstract}
\citet{CanMenOzdPar:MOR2011} provide an orthogonal direct-sum decomposition of finite games into potential, harmonic and nonstrategic components. 
In this paper we study the issue of decomposing games that are strategically equivalent from a game-theoretical point of view, for instance games obtained via transformations such as duplications of strategies or positive affine mappings of of payoffs. 
We show the need to define classes of decompositions to achieve commutativity of game transformations and decompositions.
\end{abstract} 

\maketitle


\section{Introduction}

Potential games are an interesting class of games that admit pure Nash equilibria and behave well with respect to the most common learning procedures. 
Some games, although they are not potential games, are close---in a sense to be made precise---to a potential game. It is therefore interesting to examine whether their equilibria are close to the equilibria of the potential game, (see \citet{CanOzdPar:GEB2013}). 
With this in mind, in their seminal paper \citet{CanMenOzdPar:MOR2011} were able to show that the class of strategic-form games having a fixed set of players and a fixed set of strategies for each player is a linear space that can be decomposed into the orthogonal sum of three components, called  the \emph{potential}, \emph{harmonic} and \emph{nonstrategic} component. 
Games  in the harmonic component have a completely mixed equilibrium where all players mix uniformly over their strategies; games in the nonstrategic component are such that the payoff of each player is not affected by her own strategy, but only by other players' strategies.
To achieve this decomposition the authors associate to each game a graph where vertices are strategy profiles and edges connect profiles that differ only for the strategy of one player. The analysis is then carried out by studying flows on graphs and using the  Helmholtz decomposition theorem.

The decomposition of  \citet{CanMenOzdPar:MOR2011} refers to games having all the same set of players and the same set of strategies for each player. 
In their construction nothing connects the decomposition of a specific game $\game$ with the decomposition of another game $\widecheck{\game}$ that is obtained from $\game$ by adding a strategy to the set $\actions^{\play}$ of player $\play$'s feasible strategies.
One may argue that this is reasonable, since the two games live in linear spaces of different dimension and  the new game with an extra strategy may have equilibria that are very different from the ones in the original game, so, in general,  the two games may have very little in common. 
In some situations, though, the two games are indeed strongly related. 
For instance, consider the case where the payoffs corresponding to the new strategy  are just a replica of the payoffs of another strategy. 
In this case, from a strategic viewpoint, the two games $\widecheck{\game}$ and $\game$ are actually the same game and every equilibrium in $\widecheck{\game}$ can be mapped to an equilibrium in $\game$.
It would be reasonable to expect the decomposition of $\game$ and $\widecheck{\game}$ to be strongly related. 
Unfortunately this is not the case.
Consider for instance the matching-pennies game $\game$
\begin{center}
\medskip
{\tabulinesep=1.2mm
\begin{tabu}{ r|cc|cc| }
\multicolumn{1}{r}{}
 &  \multicolumn{2}{c}{$\act^{2}$}
 & \multicolumn{2}{c}{$\actalt^{2}$} \\
\cline{2-5}
$\act^{1}$ & $1$ & $-1$ & $-1$ & $1$ \\
\cline{2-5}
$\actalt^{1}$ & $-1$ & $1$ & $1$ & $-1$ \\
\cline{2-5}
\multicolumn{5}{c}{$\game$}
\end{tabu}}
\medskip
\end{center}
and the game $\widecheck{\game}$, obtained by replicating strategy $\actalt^{1}$ of the row-player
\begin{center}
\medskip
{\tabulinesep=1.2mm
\begin{tabu}{ r|cc|cc| }
\multicolumn{1}{r}{}
 &  \multicolumn{2}{c}{$\act^{2}$}
 & \multicolumn{2}{c}{$\actalt^{2}$} \\
\cline{2-5}
$\act^{1}$ & $1$ & $-1$ & $-1$ & $1$ \\
\cline{2-5}
$\actalt_{0}^{1}$ & $-1$ & $1$ & $1$ & $-1$ \\
\cline{2-5}
$\actalt_{1}^{1}$ & $-1$ & $1$ & $1$ & $-1$ \\
\cline{2-5}
\multicolumn{5}{c}{$\widecheck{\game}$}
\end{tabu}} .
\medskip
\end{center}

The matching-pennies game $\game$  admits a unique Nash equilibrium where each player randomizes uniformly between the two available strategies. 
This game is harmonic, so its decomposition has the potential and nonstrategic component identically equal to zero and the harmonic component $\game_{\Har}$ equal to $\game$.

The game $\widecheck{\game}$ admits a continuum of equilibria where the column player mixes uniformly and the row-player mixes $(1/2,\mixed/2,(1-\mixed)/2)$ with $\mixed\in[0,1]$. 
Notice that in each of these mixed equilibria the mixed strategy of the row-player assigns probability $1/2$ to $\actalt_{0}^{1}\cup \actalt_{1}^{1}$. The decomposition result of \citet[Theorem~4.1]{CanMenOzdPar:MOR2011} applied to $\widecheck{\game}$ yields

\begin{center}
\medskip
{\tabulinesep=1.2mm
\begin{tabu}{ r|cc|cc| }
\multicolumn{1}{r}{}
 &  \multicolumn{2}{c}{$\act^{2}$}
 & \multicolumn{2}{c}{$\actalt^{2}$} \\
\cline{2-5}
$\act^{1}$ & $4/15$ & $3/5$ & $-4/15$ & $-3/5$ \\
\cline{2-5}
$\actalt_{0}^{1}$ & $-2/15$ & $2/10$ & $2/15$ & $-2/10$ \\
\cline{2-5}
$\actalt_{1}^{1}$ & $-2/15$ & $2/10$ & $2/15$ & $-2/10$ \\
\cline{2-5}
\multicolumn{5}{c}{$(\widecheck{\game})_{\Pot}$}
\end{tabu}}
\qquad
{\tabulinesep=1.2mm\begin{tabu}{ r|cc|cc| }
\multicolumn{1}{r}{}
 &  \multicolumn{2}{c}{$\act^{2}$}
 & \multicolumn{2}{c}{$\actalt^{2}$} \\
\cline{2-5}
$\act^{1}$ & $16/15$ & $-8/5$ & $-16/15$ & $8/5$ \\
\cline{2-5}
$\actalt_{0}^{1}$ & $-8/15$ & $4/5$ & $8/15$ & $-4/5$ \\
\cline{2-5}
$\actalt_{1}^{1}$ & $-8/15$ & $4/5$ & $8/15$ & $-4/5$ \\
\cline{2-5}
\multicolumn{5}{c}{$(\widecheck{\game})_{\Har}$}
\end{tabu}}
\medskip
\end{center}

\begin{center}
{\tabulinesep=1.2mm
\begin{tabu}{ r|cc|cc| }
\multicolumn{1}{r}{}
&  \multicolumn{2}{c}{$\act^{2}$}
 & \multicolumn{2}{c}{$\actalt^{2}$} \\
\cline{2-5}
$\act^{1}$ & $-1/3$ & $0$ & $1/3$ & $0$ \\
\cline{2-5}
$\actalt_{0}^{1}$ & $-1/3$ & $0$ & $1/3$ & $0$ \\
\cline{2-5}
$\actalt_{1}^{1}$ & $-1/3$ & $0$ & $1/3$ & $0$ \\
\cline{2-5}
\multicolumn{5}{c}{$(\widecheck{\game})_{\NS}$}
\end{tabu}} ,
\medskip
\end{center}
where $(\widecheck{\game})_{\Pot}$, $(\widecheck{\game})_{\Har}$, and $(\widecheck{\game})_{\NS}$ are the potential, harmonic, and nonstrategic components of $\widecheck{\game}$, respectively.

We see that, although the two games $\game$ and $\widecheck{\game}$ are strategically equivalent in the sense described before, their decompositions are quite different.  In the sequel, when considering a game $\game$ with duplicate strategies, we will call the game where duplication of strategies has been eliminated the \emph{reduced version} of $\game$ and we will use the notation $\reduc{\game}$ for it.

A similar problem appears in games whose payoffs are suitable affine game transformation of some other game's payoffs.
For instance, given a game $\game$,  consider the game $\widetilde{\game}$ which is obtained by multiplying the payoffs of each player in $\game$ by the same positive constant. Multiplying the payoff of a player by a positive constant is innocuous with respect to strategic considerations. 
To illustrate this, if $\game$ is  the matching-pennies game,  consider the  game $\widetilde{\game}$ where the payoffs of the row-player in $\game$ have been multiplied by $2$.  

\begin{center}
\medskip
{\tabulinesep=1.2mm
\begin{tabu}{ r|cc|cc| }
\multicolumn{1}{r}{}
 &  \multicolumn{2}{c}{$\act^{2}$}
 & \multicolumn{2}{c}{$\actalt^{2}$} \\
\cline{2-5}
$\act^{1}$ & $2$ & $-1$ & $-2$ & $1$ \\
\cline{2-5}
$\actalt^{1}$ & $-2$ & $1$ & $2$ & $-1$ \\
\cline{2-5}
\multicolumn{5}{c}{$\widetilde{\game}$}
\end{tabu}} .
\medskip
\end{center}

This game admits a unique equilibrium where each player plays uniformly over her set of strategies. The decomposition result of \citet{CanMenOzdPar:MOR2011} applied in $\widetilde{\game}$ yields the following decomposition:

\begin{center}
\medskip
{\tabulinesep=1.2mm
\begin{tabu}{ r|cc|cc| }
\multicolumn{1}{r}{}
 &  \multicolumn{2}{c}{$\act^{2}$}
 & \multicolumn{2}{c}{$\actalt^{2}$} \\
\cline{2-5}
$\act^{1}$ & $1/2$ & $1/2$ & $-1/2$ & $-1/2$ \\
\cline{2-5}
$\actalt^{1}$ & $-1/2$ & $-1/2$ & $1/2$ & $1/2$ \\
\cline{2-5}
\multicolumn{5}{c}{$(\widetilde{\game})_{\Pot}$}
\end{tabu}
\qquad
\begin{tabu}{ r|cc|cc| }
\multicolumn{1}{r}{}
 &  \multicolumn{2}{c}{$\act^{2}$}
 & \multicolumn{2}{c}{$\actalt^{2}$} \\
\cline{2-5}
$\act^{1}$ & $3/2$ & $-3/2$ & $-3/2$ & $3/2$ \\
\cline{2-5}
$\actalt^{1}$ & $-3/2$ & $3/2$ & $3/2$ & $-3/2$ \\
\cline{2-5}
\multicolumn{5}{c}{$(\widetilde{\game})_{\Har}$}
\end{tabu}} .
\medskip
\end{center}
Although the games $\game$ and $\widetilde{\game}$ share the same Nash equilibrium set, they admit different decompositions. 
Notice that $\game$ and $(\widetilde{\game})_{\Har}$ are both harmonic games and admit the same unique mixed equilibrium, since they are related by an affine game transformation that does not affect the harmonic property: one is obtained from the other by multiplying all payoffs by the same positive constant.

In the sequel, we will be  interested in affine game transformations where payoffs are multiplied by a positive constant that depends not only on the players but also on their strategies. 
We will refer to this kind of game transformations as \emph{scalings}.  
We illustrate this with an example. 
Let $\game$ be the matching-pennies game and let $\overline{\game}$ be the game where the payoffs of the row-player in $\game$ have been multiplied by $2$ only in the first column. 

\begin{center}
\medskip
{\tabulinesep=1.2mm
\begin{tabu}{ r|cc|cc| }
\multicolumn{1}{r}{}
 &  \multicolumn{2}{c}{$\act^{2}$}
 & \multicolumn{2}{c}{$\actalt^{2}$} \\
\cline{2-5}
$\act^{1}$ & $2$ & $-1$ & $-1$ & $1$ \\
\cline{2-5}
$\actalt^{1}$ & $-2$ & $1$ & $1$ & $-1$ \\
\cline{2-5}
\multicolumn{5}{c}{$\overline{\game}$}
\end{tabu}} .
\medskip
\end{center}
The game $\overline{\game}$ is the image  of $\game$  by $\mapT$, the transformation that operates on two-person $2$-by-$2$ matrix games and multiplies the payoff of the row-player on the first column by $2$. This scaling is interesting since it transforms the Nash equilibria of any game in a way that is simple to describe. 

Any Nash equilibrium of any  $2$-by$2$  game $\gamealt$ is transformed into a Nash equilibrium of $\mapT(\gamealt)$ as follows: the strategy of the row player stays the same whereas the weight on the  strategy $\act^{2}$ of the column player is divided by two (and then the weights are normalized). 
For example, $\overline{\game}$ admits a unique equilibrium where the row-player plays uniformly over her set of strategies, whereas the column-player plays  $\act^{2}$ with probability $1/3$ and  $\actalt^{2}$ with probability $2/3$. 

The decomposition of \citet{CanMenOzdPar:MOR2011} applied to $\overline{\game}$ yields the following:

\begin{center}
\medskip
{\tabulinesep=1.2mm
\begin{tabu}{ r|cc|cc| }
\multicolumn{1}{r}{}
 &  \multicolumn{2}{c}{$\act^{2}$}
 & \multicolumn{2}{c}{$\actalt^{2}$} \\
\cline{2-5}
$\act^{1}$ & $3/4$ & $1/4$ & $1/4$ & $-1/4$ \\
\cline{2-5}
$\actalt^{1}$ & $-3/4$ & $-1/4$ & $-1/4$ & $1/4$ \\
\cline{2-5}
\multicolumn{5}{c}{$(\overline{\game})_{\Pot}$}
\end{tabu}
\qquad
\begin{tabu}{ r|cc|cc| }
\multicolumn{1}{r}{}
 &  \multicolumn{2}{c}{$\act^{2}$}
 & \multicolumn{2}{c}{$\actalt^{2}$} \\
\cline{2-5}
$\act^{1}$ & $5/4$ & $-5/4$ & $-5/4$ & $5/4$ \\
\cline{2-5}
$\actalt^{1}$ & $-5/4$ & $5/4$ & $5/4$ & $-5/4$ \\
\cline{2-5}
\multicolumn{5}{c}{$(\overline{\game})_{\Har}$}
\end{tabu}} .
\medskip
\end{center}

Hence, although the games $\game$ and $\overline{\game}$ are related by a simple transformation, their   decompositions are quite different.

The question that we want to address in this paper is the following: 
Consider a game $\game$ where the set of players is $\players$ and apply a transformation to it---such as duplication of a strategy or scaling of payoffs---and then decompose the transformed game;
now take the same game $\game$, first decompose it, and then apply a suitable transformation to the components of the decomposition.
Under which conditions do the two procedures produce the same result?
We saw that if we na\"{\i}vely use the same decomposition in the two cases, then we are doomed to fail.
Therefore, we need to consider a whole class of suitably parametrized decompositions and a class of game transformations that achieve the result.
When this occurs, we say that the class of transformations and the class of decompositions commute.
To achieve our goal, we adapt the approach of \citet{CanMenOzdPar:MOR2011} by introducing two parameters $\muprof$ and $\nsgameg$ in our decompositions.

The duplication example shows the need to introduce for every player $\play\in\players$ some weights on her own strategies. 
Indeed, after duplication, the weight on one strategy has to be split between the two strategies that duplicate it. 
Since the duplication of a strategy of some player affects only her own mixed strategy, for every player~$\play$, we introduce a measure $\mumeas^{\play}$ on her own strategies. 
We then obtain a vector $\muprof=\parens*{\mumeas^{\play}}_{\play\in\players}$.

The scaling example, on the other hand, shows that  changing the payoff of one player has an impact on the equilibrium strategy of the others players. 
Hence, for a player $\play$, we introduce a positive function $\gammameas^{\play}$, which we call \emph{co-measure} on the strategies of the other players.
Aggregating these functions into one parameter, we obtain the vector $\nsgameg=\parens{\gammameas^{\play}}_{\play\in\players}$.  

An equivalent way to describe what we do would be to define for every player $\play$, a  measure $\thetameas^\play$ over the set $\actions$ of strategy profiles such that for every action profile $\actprof\in\actions$, we have
\[
\thetameas^{\play}(\actprof)=\mumeas^{\play}(\act^{\play})\gammameas^{\play}(\actprof^{-\play}).
\]
Hence, each player has a measure on the  space $\actions$. 

We  then use the pair $(\muprof,\nsgameg)$ to parametrize a class of inner products that are in turn employed to define the decompositions and we generalize the decomposition result of \citet{CanMenOzdPar:MOR2011} using several metrics on the space of games induced by the pair $(\muprof,\nsgameg)$.

Then, given a decomposition of the game $\game$ based on the pair $(\muprof,\nsgameg)$, we show that there exists some new pair $(\widecheck{\muprof},\widecheck{\nsgameg})$ where $\widecheck{\muprof}$ depends only on $\muprof$ and  $\widecheck{\nsgameg}$  depends only on $\nsgameg$, such that replications of strategies and  decompositions commute.
The approach for scalings and other transformations is similar.


\subsection{Related literature}\label{suse:related}

In the context of noncooperative game theory, several approaches have been proposed to decompose a game into simpler games that admit more flexible and attractive equilibrium analysis. 
\citet{San:GEB2010} proposes a method to decompose $\nplayers$-player normal-form games into $2^{\nplayers}$ simultaneously-played component games. As a by-product, this decomposition provides a characterization of normal-form potential games. 
\citet{KalKal:QJE2013} introduce a novel solution concept founded on a decomposition of two-player games into \emph{zero-sum} and \emph{common-interest} games. 
This decomposition result is based on the fact that all matrices can be decomposed into the sum of symmetric and antisymmetric matrices. 
\citet{SzaBodSam:PRE2017}, in order to study evolutionary dynamic games, refine this dyadic decomposition by further decomposing the antisymmetric component. 
In a different direction, \citet{JesSaa:UCI2013} present a \emph{strategic}-\emph{behavioral} decomposition of games with two strategies per player and highlight that certain solution concepts are determined by a game's strategic part or influenced by the behavioral portion. 
Their analysis has been expanded in \citet{JesSaa:Springer2019}.
More recently, \citet{HwaRey:arxiv2016} study the space of games as a Hilbert space and prove several decomposition theorems for arbitrary games identifying components such as potential games and games that are strategically equivalent---in the sense of sharing the same Nash equilibria---to zero-sum games. 
They further extend their results to games with uncountable strategy sets. 
Using this decomposition, they also provide an alternative proof for the well-known characterization of potential games presented by \citet{MonSha:GEB1996}. 

The paper of \citet{CanMenOzdPar:MOR2011} on the decomposition of finite games into potential, harmonic, and nonstrategic components is a milestone in the field. 
To achieve their decomposition, they represent an arbitrary game with an undirected graph where nodes stand for  strategy profiles and edges connect nodes that differ in the strategy of only one player. 
In our paper, we follow the same graph representation for a given game. 
\citet{Liu:ORL2018} uses a different graph representation for a finite game where nodes stand for players and edges connect players whose change of strategies influences the other player's payoff. 
Given a finite game, the author investigates necessary and sufficient conditions for the existence of a pure Nash equilibrium in terms of the structure of its associated directed graph. 
The decomposition in \citet{CanMenOzdPar:MOR2011} is based on the Helmholtz decomposition theorem---a fundamental tool in vector calculus---which states that any vector field can be decomposed into a \emph{divergence-free} and a \emph{curl-free} components.\footnote{A generalization of the Helmholtz theorem is known in the literature as the Hodge decomposition theorem, which is defined for differentiable forms on Riemannian manifolds.}  
Due to the ubiquitous nature of vector fields, this theorem has been applied by various research communities to a wide range of issues. 
In the context of discrete vector fields, \citet{JanLimYaoYe:MP2011} provide an implementation of the Helmholtz decomposition  in statistical ranking. 
\citet{SteTet:arXiv2017} apply the Helmholtz decomposition to cooperative games and obtain a novel characterization of the Shapley value in terms of the decomposition's components. 
Various papers related to \citet{CanMenOzdPar:MOR2011} have  appeared in the literature. 
For instance, \citet{LiuQiChe:CCC2015,LiLiuHeCheQiHon:IEEE2016} focus on the detailed description of the decomposition subspaces by providing some geometric and algebraic expressions and present an explicit formula for the decomposition. 
More recently, \citet{Zha:CCC2017} provides explicit polynomial expressions for the orthogonal projections onto the subspaces of potential and harmonic games, respectively.
\citet{LiCheHe:arXiv2019} study the compatibility of decompositions that are based on different inner products.

In the terminology of \citet{GovWil:PNAS2005}, two strategies of one player are equivalent if they yield every player the same expected payoff for each profile of other players' strategies. 
A pure strategy of player $\play$ is \emph{redundant} if player $\play$ has another equivalent strategy. 
In our paper, we study the behavior of the proposed decomposition with respect to redundant strategies and to suitable game transformations of payoff vectors that do not alter the strategic structure of the game. The issue of redundant strategies has been dealt with by \citet{GovWil:GEB1997,GovWil:PNAS2005,GovWil:Palgrave2008,GovWil:E2009} in the framework of equilibrium refinement. 
In particular, the authors show that the degree of a Nash component is invariant under addition or deletion of redundant strategies. 
As shown, for instance, by \citet{OsbRub:AER1998}, some solution concepts are not invariant with respect to addition of redundant strategies. 
In the framework of decomposition of games, \citet{KalKal:QJE2013} show that their decomposition is invariant to redundant strategies. \citet{CheLiuZhaQi:IEEE2016} provide a generalization of the decomposition of \citet{CanMenOzdPar:MOR2011} in terms of \emph{weighted potential} and \emph{weighted harmonic games}. 
This work is close to ours but still quite different. 
Precisely, \citet{CanMenOzdPar:MOR2011} assume that the weight of each player is equal to the number of her strategies while \citet{CheLiuZhaQi:IEEE2016} relax this hypothesis by considering any possible weight. 
Their approach is consistent with simple scalings like multiplication of the payoffs of some player by a constant that depends only on the other players, but not with more general scalings. 
Moreover, weighted harmonic games still admit the uniformly strategy profile as equilibrium and therefore their class is not robust to elimination of duplicate strategies. 
Constant-scalings were also implicitly used in \citet{CanOzdPar:CDC2010} when computing the closest weighted potential game.


\subsection{Structure of the paper} 
In \cref{se:decomposition}, we introduce our decomposition results for games. 
In \cref{se:compati}, we provide a general framework of interesting game transformations and study the commutation of our decompositions and some specific classes of game transformations. 
All proofs can be found in  \cref{app:proofs}.


\section{Games transformations and decompositions}
\label{se:decomposition}

\subsection{Notation and definitions.}
\label{suse:notation}

We introduce some notation that will be used throughout the paper. 
Given a finite set $\setA$,  its cardinality is denoted by $\abs{\setA}$. 
Moreover, $\simplex(\setA)$ is the set of probability distributions over $\setA$, seen as a subset of $\R^{\abs{\setA}}$, and $\measures(\setA)$ the set of measures over $\setA$. 
A measure is said to be \emph{strictly positive} if every nonempty set has strictly positive measure. 
The set of strictly positive measures over $\setA$ is denoted by $\measures_{+}(\setA)$. 
The identity function is denoted by $\Id$.
In the whole paper, if $\numeas\in\measures(\setA)$ and $y\in\setA$, then, for the sake of simplicity, we write $\numeas(y)$, instead of $\numeas(\braces{y})$.  

Let $\nplayers\geq 2$. 
A \emph{finite game} $\parens{\players,\parens{\actions^{\play},\payoff^{\play}}_{\play\in\players}}$ consists of a finite set of players, denoted by $\players=\braces{1,\dots,\nplayers}$, and, for each player $\play\in\players$, a finite set of strategies $\actions^{\play}$ such that $\abs{\actions^{\play}}\geq 2$ and a payoff function $\payoff^{\play}:\actions \to \R$, where $\actions\coloneqq\bigtimes_{\play\in\players}\actions^{\play}$ is the space of strategy profiles $\actprof$. 
A game is called \emph{constant} if for every $\play \in \players$, $\game^{\play}$ is constant over $\actions$.
The symbol $\actions^{-\play}$ denotes the set of strategy subprofiles that excludes player $\play$.

For fixed $\players$ and $\actions$, the game $\parens{\players,\parens{\actions^{\play},\payoff^{\play}}_{\play\in\players}}$ is uniquely defined by its vector of payoff functions $\game$, which, with a slight abuse of language, we will call a \emph{game}.

The space $\games_{\actions}$ of games with player set $\players$ and strategy-profile set $\actions$ can be identified with $\R^{\nplayers\abs{\actions}}$.
As a consequence, $\dim\games_{\actions}=\nplayers\prod_{\play\in\players}\abs{\actions^{\play}}$.

For $\play\in\players$, let $\numeas^{\play}$ be a strictly positive measure  on $\actions^{\play}$. 
The product measure $\numeas$ on $\actions$ is defined as
\begin{equation}
\label{eq:prod-nu}
\numeas(\actprof) \coloneqq \prod_{\play\in\players} \numeas^{\play}(\act^{\play}).
\end{equation}
For every $\play \in \players$,  $\numeas^{-\play}$ is a strictly positive product measure on $\actions^{-\play}$ defined as
\begin{equation}
\label{eq:prod-nu-i}
\numeas^{-\play}(\actprof^{-\play}) \coloneqq \prod_{\playalt\neq\play} \numeas^{\playalt}(\act^{\playalt}).
\end{equation}
Given any finite measure $\numeas^{\play}$ on $\actions^{\play}$, its normalization is defined as 
\begin{equation}
\label{eq:norm-nu}
\norml{\numeas}^{\play}(\act^{\play}) \coloneqq \frac{\numeas^{\play}(\act^{\play})}{\sum_{\actalt^{\play}\in\actions^{\play}}\numeas^{\play}(\actalt^{\play})}.
\end{equation}

\begin{definition}
\label{de:co-measure}
For $\play\in\players$, a strictly positive finite measure $\betameas^{\play}\in\measures_{+}(\actions^{-\play})$ is called a \emph{co-measure} of player $\play$.
The vector $\nsgameb=\parens*{\betameas_{1},\dots,\betameas_{\players}}$ is called a \emph{co-measure vector}.
\end{definition}

Later in this section we will generalize the decomposition proposed by \citet{CanMenOzdPar:MOR2011}.
To this end, we introduce the  classes of games that will appear in our decomposition. 
Since in this section the set of strategy profiles is  fixed, the set of games is denoted just by $\games$.

\begin{definition}
For every player $\play\in\players$, let  $\mumeas^{\play}$ be a strictly positive finite measure on $\actions^{\play}$, let $\gammameas^{\play}$ be a  co-measure on $\actions^{\play}$,  and let $\muprof=\parens*{\mumeas^{\play}}_{\play\in\players}$ and $\nsgameg=\parens*{\gammameas^{\play}}_{\play\in\players}$.
\label{de:classes-of-games}
\begin{enumerate}[(a)]
\item
\label{it:classes-nonstrategic}
A game $\game\in\games$ is \emph{nonstrategic} if, for each $\play\in\players$, there exists a function $\nonstratf \colon \actions^{-\play}\to\R$ such that
\begin{equation}
\label{eq:nonstrategic}
\payoff^{\play}(\act^{\play},\actprof^{-\play})=\nonstratf^{\play}(\actprof^{-\play}).
\end{equation}
The class of nonstrategic games is denoted by $\NSG$.

\item
\label{it:classes-mu-normalized}
A game $\game\in\games$ is \emph{$\muprof$-normalized} if, for each $\play\in\players$ and for each $\actprof^{-\play}\in\actions^{-\play}$, we have
\begin{equation}
\label{eq:mu-normalized}
\sum_{\act^{\play}\in\actions^{\play}}\mumeas^{\play}(\act^{\play})\payoff(\act^{\play},\actprof^{-\play})=0.
\end{equation}
The class of $\muprof$-normalized games is denoted by $\muprof\NoGd$.

\item
\label{it:classes-eta-potential}
A game $\game\in\games$ is \emph{$\nsgameg$-potential} if there exist $\potent \colon \actions\to\R$ such that, for all $\play\in\players$, for all $\act^{\play},\actalt^{\play}\in\actions^{\play}$, and for all $\actprof^{-\play}\in\actions^{-\play}$, we have
\begin{equation}
\label{eq:eta-potential}
\gammameas^{\play}(\actprof^{-\play})\parens*{\payoff^{\play}(\actalt^{\play},\actprof^{-\play})-\payoff^{\play}(\act^{\play},\actprof^{-\play})}
=\potent(\actalt^{\play},\actprof^{-\play})-\potent(\act^{\play},\actprof^{-\play}).
\end{equation}
The function $\potent$ is referred to as a \emph{potential function} of the game.
The class of $\nsgameg$-potential games is denoted by $\nsgameg\PGd$.

\item
\label{it:classes-mu-eta-harmonic}
A game $\game\in\games$ is \emph{$(\muprof,\nsgameg)$-harmonic} if, for all $\actprof\in\actions$, we have
\begin{equation}
\label{eq:mu-eta-harmonic}
\sum_{\play\in\players} \sum_{\actalt^{\play}\in\actions^{\play}} \mumeas^{\play}(\actalt^{\play})\gammameas^{\play}(\actprof^{-\play})\parens*{\payoff^{\play}(\act^{\play},\actprof^{-\play})-\payoff^{\play}(\actalt^{\play},\actprof^{-\play})}
=0
\end{equation}
The class of $(\muprof,\nsgameg)$-harmonic games is denoted by $(\muprof,\nsgameg)\HGd$.
\end{enumerate}
\end{definition}


\subsection{Potential and harmonic games}
\label{suse:potential-harmonic}
In the terminology of \citet{MonSha:GEB1996}, any weighted potential game is $\nsgameg$-potential and any $\nsgameg$-potential game is ordinal potential. 
An immediate consequence of this is the following proposition.

\begin{proposition}
\label{pr:potential-pure}
A game $\game\in\nsgameg\PGd$  admits a pure equilibrium.
\end{proposition}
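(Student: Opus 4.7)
The plan is to argue via the standard Monderer--Shapley maximizer argument, exploiting the fact that the positivity of the co-measure $\gammameas^{\play}$ allows us to treat $\potent$ as an ordinal potential.

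First, I would observe that, since each $\gammameas^{\play}$ is a co-measure in the sense of \cref{de:co-measure}, we have $\gammameas^{\play}(\actprof^{-\play}) > 0$ for every $\play\in\players$ and every $\actprof^{-\play}\in\actions^{-\play}$. Rearranging \cref{eq:eta-potential} therefore gives
\[
\payoff^{\play}(\actalt^{\play},\actprof^{-\play})-\payoff^{\play}(\act^{\play},\actprof^{-\play})
= \frac{1}{\gammameas^{\play}(\actprof^{-\play})}\bracks*{\potent(\actalt^{\play},\actprof^{-\play})-\potent(\act^{\play},\actprof^{-\play})},
\]
so that the sign of the unilateral payoff change for player $\play$ matches the sign of the corresponding change in $\potent$. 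In particular, $\game$ is an ordinal potential game in the terminology of \citet{MonSha:GEB1996}.

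Next, because $\actions$ is a finite set, the function $\potent$ attains its maximum on $\actions$. Fix any maximizer $\actprof^{\ast}\in\argmax_{\actprof\in\actions}\potent(\actprof)$. I would then verify that $\actprof^{\ast}$ is a pure Nash equilibrium: for any $\play\in\players$ and any $\actalt^{\play}\in\actions^{\play}$, the maximality of $\actprof^{\ast}$ yields $\potent(\actalt^{\play},\actprof^{\ast,-\play})-\potent(\act^{\ast,\play},\actprof^{\ast,-\play})\le 0$, and dividing by $\gammameas^{\play}(\actprof^{\ast,-\play})>0$ and using the displayed identity above gives $\payoff^{\play}(\actalt^{\play},\actprof^{\ast,-\play})\le \payoff^{\play}(\act^{\ast,\play},\actprof^{\ast,-\play})$, as required.

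There is no genuine obstacle here; the only thing to be careful about is the use of strict positivity of $\gammameas^{\play}$, which is what allows the ordinal equivalence between $\potent$ and each player's payoff and hence reduces the statement to the classical finite ordinal-potential argument.
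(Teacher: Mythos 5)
Your proof is correct and follows essentially the same route as the paper: the paper simply notes that strict positivity of each $\gammameas^{\play}$ makes any $\nsgameg$-potential game an ordinal potential game in the sense of \citet{MonSha:GEB1996} and declares the result an immediate consequence, which is exactly the sign-preservation observation you make before running the standard finite-maximizer argument on $\potent$. You have merely written out explicitly the step the paper leaves implicit.
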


Our definition of  $(\muprof,\nsgameg)$-harmonic game generalizes the definition of harmonic game given in \citet{CanMenOzdPar:MOR2011}.
Basically, a harmonic game satisfies \cref{eq:mu-eta-harmonic} when $(\muprof,\nsgameg)$ is such that $\mumeas^{\play}(\act^{\play})=1$ for all $\play \in \players$ and all $\act^{\play} \in \actions^{\play}$  and $\gammameas^{\play}(\actprof^{-\play})=1$ for all $\actprof^{-\play}\in \actions^{-\play}$.

\begin{definition}
\label{de:scaling}
Let  $\nsgameb=\parens*{\betameas^{\play}}_{\play\in\play}$ be a  co-measure vector. The \emph{general scaling of parameter $\nsgameb$} is defined by
\begin{equation}\label{eq:gamma-scaling}
\forall \game \in \games, \ \forall \play\in\players,\ \forall \actprof\in\actions,\ (\nsgameb \cdot\game)^{\play}(\actprof)=\varscal^{\play}(\actprof^{-\play})\payoff^{\play}(\actprof).
\end{equation}
We call $(\nsgameb\cdot\game)$ the \emph{$\nsgameb$-scaled game}.
\end{definition}

\begin{theorem} 
\label{th:mu-eta-mixed-equilibrium}
Let $\game$ be a $(\muprof,\nsgameg)$-harmonic game. Then, the completely mixed strategy profile $\norml{\mumeas^{\play}}$ is a Nash equilibrium of the scaled game $(\nsgameg \cdot \game)$, i.e., for all $\play\in\players$ and for all $\actr^{\play},\actalt^{\play}\in\actions^{\play}$, we have
\begin{equation}
\label{eq:mu-compl-mixed}
\sum_{\actprof^{-\play}\in\actions^{-\play}} \norml{\mumeas}^{-\play}(\actprof^{-\play})\gammameas^{\play}(\actprof^{-\play})\payoff^{\play}(\actr^{\play},\actprof^{-\play})   
=
\sum_{\actprof^{-\play}\in\actions^{-\play}} \norml{\mumeas}^{-\play}(\actprof^{-\play})\gammameas^{\play}(\actprof^{-\play})\payoff^{\play}(\actalt^{\play},\actprof^{-\play}).
\end{equation}
\end{theorem}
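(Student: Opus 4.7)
The plan is to start from the harmonic identity \eqref{eq:mu-eta-harmonic} at a profile $\actprof=(\act^{\play},\actprof^{-\play})$ with $\play$ fixed, and to average both sides over $\actprof^{-\play}$ against the product weight $\norml{\mumeas}^{-\play}(\actprof^{-\play})$. Introducing
\[
F^{\play}(\act^{\play})\coloneqq \sum_{\actprof^{-\play}\in\actions^{-\play}} \norml{\mumeas}^{-\play}(\actprof^{-\play})\,\gammameas^{\play}(\actprof^{-\play})\,\payoff^{\play}(\act^{\play},\actprof^{-\play}),
\]
the desired identity \eqref{eq:mu-compl-mixed} is exactly the statement that $F^{\play}$ is constant in $\act^{\play}$. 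So the goal after averaging is to kill all contributions with index $\playalt\neq\play$, leaving only a clean equation for $F^{\play}$.

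First I would split the averaged identity into its $\playalt=\play$ and $\playalt\neq\play$ pieces. For each $\playalt\neq\play$, writing $\actprof^{-\play}=(\act^{\playalt},\actprof^{-\{\play,\playalt\}})$ one factors $\norml{\mumeas}^{-\play}(\actprof^{-\play})=\norml{\mumeas}^{\playalt}(\act^{\playalt})\,\norml{\mumeas}^{-\{\play,\playalt\}}(\actprof^{-\{\play,\playalt\}})$. The key observation is that $\gammameas^{\playalt}(\actprof^{-\playalt})=\gammameas^{\playalt}(\act^{\play},\actprof^{-\{\play,\playalt\}})$ does not depend on $\act^{\playalt}$ or on $\actalt^{\playalt}$, so after pulling it out the inner double sum reduces to
\[
\sum_{\act^{\playalt}\in\actions^{\playalt}}\sum_{\actalt^{\playalt}\in\actions^{\playalt}}\norml{\mumeas}^{\playalt}(\act^{\playalt})\,\mumeas^{\playalt}(\actalt^{\playalt})\bigl[\payoff^{\playalt}(\act^{\playalt},\cdot)-\payoff^{\playalt}(\actalt^{\playalt},\cdot)\bigr].
\]
Both halves factor into a product of one-variable sums over $\act^{\playalt}$ and $\actalt^{\playalt}$, and with $Z^{\playalt}\coloneqq\sum_{\actr^{\playalt}}\mumeas^{\playalt}(\actr^{\playalt})$ and $\norml{\mumeas}^{\playalt}=\mumeas^{\playalt}/Z^{\playalt}$, a routine calculation shows the two halves coincide, so every $\playalt\neq\play$ term vanishes.

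What remains is the $\playalt=\play$ contribution, which after pulling $\gammameas^{\play}(\actprof^{-\play})$ into the inner weight and using the definition of $F^{\play}$ reads
\[
\sum_{\actalt^{\play}\in\actions^{\play}}\mumeas^{\play}(\actalt^{\play})\bigl[F^{\play}(\act^{\play})-F^{\play}(\actalt^{\play})\bigr]=0,
\]
so $F^{\play}(\act^{\play})=\tfrac{1}{Z^{\play}}\sum_{\actalt^{\play}}\mumeas^{\play}(\actalt^{\play})F^{\play}(\actalt^{\play})$, a quantity independent of $\act^{\play}$. Evaluating at $\act^{\play}=\actr^{\play}$ and at $\act^{\play}=\actalt^{\play}$ yields \eqref{eq:mu-compl-mixed}. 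The only delicate point is the bookkeeping behind the $\playalt\neq\play$ cancellation; once one recognizes that $\gammameas^{\playalt}(\actprof^{-\playalt})$ is a constant with respect to the summation variables $\act^{\playalt},\actalt^{\playalt}$, the rest of the argument is mechanical.
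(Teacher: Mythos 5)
Your proof is correct, and the core computation---averaging the harmonic identity over the slice $\braces{\act^{\play}}\times\actions^{-\play}$ against the product measure and showing that the contributions of players $\playalt\neq\play$ cancel---is the same one that drives the paper's argument; what differs is the route you take to it. The paper deliberately postpones \cref{th:mu-eta-mixed-equilibrium} until the flow machinery is available and proves it in that language: it starts from $\gradop^{*}\embop(\game)=0$, multiplies by $\mumeas(\actr^{\play},\actprof^{-\play})$ and sums over the set $\subactions=\braces{(\actr^{\play},\actprof^{-\play}):\actprof^{-\play}\in\actions^{-\play}}$, then invokes the antisymmetry of flows (\cref{le:antis}) together with the observation that a profile in $\subactions$ and a profile in $\subactions^{c}$ are never $\playalt$-comparable for $\playalt\neq\play$ in order to kill the cross terms. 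You obtain exactly the same cancellation working directly from \cref{de:classes-of-games}: since $\gammameas^{\playalt}(\actprof^{-\playalt})$ does not involve the $\playalt$-th coordinate, the double sum over $(\act^{\playalt},\actalt^{\playalt})$ factors into products of one-variable sums whose two halves agree after normalization by $Z^{\playalt}=\mumeas^{\playalt}(\actions^{\playalt})$; the surviving $\playalt=\play$ block then says precisely that $F^{\play}$ equals its own $\mumeas^{\play}$-average, hence is constant. Your version is self-contained and more elementary---it needs neither $\embop$, $\gradop^{*}$, $\symmfunc^{\play}$, nor \cref{le:antis}---and it makes transparent that the theorem follows from the definition of $(\muprof,\nsgameg)$-harmonicity alone; what the paper's formulation buys is uniformity with the rest of the appendix, where the same operators also carry the decomposition theorem.
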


When all $\mumeas^{\play}$ are the uniform measure and all $\gammameas^{\play}$ are the uniform co-measure, we obtain the result of \citet{CanMenOzdPar:MOR2011}.
Notice that in \citet{CanMenOzdPar:MOR2011} any harmonic game admits the uniform profile as a Nash equilibrium. This is not the case anymore in our context where there is no immediate information on the Nash-equilibrium set of the game $\game$, but only on the Nash-equilibrium set of the scaled game $(\nsgameg\cdot\game)$. In particular, it is possible that two $(\muprof,\nsgameg)$-harmonic games admit disjoint Nash-equilibrium sets as illustrated by the next example.

\begin{example}
\label{ex:incompatible-1}
Consider the following two $3$-player games where player $1$ chooses the row, player $2$ chooses the column, and player $3$ chooses the matrix. 
For $\play\in\braces{1,2,3}$, player $\play$ has two strategies $\act^{\play}$ and $\actalt^{\play}$. 
In the first game $\game$, player $3$ is dummy and players $1$ and  $2$ play a generalized matching pennies game, where player $2$ wants to match and player $1$ wants to mismatch:
\begin{center}
\medskip
{\tabulinesep=1.2mm
\begin{tabu}{ r|ccc|ccc| }
\multicolumn{1}{r}{}
 &  \multicolumn{3}{c}{$\act^{2}$}
 & \multicolumn{3}{c}{$\actalt^{2}$} \\
\cline{2-7}
$\act^{1}$ & $-1$ & $1$ & $0$ & $2$ & $-1$ & $0$ \\
\cline{2-7}
$\actalt^{1}$ & $1$ & $-1$ & $0$ & $-2$ & $1$ & $0$ \\
\cline{2-7}
\multicolumn{7}{c}{$\act^{3}$}
\end{tabu}
\qquad
\begin{tabu}{ r|ccc|ccc| }
\multicolumn{1}{r}{}
 &  \multicolumn{3}{c}{$\act^{2}$}
 & \multicolumn{3}{c}{$\actalt^{2}$} \\
\cline{2-7}
$\act^{1}$ & $-1$ & $1$ & $0$ & $2$ & $-1$ & $0$ \\
\cline{2-7}
$\actalt^{1}$ & $1$ & $-1$ & $0$ & $-2$ & $1$ & $0$ \\
\cline{2-7}
\multicolumn{7}{c}{$\actalt^{3}$}
\end{tabu}}
\medskip
\end{center}
To simplify notations, we describe a strategy of a player by the probability that he is playing $\actalt$.  
The set of equilibria of this game is $\braces{(1/2,1/3,\mixed), \ \mixed\in [0,1]}$.

In the second game $\gamealt$, player $1$ is a dummy and players $2$ and  $3$ play a generalized matching pennies game, where player $2$ wants to match and player $3$ wants to mismatch:
\begin{center}
\medskip
{\tabulinesep=1.2mm
\begin{tabu}{ r|ccc|ccc| }
\multicolumn{1}{r}{}
 &  \multicolumn{3}{c}{$\act^{2}$}
 & \multicolumn{3}{c}{$\actalt^{2}$} \\
\cline{2-7}
$\act^{1}$ & $0$ & $1$ & $-1$ & $0$ & $-1$ & $3$ \\
\cline{2-7}
$\actalt^{1}$ & $0$ & $1$ & $-1$ & $0$ & $-1$ & $3$ \\
\cline{2-7}
\multicolumn{7}{c}{$\act^{3}$}
\end{tabu}
\qquad
\begin{tabu}{ r|ccc|ccc| }
\multicolumn{1}{r}{}
 &  \multicolumn{3}{c}{$\act^{2}$}
 & \multicolumn{3}{c}{$\actalt^{2}$} \\
\cline{2-7}
$\act^{1}$ & $0$ & $-1$ & $1$ & $0$ & $1$ & $-3$ \\
\cline{2-7}
$\actalt^{1}$ & $0$ & $-1$ & $1$ & $0$ & $1$ & $-3$ \\
\cline{2-7}
\multicolumn{7}{c}{$\actalt^{3}$}
\end{tabu}} .
\medskip
\end{center}
The set of equilibria of this game is $\braces{(\mixed,1/4,1/2), \ \mixed\in [0,1]}$. 
Hence the two sets of Nash-equilibria  are disjoint.

 Moreover, for all $\play \in \players$, let $\mumeas^{\play}$ be the uniform distribution over $\braces*{\act^{\play},\actalt^{\play}}$ and let $\nsgameg$ be defined by
\begin{itemize}
\item $\gammameas^{2}(\argdot,\argdot)=1$,
\item $\gammameas^{1}$ only depends on player $2$: $\gammameas^{1}(\act^{2},\act^{3})=\gammameas^{1}(\act^{2},\actalt^{3})=1$ and $\gammameas^{1}(\actalt^{2},\act^{3})=\gammameas^{1}(\actalt^{2},\actalt^{3})=1/2$,
\item $\gammameas^{3}$ only depends on player $2$: $\gammameas^{3}(\act^{1},\act^{2})=\gammameas^{3}(\actalt^{1},\act^{2})=1$ and $\gammameas^{3}(\act^{1},\actalt^{2})=\gammameas^{3}(\actalt^{1},\actalt^{2})=1/3$. 
\end{itemize}
Then  $\game$ and $\gamealt$ are both $(\muprof,\nsgameg)$-harmonic.
\end{example}

We now introduce a class of co-measures having interesting properties when dealing with $(\muprof,\nsgameg)$-harmonic games.

\begin{definition}
\label{de:product-co-measure}
A  co-measure vector $\nsgameg$ is called a \emph{product  co-measure vector} if there exists  $\cprof:=\parens*{\cmeas^{\playalt}}_{\playalt\in\players}$, with  $\cmeas^{\playalt}\in\measures_{+}(\actions^{\playalt})$ such that, for all $\play\in\players$, we have
\begin{equation}
\label{eq:prod-gamma-i}
\gammameas^{\play}(\actprof^{-\play}) \coloneqq \prod_{\playalt\neq\play} {\cmeas}^{\playalt}(\act^{\playalt})={\cmeas}^{-\play}(\actprof^{-\play}).
\end{equation}
In this case, we say that $\nsgameg$ is generated by $\cprof$.
\end{definition}

Define now, for any $\play\in\players$, 
\begin{equation}
\label{eq:mu-eta-i}
(\mumeas \cmeas)^{\play}(\act^{\play}) \coloneqq \mumeas^{\play}(\act^{\play}){\cmeas}^{\play}(\act^{\play})
\end{equation}
and let $\norml{{\mumeas \cmeas}}^{\play}$ be its normalized version, as in \cref{eq:norm-nu}. 

\begin{corollary} 
\label{cor:mu-eta-mixed-equilibrium-product}
Let $\game$ be a $(\muprof,\nsgameg)$-harmonic game such that $\nsgameg$ is a  co-measure vector generated by $\cprof$. 
Then the completely mixed strategy profile $\norml{{\mumeas \cmeas}}^{\play}$ is a Nash equilibrium of the game $\game$, i.e., for all $\play\in\players$ and for all $\actr^{\play},\actalt^{\play}\in\actions^{\play}$, we have
\begin{equation}
\label{eq:mu-eta-compl-mixed}
\sum_{\actprof^{-\play}\in\actions^{-\play}} \prod_{\playalt\neq\play}\norml{{\mumeas \cmeas}}^{\playalt}(\act^{\playalt})\payoff^{\play}(\actr^{\play},\actprof^{-\play})   
=
\sum_{\actprof^{-\play}\in\actions^{-\play}} \prod_{\playalt\neq\play}\norml{{\mumeas \cmeas}}^{\playalt}(\act^{\playalt})\payoff^{\play}(\actalt^{\play},\actprof^{-\play}).
\end{equation}
It follows that all $(\muprof,\nsgameg)$-harmonic games share a common Nash equilibrium.
\end{corollary}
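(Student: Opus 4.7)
The plan is to derive the corollary directly from \cref{th:mu-eta-mixed-equilibrium} by substituting the product structure of $\nsgameg$ and then reabsorbing the extra factors of $\cprof$ into a single renormalized mixed strategy.

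First I would apply \cref{th:mu-eta-mixed-equilibrium} to the $(\muprof,\nsgameg)$-harmonic game $\game$. This gives, for each $\play\in\players$ and each pair $\actr^{\play},\actalt^{\play}\in\actions^{\play}$, the indifference condition
\[
\sum_{\actprof^{-\play}}\norml{\mumeas}^{-\play}(\actprof^{-\play})\,\gammameas^{\play}(\actprof^{-\play})\,\payoff^{\play}(\actr^{\play},\actprof^{-\play})
=\sum_{\actprof^{-\play}}\norml{\mumeas}^{-\play}(\actprof^{-\play})\,\gammameas^{\play}(\actprof^{-\play})\,\payoff^{\play}(\actalt^{\play},\actprof^{-\play}).
\]
Next I would use the product structure \cref{eq:prod-gamma-i} together with \cref{eq:prod-nu-i} to rewrite the common weight as
\[
\norml{\mumeas}^{-\play}(\actprof^{-\play})\,\gammameas^{\play}(\actprof^{-\play})
=\prod_{\playalt\neq\play}\norml{\mumeas}^{\playalt}(\act^{\playalt})\,\cmeas^{\playalt}(\act^{\playalt}),
\]
so the indifference condition becomes a sum against the (unnormalized) product measure with marginal weights $\norml{\mumeas}^{\playalt}(\act^{\playalt})\cmeas^{\playalt}(\act^{\playalt})$ on each $\actions^{\playalt}$, $\playalt\neq\play$.

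The key algebraic observation is that each marginal weight is a constant multiple of $\norml{(\mumeas\cmeas)}^{\playalt}$. Indeed, writing $Z_{\playalt}=\sum_{\actalt^{\playalt}}\mumeas^{\playalt}(\actalt^{\playalt})$ and $\widetilde{Z}_{\playalt}=\sum_{\actalt^{\playalt}}\mumeas^{\playalt}(\actalt^{\playalt})\cmeas^{\playalt}(\actalt^{\playalt})$, one checks that
\[
\norml{\mumeas}^{\playalt}(\act^{\playalt})\,\cmeas^{\playalt}(\act^{\playalt})
=\frac{\mumeas^{\playalt}(\act^{\playalt})\cmeas^{\playalt}(\act^{\playalt})}{Z_{\playalt}}
=\frac{\widetilde{Z}_{\playalt}}{Z_{\playalt}}\,\norml{(\mumeas\cmeas)}^{\playalt}(\act^{\playalt}).
\]
Therefore the product over $\playalt\neq\play$ equals $\prod_{\playalt\neq\play}\norml{(\mumeas\cmeas)}^{\playalt}(\act^{\playalt})$ times the scalar $\prod_{\playalt\neq\play}\widetilde{Z}_{\playalt}/Z_{\playalt}$, which is strictly positive and independent of both $\actprof^{-\play}$ and of $\actr^{\play},\actalt^{\play}$. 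Dividing both sides of the indifference equation by this scalar yields \cref{eq:mu-eta-compl-mixed}, establishing that $\norml{(\mumeas\cmeas)}$ is a Nash equilibrium of $\game$ itself. Finally, since $\norml{(\mumeas\cmeas)}$ depends only on $\muprof$ and on the generator $\cprof$ of $\nsgameg$, and not on the payoffs of $\game$, the same completely mixed profile is a Nash equilibrium of every $(\muprof,\nsgameg)$-harmonic game, which is the last assertion. The only subtlety is the bookkeeping of the normalizing constants $Z_{\playalt},\widetilde{Z}_{\playalt}$, but they are strictly positive and factor out cleanly.
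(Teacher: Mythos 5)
Your proof is correct, and it reaches the conclusion by a more self-contained route than the paper. Both arguments start from \cref{th:mu-eta-mixed-equilibrium}, which gives the indifference identity for the scaled game $(\nsgameg\cdot\game)$ at the profile $\norml{\mumeas}$. The paper then writes $\game=\tfrac{1}{\nsgameg}\cdot(\nsgameg\cdot\game)$ and invokes \cref{pr:equilibrim-preservation}, a separate lemma describing how the entire Nash-equilibrium set transforms under a product scaling; that lemma requires handling both the on-support indifference equalities and the off-support inequalities. You instead manipulate the indifference identity directly: factor the weight $\norml{\mumeas}^{-\play}\gammameas^{\play}$ into the per-player marginals $\norml{\mumeas}^{\playalt}\cmeas^{\playalt}$, note that each is a strictly positive constant multiple $\widetilde{Z}_{\playalt}/Z_{\playalt}$ of $\norml{\mumeas\cmeas}^{\playalt}$, and cancel the resulting scalar from both sides to land exactly on \cref{eq:mu-eta-compl-mixed}. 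This buys elementarity: since the target profile is completely mixed, only the equality case is ever needed, so no appeal to the general equilibrium-correspondence lemma is required. What it does not give you is the reusable statement of \cref{pr:equilibrim-preservation} about how arbitrary (not necessarily completely mixed) equilibria transform under scalings, which the paper uses elsewhere. Your bookkeeping of the normalizing constants is exactly right (normalizing the product measure coincides with taking the product of normalized marginals), and the closing observation that $\norml{\mumeas\cmeas}$ depends only on $\muprof$ and the generator $\cprof$, not on the payoffs, correctly delivers the shared-equilibrium claim.
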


Again, the framework of \citet{CanMenOzdPar:MOR2011} is obtained when, for each $\play\in\players$, $\mumeas^{\play}$  is the uniform measure and  $\gammameas^{\play}$ is the uniform co-measure.

\begin{remark}
As mentioned before, constant co-measures are a particular case of product co-measures.
As a consequence, applying \cref{cor:mu-eta-mixed-equilibrium-product}, we can show that, in this specific case, we have $\norml{{\mumeas \cmeas}}^{\play}=\norml{\mumeas}^{\play}$. 
Hence, the completely mixed strategy profile $\norml{\mumeas}^{\play}$ is a Nash equilibrium of $\game$.
\end{remark}


\subsection{A first decomposition result}
\label{suse:decomp1}

Our first decomposition states that any game can be written as the sum of a $\muprof$-normalized game and a nonstrategic game. 
The proof follows \citet{CanMenOzdPar:MOR2011} apart from the fact that we choose $\muprof$-normalized games instead of the particular case of uniform $\mumeas^{\play}$. 
In the context of games with continuous strategy sets, a similar decomposition result was proved by \citet{HwaRey:arxiv2016}, who view the set of all games as a Hilbert space.

To present our decomposition results, we first define the space $\Czero\coloneqq\braces{\func: \actions \to \R}$ endowed with the following inner product:
\begin{equation} 
\label{eq:inn-prod-0}
\forall \func,\funcalt\in\Czero,\quad \inner{\func}{\funcalt}_{0}
=\sum_{\actprof\in\actions} \mumeas(\actprof) \func(\actprof)\funcalt(\actprof).
\end{equation}

Notice that $\payoff^{\play}\in\Czero$, for all $\play\in\players$; therefore $\games\cong\Czero^{\nplayers}$. 
Given a function $\func\in\Czero$ and a co-measure  $\nsgameg$,  $\gammameas^{\play}\func$ denotes the function in $\Czero$ defined as
\begin{equation}\label{eq:eta-minus-i}
(\gammameas^{\play}\func)(\actprof)=\gammameas^{\play}(\actprof^{-\play}) \func(\actprof).
\end{equation}
In particular, given a game $\game$, we have
\[
(\nsgameg\cdot\game)^{\play}=(\gammameas^{\play} \game^{\play}).
\]

Our next step is to endow the space of games with a suitable inner product. For any $\game_{1},\game_{2}\in \games$, using \cref{eq:inn-prod-0}, we define
\begin{equation} 
\label{eq:g1g2mu}
\inner{\game_{1}}{\game_{2}}_{\muprof,\nsgameg} 
= \sum_{\play\in\players} \mumeas^{\play}(\actions^{\play}) \inner{\gammameas^{\play}\payoff_{1}^{\play}}{\gammameas^{\play}\payoff^{\play}_{2}}_{0}
\end{equation}
and  $\oplus_{\muprof,\nsgameg}$ will denote the direct orthogonal sum with respect to the above inner product.

\begin{proposition} 
\label{pr:direct-sum}
The space of games $\games$ is the direct orthogonal-sum of the $\muprof$-normalized and nonstrategic subspaces, i.e.,
\begin{equation} 
\label{eq:direct-sum}
\games = \muprof\NoGd \oplus_{\muprof,\nsgameg} \NSG.
\end{equation}
\end{proposition}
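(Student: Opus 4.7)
\textbf{Proof plan for \cref{pr:direct-sum}.} The plan is to exhibit an explicit projection of any game $\game$ onto $\NSG$, show that the complementary part lies in $\muprof\NoGd$, and then verify that the two subspaces are orthogonal with respect to the inner product in \cref{eq:g1g2mu}. Once orthogonality is established and every element of $\games$ is written as the sum of one element of each subspace, the positive-definiteness of the inner product forces the intersection to be trivial, so the direct orthogonal sum follows.

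\textbf{Step 1 (constructive decomposition).} For each player $\play\in\players$ and each game $\game$, I would define the candidate nonstrategic part componentwise by averaging over $\act^{\play}$:
\begin{equation*}
\payoff^{\play}_{\NS}(\actprof) \coloneqq \frac{1}{\mumeas^{\play}(\actions^{\play})}\sum_{\actalt^{\play}\in\actions^{\play}}\mumeas^{\play}(\actalt^{\play})\,\payoff^{\play}(\actalt^{\play},\actprof^{-\play}),
\end{equation*}
which manifestly depends only on $\actprof^{-\play}$, so $\game_{\NS}\in\NSG$. Setting $\game_{\No}\coloneqq\game-\game_{\NS}$, a direct substitution gives
\begin{equation*}
\sum_{\act^{\play}\in\actions^{\play}}\mumeas^{\play}(\act^{\play})\,\payoff^{\play}_{\No}(\act^{\play},\actprof^{-\play}) = \mumeas^{\play}(\actions^{\play})\,\payoff^{\play}_{\NS}(\actprof)-\mumeas^{\play}(\actions^{\play})\,\payoff^{\play}_{\NS}(\actprof)=0,
\end{equation*}
showing that $\game_{\No}\in\muprof\NoGd$. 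Hence $\games = \muprof\NoGd + \NSG$.

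\textbf{Step 2 (orthogonality).} Take $\game_{1}\in\muprof\NoGd$ and $\game_{2}\in\NSG$. For each $\play$, writing $\payoff^{\play}_{2}(\actprof)=\nonstratf^{\play}(\actprof^{-\play})$ and using the product structure $\mumeas(\actprof)=\mumeas^{\play}(\act^{\play})\mumeas^{-\play}(\actprof^{-\play})$ from \cref{eq:prod-nu}, I would compute
\begin{align*}
\inner{\gammameas^{\play}\payoff_{1}^{\play}}{\gammameas^{\play}\payoff_{2}^{\play}}_{0} &= \sum_{\actprof\in\actions}\mumeas(\actprof)\,\gammameas^{\play}(\actprof^{-\play})^{2}\,\payoff_{1}^{\play}(\actprof)\,\nonstratf^{\play}(\actprof^{-\play})\\
&= \sum_{\actprof^{-\play}}\mumeas^{-\play}(\actprof^{-\play})\,\gammameas^{\play}(\actprof^{-\play})^{2}\,\nonstratf^{\play}(\actprof^{-\play})\,\Biggl(\sum_{\act^{\play}}\mumeas^{\play}(\act^{\play})\,\payoff_{1}^{\play}(\act^{\play},\actprof^{-\play})\Biggr).
\end{align*}
Because $\game_{1}\in\muprof\NoGd$, the inner parenthesis vanishes for every $\actprof^{-\play}$ by \cref{eq:mu-normalized}. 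Summing over $\play$ in \cref{eq:g1g2mu} then yields $\inner{\game_{1}}{\game_{2}}_{\muprof,\nsgameg}=0$.

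\textbf{Step 3 (trivial intersection and conclusion).} Since $\inner{\cdot}{\cdot}_{\muprof,\nsgameg}$ is positive definite on $\games$ (because $\muprof$ and $\nsgameg$ are strictly positive, so each summand in \cref{eq:g1g2mu} is a genuine weighted $\ell^{2}$ inner product), orthogonality forces $\muprof\NoGd\cap\NSG=\{0\}$. Combined with Step 1, this gives the direct orthogonal-sum decomposition $\games=\muprof\NoGd\oplus_{\muprof,\nsgameg}\NSG$. I do not anticipate any real obstacle: the only point requiring care is to ensure the averaging in Step 1 uses $\mumeas^{\play}$ rather than the uniform measure (so that the orthogonal complement of $\NSG$ is exactly $\muprof\NoGd$) and to exploit that $\gammameas^{\play}$ is constant in $\act^{\play}$, which is precisely what allows it to be pulled outside the inner sum over $\act^{\play}$ in Step 2.
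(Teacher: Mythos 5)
Your proposal is correct and follows essentially the same route as the paper: your averaging operator in Step~1 is exactly the paper's projection $\opLam^{\play}$ from \cref{eq:Lambda}, and your Step~2 computation reproduces the orthogonality argument of \cref{le:chara-comp}. The only cosmetic difference is that the paper packages the construction in the language of projection operators and their kernels/images, whereas you carry out the same computations directly.
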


The decomposition of \citet{CanMenOzdPar:MOR2011} can be obtained by choosing for $\mumeas^{\play}$  the uniform measure and for $\gammameas^{\play}$ the uniform co-measure.


\subsection{$(\muprof,\nsgameg)$-decomposition}
\label{suse:mu-eta-decomposition}

In this section we show that any game can be decomposed into the direct sum of three component games: a $\nsgameg$-potential $\muprof$-normalized game, a $(\muprof,\nsgameg)$-harmonic $\muprof$-normalized game, and a nonstrategic game. 
Moreover, this decomposition is orthogonal for the inner product in \cref{eq:g1g2mu}.

\begin{theorem} 
\label{th:ortho-sum}
The space of games is the direct orthogonal-sum of the $\muprof$-normalized $\nsgameg$-potential, $\muprof$-normalized $(\muprof,\nsgameg)$-harmonic and nonstrategic subspaces, i.e.,
\begin{equation}
\label{eq:G=sum}
\games=\parens*{\muprof\NoGd \cap \nsgameg\PGd} \oplus_{\muprof,\nsgameg} \parens*{\muprof\NoGd \cap (\muprof,\nsgameg)\HGd} \oplus_{\muprof,\nsgameg} \NSG.
\end{equation}
\end{theorem}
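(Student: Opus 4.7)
The plan is to refine the orthogonal decomposition $\games = \muprof\NoGd \oplus_{\muprof,\nsgameg} \NSG$ from the preceding proposition by further splitting the $\muprof$-normalized subspace via a Helmholtz-type argument. Concretely, it suffices to prove, inside the finite-dimensional Hilbert space $(\muprof\NoGd,\inner{\cdot}{\cdot}_{\muprof,\nsgameg})$, the orthogonal decomposition
\[
\muprof\NoGd = (\muprof\NoGd \cap \nsgameg\PGd) \oplus_{\muprof,\nsgameg} (\muprof\NoGd \cap (\muprof,\nsgameg)\HGd);
\]
grafting the nonstrategic summand back then gives \eqref{eq:G=sum}. The whole argument will be driven by a single gradient-like operator together with its adjoint.

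First I would introduce the linear embedding $\embop \colon \Czero \to \games$ defined by $\embop(\potent)^{\play}(\actprof) = \potent(\actprof)/\gammameas^{\play}(\actprof^{-\play})$. A one-line verification of \eqref{eq:eta-potential} gives $\im(\embop) \subseteq \nsgameg\PGd$. Conversely, if $\game \in \nsgameg\PGd$ has potential function $\potent$, then for each $\play$ the quantity $\gammameas^{\play}(\actprof^{-\play})\payoff^{\play}(\actprof) - \potent(\actprof)$ depends only on $\actprof^{-\play}$, so that $\game - \embop(\potent) \in \NSG$. Since $\NSG \subseteq \nsgameg\PGd$ (take $\potent \equiv 0$), we conclude $\nsgameg\PGd = \im(\embop) + \NSG$, and therefore the first summand of the target splitting is the image of $\embop$ followed by the orthogonal projection from $\games$ onto $\muprof\NoGd$ supplied by the preceding proposition.

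Next I would compute the adjoint $\embop^{\ast} \colon \games \to \Czero$ with respect to $\inner{\cdot}{\cdot}_{\muprof,\nsgameg}$ and $\inner{\cdot}{\cdot}_{0}$. Unwinding \eqref{eq:g1g2mu}, \eqref{eq:inn-prod-0} and \eqref{eq:eta-minus-i}, a direct calculation gives
\[
(\embop^{\ast}\game)(\actprof) = \sum_{\play\in\players} \mumeas^{\play}(\actions^{\play})\,\gammameas^{\play}(\actprof^{-\play})\,\payoff^{\play}(\actprof).
\]
Now splitting the harmonic condition \eqref{eq:mu-eta-harmonic} into two sums produces exactly $(\embop^{\ast}\game)(\actprof)$ as the first term, while the second term, which pairs $\payoff^{\play}(\actalt^{\play},\actprof^{-\play})$ against $\mumeas^{\play}$, vanishes identically on $\muprof\NoGd$ by virtue of the normalization \eqref{eq:mu-normalized}. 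Hence $\muprof\NoGd \cap (\muprof,\nsgameg)\HGd = \muprof\NoGd \cap \Ker(\embop^{\ast})$.

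Finally I would combine the two identifications. In the finite-dimensional Hilbert space $(\games,\inner{\cdot}{\cdot}_{\muprof,\nsgameg})$ one always has $\Ker(\embop^{\ast}) = \im(\embop)^{\perp}$. Intersecting with $\muprof\NoGd$ and using that the projection onto $\muprof\NoGd$ from the previous proposition is orthogonal yields the required orthogonal splitting of $\muprof\NoGd$, and re-attaching $\NSG$ proves \eqref{eq:G=sum}. The one delicate step, and the place where the generalized $(\muprof,\nsgameg)$-weights really pay off, is the algebraic reduction of \eqref{eq:mu-eta-harmonic} to $\embop^{\ast}\game = 0$ on $\muprof\NoGd$: the $\muprof$-normalization is precisely what kills the extra sum in \eqref{eq:mu-eta-harmonic} and matches the harmonic subspace with $\im(\embop)^{\perp}$; without this the clean pairing between the potential and harmonic subspaces breaks down and the three-way decomposition fails.
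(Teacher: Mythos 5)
Your proof is correct, but it takes a genuinely different route from the paper's. The paper realizes each game as a flow on a graph, characterizes the three classes via partial gradient operators $\gradop^{\play}$ and their adjoints, and then builds the components explicitly with the Moore--Penrose pseudo-inverse $\gradop^{\dag}$ (their harmonic projection is $\Id-\gradop\gradop^{\dag}$ applied at the level of flows). You instead collapse all of this into a single linear map $\Czero\to\games$, $\potent\mapsto\parens*{\potent/\gammameas^{\play}}_{\play\in\players}$, whose image generates $\nsgameg\PGd$ modulo $\NSG$ and whose adjoint $\game\mapsto\sum_{\play}\mumeas^{\play}(\actions^{\play})\gammameas^{\play}\payoff^{\play}$ cuts out exactly the normalized harmonic games (this is the same characterization the paper isolates as its Proposition on normalized harmonic games), so that the splitting follows from $\Ker(T^{*})=(\Ima T)^{\perp}$. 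This is shorter and avoids the graph machinery entirely; what it gives up is the explicit closed-form projection onto each component, which the paper reuses downstream (closest potential game, commutation with duplication and scaling). Two small points. First, your symbol clashes with the paper's $\embop$, which there denotes the game-to-flow map; rename yours. Second, the final step should not be read as literally intersecting the decomposition $\games=\Ima(T)\oplus\Ker(T^{*})$ with $\muprof\NoGd$ (that operation is not valid for a general subspace); the correct statement, which your ingredients do deliver, is that the orthocomplement of $\opPi(\Ima T)$ \emph{inside} $\muprof\NoGd$ equals $\muprof\NoGd\cap\Ker(T^{*})$, because for $\game\in\muprof\NoGd$ self-adjointness of $\opPi$ gives
\begin{equation*}
\inner*{\game}{\opPi(T\potent)}_{\muprof,\nsgameg}=\inner*{\opPi(\game)}{T\potent}_{\muprof,\nsgameg}=\inner*{\game}{T\potent}_{\muprof,\nsgameg}=\inner*{T^{*}\game}{\potent}_{0},
\end{equation*}
so $\game$ is orthogonal to the normalized potential subspace if and only if $T^{*}\game=0$. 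With that sentence made explicit the argument is complete.
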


As before, choosing  $\mumeas^{\play}$ to be the uniform measure and  $\gammameas^{\play}$ to be the uniform co-measure  gives the decomposition of \citet{CanMenOzdPar:MOR2011}.

\cref{th:ortho-sum} guarantees that, given a game $\game$, and a pair $(\muprof,\nsgameg)$, we can decompose this game into three components:
a nonstrategic game, a $\nsgameg$-potential $\muprof$-normalized game, and a $(\muprof,\nsgameg)$-harmonic $\muprof$-normalized game. 
The nonstrategic game will be denoted by $\game_{(\muprof,\nsgameg)\NSd}$, the  $\nsgameg$-potential $\muprof$-normalized game by $\game_{(\muprof,\nsgameg)\Potd}$,  and  the $(\muprof,\nsgameg)$-harmonic $\muprof$-normalized game by $\game_{(\muprof,\nsgameg)\Hard}$. Hence, we obtain
\begin{equation}
\label{eq:g=sum}
\game=\game_{(\muprof,\nsgameg)\NSd}+\game_{(\muprof,\nsgameg)\Potd}+\game_{(\muprof,\nsgameg)\Hard}.
\end{equation}
For the sake of simplicity, when there is no risk of confusion, we omit the indication of $(\muprof,\nsgameg)$. 

We emphasize the fact that the set of nonstrategic games does not depend on $(\muprof,\nsgameg)$, but the nonstrategic component in the $(\muprof,\nsgameg)$-decomposition does depend on $(\muprof,\nsgameg)$.
Similarly, the set of $\nsgameg$-potential games does not depend on $\muprof$, but the $\nsgameg$-potential component  in the $(\muprof,\nsgameg)$-decomposition does depend on $(\muprof,\nsgameg)$.

The key point in the proof of \cref{th:ortho-sum} is to associate a given game to a flow  on a suitable graph, as it is done in \citet{CanMenOzdPar:MOR2011}. 
We then characterize $\nsgameg$-potential games, nonstrategic games, and $(\muprof,\nsgameg)$-harmonic games in terms of their induced flows. 
Any flow generated by a game can be decomposed into two particular flows using linear algebra tools. 
From this decomposition result, we can obtain a decomposition in terms of games. 
The definition of the flow and the decomposition result are built on the gradient operator and the inner product, respectively. Hence, the decomposition is implicitly related to some notion of metric induced in the space of games. 
\citet{CanMenOzdPar:MOR2011} use the same metric to define both the flow generated by a game and the corresponding decomposition. 
We generalize their approach  in the following ways:  first, we deal with families of metrics instead of a unique one and, second, we consider different metrics for the definition of the flow generated by a game and the decomposition. 

To prove our decomposition result, we use the Moore-Penrose pseudo-inverse of the gradient operator \citep[see, e.g.,][]{BenGre:Springer2003}.
Since the components of the decomposition are orthogonal, the pseudo-inverse operator allows us to determine the closest $\nsgameg$-potential game to an arbitrary game with respect to the induced distance in the space of games. 
\citet{CanMenOzdPar:MOR2011} provide in fact two approaches to obtain their decomposition result. 
The first approach relies on the Helmholtz decomposition tool, which becomes a degenerate case when applied to flows induced by games.\footnote{According to the graph representation of games, \emph{curl flows} in games are generated only by consecutive deviations of the same player and hence, they are mapped to 0. As a consequence, the divergence-free component of the Helmholtz decomposition tool is reduced to \emph{harmonic flows}.} 
The second one relies on the Moore-Penrose pseudo-inverse operator. 
We choose the second approach since we find it better suited to study the relation between duplicate strategies and decomposition.

Our  result states that, given a pair $(\muprof,\nsgameg)$, the decomposition into $\muprof$-normalized $\nsgameg$-potential, $\muprof$-normalized $(\muprof,\nsgameg)$-harmonic, and nonstrategic games is unique. 
The map that associates to a given game its components will be referred to as the \emph{$(\muprof,\nsgameg)$-decomposition map}.

The  reason for using two parameters $(\muprof,\nsgameg)$, rather than one will be clear once we introduce game transformations in \cref{se:compati}, in particular duplications of strategies and scalings, which are radically  different game transformations. 
One way to see this difference is that duplicating a strategy of some player affects only her own equilibrium strategy, whereas scaling of some player's payoffs  changes the equilibrium strategies of all the other players, but not her own. 
As a natural consequence, for every player $\play$ we have a parameter $\mumeas^{\play}$, which is a measure on $\actions^{\play}$, and  a second parameter $\gammameas^{\play}$, which is a measure on $\actions^{-\play}$. 
As we will see, the family of decompositions that we obtain also commutes with other game transformations, such as permutations.

Nevertheless, there is some redundancy by having two parameters. This gives rise to the following  question: what is the set of parameters $(\muprof,\nsgameg)$ that induce the same $(\muprof,\nsgameg)$-decomposition map?

\begin{proposition}\label{pro:equivalence-class}
Let $\muprof,\mathring{\muprof}$ be two strictly positive  measure vectors and let $\nsgameg,\mathring{\nsgameg}$ be two   co-measure vectors. Then, the $(\muprof,\nsgameg)$-decomposition map is identical to the $(\mathring{\muprof},\mathring{\nsgameg})$-decomposition map if and only if  there exist $\scala,\scalb>0$ such that, for all $\play\in\players$
\[
\mathring{\mumeas}^{\play}=\scala \mumeas^{\play}\quad\text{and}\quad \mathring{\gammameas}^{\play}=\scalb \gammameas^{\play}.
\]
\end{proposition}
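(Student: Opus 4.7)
The ``if'' direction is a routine verification: under $\mathring{\mumeas}^{\play}=\scala\mumeas^{\play}$ and $\mathring{\gammameas}^{\play}=\scalb\gammameas^{\play}$, each of the defining conditions \cref{eq:mu-normalized,eq:eta-potential,eq:mu-eta-harmonic} is multiplied by a positive constant, so the three subspaces appearing on the right-hand side of \cref{eq:G=sum} are unchanged; and the inner product \cref{eq:g1g2mu} is rescaled by the positive factor $\scala^{\nplayers+1}\scalb^{2}$, so the orthogonal projections onto those subspaces are preserved.

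For the ``only if'' direction, the plan is to extract the proportionalities by comparing the three subspaces of \cref{eq:G=sum} one at a time. Since the nonstrategic components match for every game, so do the $\muprof$-normalized components $\game-\game_{\NSd}$, and hence $\muprof\NoGd=\mathring{\muprof}\NoGd$. For each fixed $(\play,\actprof^{-\play})$, the normalization \cref{eq:mu-normalized} is a single linear equation involving only the $\abs{\actions^{\play}}$ coordinates $\payoff^{\play}(\argdot,\actprof^{-\play})$, and these sets of coordinates are pairwise disjoint across $(\play,\actprof^{-\play})$; hence each hyperplane must be preserved individually, forcing $\mathring{\mumeas}^{\play}=\scala_{\play}\mumeas^{\play}$ for some $\scala_{\play}>0$ (the constant is independent of $\actprof^{-\play}$ because neither measure depends on it).

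Next, any nonstrategic game is trivially $\nsgameg$-potential, so $\NSG\subseteq\nsgameg\PGd$ and $\nsgameg\PGd=(\muprof\NoGd\cap\nsgameg\PGd)\oplus\NSG$; combined with $\muprof\NoGd=\mathring{\muprof}\NoGd$ and the equality of the potential components, this gives $\nsgameg\PGd=\mathring{\nsgameg}\PGd$. Writing $\ared^{\play}(\actprof^{-\play})=\mathring{\gammameas}^{\play}(\actprof^{-\play})/\gammameas^{\play}(\actprof^{-\play})$, this equality says that for every function $\potent$ on $\actions$ (any such $\potent$ is the potential of the $\nsgameg$-potential game $\payoff^{\play}(\actprof)=\potent(\actprof)/\gammameas^{\play}(\actprof^{-\play})$) there must exist $\mathring{\potent}$ with $\mathring{\potent}(\actalt^{\play},\actprof^{-\play})-\mathring{\potent}(\act^{\play},\actprof^{-\play})=\ared^{\play}(\actprof^{-\play})\bracks*{\potent(\actalt^{\play},\actprof^{-\play})-\potent(\act^{\play},\actprof^{-\play})}$. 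Demanding that $\mathring{\potent}$ close up around a four-step rectangle alternating deviations of players $\play$ and $\playalt$ (with the remaining coordinates fixed at $\actprof^{-\{\play,\playalt\}}$) and matching the coefficient of $\potent$ at each of the four corners yields $\ared^{\play}(\act^{\playalt},\actprof^{-\{\play,\playalt\}})=\ared^{\playalt}(\act^{\play},\actprof^{-\{\play,\playalt\}})$ for every $\act^{\play},\act^{\playalt}$. Since the left-hand side is independent of $\act^{\play}$ and the right-hand side of $\act^{\playalt}$, both sides reduce to a common function of $\actprof^{-\{\play,\playalt\}}$; iterating over all pairs of players propagates this independence to every coordinate, yielding $\mathring{\gammameas}^{\play}=\scalb\gammameas^{\play}$ for a universal positive constant $\scalb$.

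It remains to force all the $\scala_{\play}$ to coincide. The equality of the harmonic components, together with the previous two steps, gives $\muprof\NoGd\cap(\muprof,\nsgameg)\HGd=\muprof\NoGd\cap(\mathring{\muprof},\mathring{\nsgameg})\HGd$. Using $\muprof$-normalization to eliminate the inner sum, the condition \cref{eq:mu-eta-harmonic} on $\muprof\NoGd$ collapses to $\sum_{\play\in\players}\mumeas^{\play}(\actions^{\play})\gammameas^{\play}(\actprof^{-\play})\payoff^{\play}(\actprof)=0$ for every $\actprof$, while the rescaled condition inserts $\scala_{\play}$ inside the sum. If $\scala_{\play_{0}}\ne\scala_{\playalt_{0}}$ for some pair of players, I would construct a game supported on a single $2\times 2$ slice in players $(\play_{0},\playalt_{0})$, with all other players being dummies: the four harmonic equations on the slice, together with $\muprof$-normalization, admit a one-parameter family of solutions in the four free payoff values (the finite analogue of \cref{th:mu-eta-mixed-equilibrium}), and evaluating the rescaled harmonic sum at any corner of the slice returns $(\scala_{\play_{0}}-\scala_{\playalt_{0}})$ times a nonzero quantity, contradicting the assumed equality of harmonic subspaces. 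The most delicate step is the cycle argument in the potential case: one must carefully justify that the four corner values of $\potent$ can be varied truly independently, which rests on the observation above that every function on $\actions$ arises as the potential of some $\nsgameg$-potential game, so that the coefficient of $\potent$ at each corner must vanish separately.
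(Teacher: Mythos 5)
Your proposal is correct and follows essentially the same route as the paper: the "only if" direction is established by showing the three component subspaces must coincide and then testing each one on suitably chosen games — normalized games supported on one player's coordinates to get $\mathring{\mumeas}^{\play}=\scala_{\play}\mumeas^{\play}$, a two-player potential cycle to get $\mathring{\gammameas}^{\play}=\scalb\gammameas^{\play}$ (your rectangle-closure condition is exactly the relation the paper extracts from its explicit indicator-potential game), and a two-player harmonic slice to force all $\scala_{\play}$ equal. Your treatment is, if anything, slightly more careful than the paper's in justifying that equality of the decomposition maps implies equality of the full classes $\nsgameg\PGd=\mathring{\nsgameg}\PGd$ via the direct-sum identity $\nsgameg\PGd=(\muprof\NoGd\cap\nsgameg\PGd)\oplus\NSG$.
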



\subsection{Closest potential game and approximate equilibria.}
\label{suse:closest-potential}

Our decomposition is also related to \citet{CanOzdPar:CDC2010}, who develop a theory on how to associate to an arbitrary game its closest weighted potential game and then deduce $\varepsilon$-approximate equilibria.
The set of weighted potential games is not convex, but given a vector $\weightprof=\parens*{\weight^{\play}}$ of weights, the set of $\weightprof$-weighted potential games is convex. 
The authors first compute the closest $\weightprof$-weighted potential game with respect to a metric that depends on $\weightprof$.
Then, to find the closest weighted potential game, they  optimize over $\weightprof$.    
The choice of the metric ensures that this second optimization is a convex problem.

In fact, if we take for every $\play \in \players$ and every $\act^{\play}\in \actions^{\play}$, $\mumeas^{\play}(\act^{\play})=1$ and for all $\actprof^{- \play} \in \actions^{- \play}$, $\gammameas^{\play}(\actprof^{-\play})=\weight^{\play}$, then the metric induced by   $\inner{\argdot}{\argdot}_{\muprof,\nsgameg}$ coincides with the metric used in \citet{CanOzdPar:CDC2010} associated to the weights  $(\weight^{\play})_{\play\in \players}$. 
As a consequence, the closest $\weightprof$-potential game in the sense of \citet{CanOzdPar:CDC2010} is in fact equal to the sum of the  potential component and the nonstrategic component of our decomposition.  
\citet{CanOzdPar:CDC2010} explain how to extend their techniques to ordinal potential games. 
Since a $\nsgameg$-potential game is ordinal, it is therefore possible to apply these techniques to compute the closest $\nsgameg$-potential game and then to optimize on $\nsgameg$.

Then, it is possible to relate the Nash-equilibira of the closest $\nsgameg$-potential game to the approximate Nash-equilibria of the original game. 
Formally, let $\norm{\argdot}_{\muprof,\gammameas}$ be the norm induced by  the inner product $\inner{\argdot}{\argdot}_{\muprof,\nsgameg}$ introduced in \cref{eq:g1g2mu}. 
The  distance between a game $\game$ and its closest $\nsgameg$-potential game $\game_{\Pot}$ is then given by
\begin{align*}
\norm*{\game - \game_{\Pot}}^{2}_{\muprof,\nsgameg}=\inner{\game-\game_{\Pot}}{\game-\game_{\Pot}}_{\muprof,\nsgameg}&=\sum\limits_{\play \in \players} \mumeas^{\play}(\actions^{\play}) \inner{ \gammameas^{\play}\game^{\play}-\gammameas^{\play}\game^{\play}_{\Pot}}{\gammameas^{\play}\game^{\play}-\gammameas^{\play}\game^{\play}_{\Pot}}_{0}\\
&=\sum\limits_{\play \in \players}\mumeas^{\play}(\actions^{\play}) \sum_{\actprof\in\actions} \parens*{\gammameas^{\play}(\actprof^{-\play})}^{2} \parens*{\game^{\play}(\actprof) - \game^{\play}_{\Pot}(\actprof)}^{2}.
\end{align*}

\begin{proposition} 
\label{pr:closest-potential}
Let $\game$ be a game, and $\game_{\Pot}$ be its closest potential game. Define  $\dggp\coloneqq \norm*{\game - \game_{\Pot}}_{\muprof,\nsgameg}$.
Then, every equilibrium of $\game_{\Pot}$ is an $\varepsilon$-equilibrium of $\game$ for some 
\begin{equation}\label{eq:epsilon}
\varepsilon \leq 2 \max\limits_{\playalt\in\players}\max\limits_{\actprof^{-\playalt} \in \actions^{-\playalt}}\frac{\dggp}{\gammameas^{\playalt}(\actprof^{-\playalt})\sqrt{\mumeas^{\playalt}(\actions^{\playalt})}}.
\end{equation}
\end{proposition}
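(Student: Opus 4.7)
The plan is to exploit the standard fact that if two games have nearby payoffs, then Nash equilibria of one must be approximate equilibria of the other. Fix a Nash equilibrium $\mixedprof^{*}$ of $\game_{\Pot}$, and for each player $\play$ write $\Delta^{\play}\coloneqq\payoff^{\play}-\payoff^{\play}_{\Pot}$. For any mixed strategy $\mixed^{\play}$ of player $\play$, I would decompose the deviation gain in $\game$ as
\[
\payoff^{\play}(\mixed^{\play},\mixedprof^{-\play*})-\payoff^{\play}(\mixedprof^{*})
=\bigl[\payoff^{\play}_{\Pot}(\mixed^{\play},\mixedprof^{-\play*})-\payoff^{\play}_{\Pot}(\mixedprof^{*})\bigr]
+\bigl[\Delta^{\play}(\mixed^{\play},\mixedprof^{-\play*})-\Delta^{\play}(\mixedprof^{*})\bigr],
\]
with the convention that $\payoff^{\play}$, $\payoff^{\play}_{\Pot}$ and $\Delta^{\play}$ are extended multilinearly to mixed profiles. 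The first bracket is non-positive because $\mixedprof^{*}$ is a Nash equilibrium of $\game_{\Pot}$, so it suffices to bound the second bracket by the right-hand side of \cref{eq:epsilon}.

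The core of the argument is a pointwise bound on $\Delta^{\play}$ extracted from the explicit formula for $\norm*{\game-\game_{\Pot}}^{2}_{\muprof,\nsgameg}$ displayed just before the proposition. Since every summand in that expression is non-negative, for every fixed player $\play$ and profile $\actprof\in\actions$ one has
\[
\mumeas^{\play}(\actions^{\play})\,\bigl(\gammameas^{\play}(\actprof^{-\play})\bigr)^{2}\bigl(\Delta^{\play}(\actprof)\bigr)^{2}\leq \dggp^{2},
\]
hence $\abs{\Delta^{\play}(\actprof)}\leq \dggp/\bigl(\gammameas^{\play}(\actprof^{-\play})\sqrt{\mumeas^{\play}(\actions^{\play})}\bigr)$.

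Next, I would use the fact that $\Delta^{\play}(\mixed^{\play},\mixedprof^{-\play*})$ is a convex combination of the values $\Delta^{\play}(\act^{\play},\actprof^{-\play})$ weighted by $\mixed^{\play}(\act^{\play})\mixedprof^{-\play*}(\actprof^{-\play})$, and similarly for $\Delta^{\play}(\mixedprof^{*})$. By the triangle inequality, each of the two expectations is bounded in absolute value by $\max_{\actprof^{-\play}} \dggp/\bigl(\gammameas^{\play}(\actprof^{-\play})\sqrt{\mumeas^{\play}(\actions^{\play})}\bigr)$, so their difference is at most twice this quantity. Taking the maximum over players $\playalt\in\players$ and over profiles $\actprof^{-\playalt}\in\actions^{-\playalt}$ yields precisely the bound in \cref{eq:epsilon}, and since this bound is uniform in $\play$ and $\mixed^{\play}$, it certifies that $\mixedprof^{*}$ is an $\varepsilon$-equilibrium of $\game$.

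There is no real obstacle here beyond the bookkeeping: the whole argument rides on the pointwise estimate $\abs{\Delta^{\play}(\actprof)}\leq \dggp/\bigl(\gammameas^{\play}(\actprof^{-\play})\sqrt{\mumeas^{\play}(\actions^{\play})}\bigr)$, which comes for free from the non-negativity of each term in the squared norm. The mild subtlety is that the constant $\gammameas^{\play}(\actprof^{-\play})$ in the denominator depends on $\actprof^{-\play}$, which is why the final bound involves a maximum over $\actprof^{-\playalt}$ rather than a single scalar, and why the factor $2$ (coming from the two expectations) cannot be improved to $1$ without exploiting the equilibrium condition more delicately.
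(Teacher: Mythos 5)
Your proof is correct and follows essentially the same route as the paper's: the pointwise bound $\abs*{\payoff^{\play}(\actprof)-\payoff_{\Pot}^{\play}(\actprof)}\leq \dggp/\bigl(\gammameas^{\play}(\actprof^{-\play})\sqrt{\mumeas^{\play}(\actions^{\play})}\bigr)$ extracted from the non-negativity of each summand in the squared norm, combined with the equilibrium condition in $\game_{\Pot}$ and a triangle inequality that produces the factor $2$. The only difference is that you carry the argument through for mixed equilibria and mixed deviations, whereas the paper's written proof treats only pure profiles and pure unilateral deviations; since the statement asserts the bound for every equilibrium of $\game_{\Pot}$, your version is in fact the slightly more complete one.
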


In principle, in \cref{eq:epsilon} it is possible to optimize over the parameters $\muprof$ and $\nsgameg$.
The optimization over $\nsgameg$ is similar to the one described in \citet{CanOzdPar:CDC2010}. 
The properties of an optimization over $\muprof$ are not clear.


\section{Compatibility between decompositions and game transformations}
\label{se:compati}

We are interested in how decompositions and game transformations interact. 
As shown by the examples in the Introduction, applying a $(\muprof,\nsgameg)$-decomposition and then transforming each component  may lead to a different result than first transforming the game and then applying the $(\muprof,\nsgameg)$-decomposition to each component.
That is, in general these operations do  not commute. 
Therefore, we will look at a notion where, instead of focusing on two specific maps, we consider a family of decompositions and a family of game transformations.
The following definitions will be needed for the rest of the analysis.

\begin{definition}
\label{de:game transformation}
Let $\actions$ and $\actionsalt$ be two sets of strategy profiles and let $
\subsgames\subset \games_{\actions}$.  A map $\mapT \colon \subsgames \to\games_{\actionsalt}$ is called a  \acfi{GT}\acused{GT}.
\end{definition}

\begin{definition}
\label{de:3-dec-map}
A \acfi{3DM}\acused{3DM} $\threedecmap_{\actions}$ is a map from $\games_{\actions}$ to $\games_{\actions}^{3}$  such that for every $\game\in \games_{\actions},$
\begin{equation}\label{eq:3-decomposition}
\game=(\threedecmap_{\actions}(\game))_{1}+(\threedecmap_{\actions}(\game))_{2}+(\threedecmap_{\actions}(\game))_{3}
\end{equation}
\end{definition}

\begin{definition}
Let $\threedecmaps$ be a family of \acp{3DM} and let $\maps$ be  a family of \acp{GT}. 
We say that the two families \emph{commute} if, for every \ac{3DM} $\threedecmap \in \threedecmaps$ defined on $\games_{\actions}$ and every \ac{GT} $\mapT\in\maps$ from $\games_{\actions}$ to $\games_{\actionsalt}$, there exists a decomposition $\threedecmapalt \in \threedecmaps$ defined on $\games_{\actionsalt}$ and a triple $\parens{\mapT_{1},\mapT_{2},\mapT_{3}}\in\maps^{3}$ of \acp{GT} from $\games_{\actions}$ to $\games_{\actionsalt}$ such that
\begin{equation}\label{eq:T-commute}
\parens{\mapT_{1},\mapT_{2},\mapT_{3}}\circ\threedecmap = \threedecmapalt \circ \mapT,
\end{equation}
where $\parens{\mapT_{1},\mapT_{2},\mapT_{3}}\circ\threedecmap (\game) \coloneqq (\mapT_{1}(\threedecmap(\game)_{1}),\mapT_{2}(\threedecmap(\game)_{2}),\mapT_{3}(\threedecmap(\game)_{3}))$. 

Equivalently, $\threedecmaps$  and  $\maps$ commute if the diagram in \cref{fi:commutation-2} is commutative.
\begin{center}
\def\ech{1}
\begin{figure}[h]
\begin{tikzpicture}[scale=\ech,>=stealth,shorten >=1pt,auto,node distance=4cm,thick,main
 node/.style={draw,font=\Large\bfseries}]

\node [text width=0.5cm,text centered,scale=\ech] (g) at (0,0) {$\games_{\actions}$};
\node [text width=2.5cm,text centered,scale=\ech] (dg) at (6,0) {$\games_{\actions} \times \games_{\actions} \times \games_{\actions} $};
\node [text width=0.5cm,text centered,scale=\ech] (gr) at (0,-2) {$\games_{\actionsalt}$};
\node [text width=2.5cm,text centered,scale=\ech] (dgr) at (6,-2) {$\games_{\actionsalt} \times \games_{\actionsalt} \times \games_{\actionsalt}$};

\draw[->,>=latex,scale=\ech] (g) to node[midway,scale=\ech] {$\threedecmap$}(dg);
\draw[->,scale=\ech] (g) to node[midway,right,scale=\ech] {$\mapT$} (gr);
\draw[->,>=latex,scale=\ech] (gr) to node[midway,scale=\ech] {$\threedecmapalt$}(dgr);
\draw[transform canvas={xshift=25},->,scale=\ech] (dg) to node[midway,right,scale=\ech] {$\mapT_{3}$} (dgr);
\draw[transform canvas={xshift=0},->,scale=\ech] (dg) to node[midway,right,scale=\ech] {$\mapT_{2}$} (dgr);
\draw[transform canvas={xshift=-25},->,scale=\ech] (dg) to node[midway,right,scale=\ech] {$\mapT_{1}$} (dgr);
\end{tikzpicture}
\caption{}\label{fi:commutation-2}
\end{figure}
\end{center}
\end{definition}

\begin{remark}
\label{re:zeta}
As highlighted in the Introduction, restricting to $\threedecmapalt=\threedecmap$ would prevent us from comparing games with different spaces of strategies, like in the duplication framework. 
Moreover, even when $\actions=\actionsalt$, it would be impossible to obtain a commutative diagram for the constant scaling. 
So, even if some meaningful results could be obtained also under the restriction $\mapT_{1}=\mapT_{2}=\mapT_{3}$, to obtain the full spectrum of results in the sequel, we need to deal with the general case where the three \acp{GT} may differ. 
This is why we chose to introduce this  notion of commutation between two families. 
\end{remark}


\subsection{Internal game transformations}
\label{suse:preserving}
In this section, we are interested in comparing games which share the same strategy space. Since $\actions=\actionsalt$, the corresponding space of games will be denoted just by $\games$. 
We will investigate three types of \acp{GT}: permuting the strategies of one player, adding a nonstrategic game and scaling the payoff by a co-measure. 
We will see that these families of transformations commute with the family of $(\muprof,\nsgameg)$-decomposition maps.


\subsubsection{Permutations}
\label{sususe:permutations}

First, we look at the relabeling of the sets of strategies. 
Without loss of generality we focus on  relabeling  the strategies of player $\play$. 
The result can then be extended to  relabeling  the strategies of several players simultaneously by composing the results players by players.

\begin{definition}
Given  a permutation $\permut$ of the set $\actions^{\play}$ of strategies of player $\play$, we define the $\permut$-relabeling transformation $\mapT \colon \games_{\actions} \to \games_{\actions}$ as,
\begin{equation}
\label{eq:permut-relabel}
\forall \actprof\in\actions,\ \forall \playalt\in\players,\ (\mapT(\game))^{\playalt}(\act^{\play},\actprof^{-\play}) = \payoff^{\playalt}(\permut(\act^{\play}),\actprof^{-\play}).
\end{equation}
The image of the game $\game$ by $\permut$-relabeling transformation is  denoted by $\game_{\permut}$.
\end{definition}

\noindent Given a measure $\mumeas^{\play}$ on $\actions^{\play}$, the permuted measure $\mumeas^{\play}_{\permut}$ is defined as
\begin{equation}\label{eq:permuted-meas}
\forall \act^{\play} \in \actions^{\play},\  \mumeas^{\play}_{\permut}(\act^{\play})= \mumeas^{\play}(\permut(\act^{\play})).
\end{equation}

We obtain the following proposition.

\begin{proposition}
\label{pr:decomposition-permutation}
Let $\permut$ be a permutation of the strategies of player $\play$. Then, for every game $\game$, we have
\begin{enumerate}[\upshape(a)]
\item $(\game_{\permut})_{(\muprof_{\permut},\nsgameg_{\permut})\NSd}=(\game_{(\muprof,\nsgameg)\NSd})_{\permut}$,

\item $(\game_{\permut})_{(\muprof_{\permut},\nsgameg_{\permut})\Potd}=(\game_{(\muprof,\nsgameg)\Potd})_{\permut}$,

\item $(\game_{\permut})_{(\muprof_{\permut},\nsgameg_{\permut})\Hard}=(\game_{(\muprof,\nsgameg)\Hard})_{\permut}$, 
\end{enumerate}
i.e., permutations of strategies and decompositions commute.
\end{proposition}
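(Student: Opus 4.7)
The plan is to exploit the uniqueness guaranteed by \cref{th:ortho-sum}: it suffices to show that the triple $\parens*{(\game_{(\muprof,\nsgameg)\NSd})_{\permut},\,(\game_{(\muprof,\nsgameg)\Potd})_{\permut},\,(\game_{(\muprof,\nsgameg)\Hard})_{\permut}}$ sums to $\game_{\permut}$ and has components lying in the three subspaces that appear in the $(\muprof_{\permut},\nsgameg_{\permut})$-decomposition of $\games$. Uniqueness then forces these to be the $(\muprof_{\permut},\nsgameg_{\permut})$-components of $\game_{\permut}$, which is exactly the claim. The summation identity is immediate from the linearity of the $\permut$-relabeling map \cref{eq:permut-relabel} applied termwise to the original $(\muprof,\nsgameg)$-decomposition $\game=\game_{(\muprof,\nsgameg)\NSd}+\game_{(\muprof,\nsgameg)\Potd}+\game_{(\muprof,\nsgameg)\Hard}$.

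The substantive step is to check the four subspace-membership conditions: one for $\NSG$, one for $\muprof_{\permut}\NoGd$, one for $\nsgameg_{\permut}\PGd$, and one for $(\muprof_{\permut},\nsgameg_{\permut})\HGd$. This requires extending the definition of $\nsgameg_{\permut}$ in a manner parallel to \cref{eq:permuted-meas}: since $\gammameas^{\play}$ lives on $\actions^{-\play}$, which does not involve player $\play$'s own strategies, one sets $(\gammameas_{\permut})^{\play}\coloneqq\gammameas^{\play}$; for $\playalt\neq\play$, the co-measure $\gammameas^{\playalt}$ does involve $\act^{\play}$, and one sets $(\gammameas_{\permut})^{\playalt}(\actprof^{-\playalt})\coloneqq\gammameas^{\playalt}(\permut(\act^{\play}),\actprof^{-\{\play,\playalt\}})$. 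With these conventions, each of the defining equations \cref{eq:nonstrategic,eq:mu-normalized,eq:eta-potential,eq:mu-eta-harmonic} is verified for the permuted game by the change of variable $\actalt^{\play}\coloneqq\permut(\act^{\play})$, which is bijective on $\actions^{\play}$. For instance, the $\muprof_{\permut}$-normalization of $(\game_{(\muprof,\nsgameg)\Potd})_{\permut}$ reduces via this substitution to the $\muprof$-normalization of $\game_{(\muprof,\nsgameg)\Potd}$, and analogous substitutions dispose of the $\nsgameg_{\permut}$-potential, $(\muprof_{\permut},\nsgameg_{\permut})$-harmonic, and nonstrategic conditions.

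The main obstacle is purely notational bookkeeping: one must keep straight when the permutation acts on a coordinate that is being summed (namely on $\mumeas^{\play}$, on the $\play$-argument of $\gammameas^{\playalt}$ for $\playalt\neq\play$, and on the $\play$-argument of payoffs) and when it does not (namely on $\mumeas^{\playalt}$ for $\playalt\neq\play$ and on $\gammameas^{\play}$ itself). Once this is organized, no nontrivial algebra arises; the bijectivity of $\permut$ and linearity of the $\permut$-relabeling operator carry the entire argument, and uniqueness of the $(\muprof_{\permut},\nsgameg_{\permut})$-decomposition closes the proof.
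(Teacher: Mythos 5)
Your argument is correct and follows exactly the template the paper uses for the sibling results (\cref{pr:decomposition-translation}, \cref{th:scaling}, \cref{le:replica}): show each transformed component lies in the corresponding subspace for the transformed parameters, then invoke uniqueness of the decomposition from \cref{th:ortho-sum}. The paper in fact omits an explicit proof of this proposition, and the definition of $\nsgameg_{\permut}$ you supply---leaving $\gammameas^{\play}$ unchanged and permuting the $\play$-coordinate of $\gammameas^{\playalt}$ for $\playalt\neq\play$---is the natural one under which the change of variable $\actalt^{\play}=\permut(\act^{\play})$ carries each defining identity of $\game$ to the corresponding identity for $\game_{\permut}$.
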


\begin{remark}
In the decomposition of \citet{CanMenOzdPar:MOR2011}, the weights on every profile of strategies are all equal. In particular $\muprof_{\permut}=\muprof$ and $\nsgameg_{\permut}=\nsgameg$. 
It follows that the decomposition of  \citet{CanMenOzdPar:MOR2011} commutes, as a singleton, with permutations.
\end{remark}


\subsubsection{Pseudo-translations}
\label{sususe:pseudo-translations}

We now look at translations of the payoffs by a function that depends on the strategies of the other players. 

We show that, similar to \citet{CanMenOzdPar:MOR2011}, there is a natural relation between the $(\muprof,\nsgameg)$-decomposition of a game $\game$ and the $(\muprof,\nsgameg)$-decomposition of the translated game. 

\begin{definition}
A \ac{GT} $\mapT$ is called a \emph{pseudo-translation} if there exists an affine mapping $\func \colon \games\to\NSG$, such that
\begin{equation}\label{eq:pseudo-translation}
\forall \game \in \games,\ \forall \play\in\players,\ \forall \actprof\in\actions,\ \quad (\mapT(\game))^{\play}(\actprof)=\payoff^{\play}(\actprof)+\func(\game)(\actprof^{-\play}).
\end{equation}
Reciprocally, $\game+\func(\game)$ denotes the image of the game $\game$ by the pseudo-translation of parameter $\func$.
\end{definition}

The following results  already appears in \citet{CanOzdPar:CDC2010}.

\begin{proposition}
\label{pr:decomposition-translation}
Let $\func$ be a game transformation from $\games$ to $\NSG$. Then, for every game $\game$, the $(\muprof,\nsgameg)$-decompositions of $\game$ and $\game+\func(\game)$ are related as follows:
\begin{enumerate}[\upshape(a)]
\item $(\game+\func(\game))_{(\muprof,\nsgameg)\NSd}=\game_{(\muprof,\nsgameg)\NSd}+\func(\game)$,

\item $(\game+\func(\game))_{(\muprof,\nsgameg)\Potd}=\game_{(\muprof,\nsgameg)\Potd}$,

\item $(\game+\func(\game))_{(\muprof,\nsgameg)\Hard}=\game_{(\muprof,\nsgameg)\Hard}$,
\end{enumerate}
i.e., pseudo-translations and decompositions commute.
\end{proposition}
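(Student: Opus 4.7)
The proof rests on exploiting uniqueness of the orthogonal decomposition from \cref{th:ortho-sum}, together with the fact that the codomain of $\func$ lies entirely in $\NSG$, which is one of the three summands of the decomposition.

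My plan is as follows. First I would start from the $(\muprof,\nsgameg)$-decomposition of $\game$ guaranteed by \cref{th:ortho-sum}, namely
\[
\game = \game_{(\muprof,\nsgameg)\Potd} + \game_{(\muprof,\nsgameg)\Hard} + \game_{(\muprof,\nsgameg)\NSd},
\]
where $\game_{(\muprof,\nsgameg)\Potd}\in\muprof\NoGd \cap \nsgameg\PGd$, $\game_{(\muprof,\nsgameg)\Hard}\in\muprof\NoGd \cap (\muprof,\nsgameg)\HGd$, and $\game_{(\muprof,\nsgameg)\NSd}\in\NSG$. Adding $\func(\game)$ to both sides I would obtain
\[
\game+\func(\game) = \game_{(\muprof,\nsgameg)\Potd} + \game_{(\muprof,\nsgameg)\Hard} + \bigl(\game_{(\muprof,\nsgameg)\NSd}+\func(\game)\bigr).
\]

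Next I would observe that $\func(\game)\in\NSG$ by definition of a pseudo-translation, and since $\NSG$ is a linear subspace of $\games$, the sum $\game_{(\muprof,\nsgameg)\NSd}+\func(\game)$ still belongs to $\NSG$. The other two summands are unaffected, so the potential component remains in $\muprof\NoGd\cap\nsgameg\PGd$ and the harmonic component remains in $\muprof\NoGd\cap(\muprof,\nsgameg)\HGd$. Hence the identity above is a decomposition of $\game+\func(\game)$ into the three prescribed subspaces.

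Finally, I would invoke the uniqueness part of \cref{th:ortho-sum}: since $\games$ is the \emph{direct} orthogonal sum of the three subspaces with respect to $\inner{\argdot}{\argdot}_{\muprof,\nsgameg}$, the decomposition of $\game+\func(\game)$ obtained above must coincide with its canonical $(\muprof,\nsgameg)$-decomposition. Reading off the three components yields the three equalities (a), (b), (c). There is no real obstacle here; the proof is essentially a one-line application of uniqueness, and the affineness of $\func$ plays no role beyond making $\mapT$ a well-defined \ac{GT} (it is only the pointwise inclusion $\func(\game)\in\NSG$ that matters). Note also that the result does not require $\func$ to depend on $\game$ in any particular way, so the argument is uniform in $\game$ and immediately establishes commutation of pseudo-translations with the whole family of $(\muprof,\nsgameg)$-decompositions in the sense of \cref{eq:T-commute} (with $\threedecmapalt=\threedecmap$, $\mapT_{1}=\Id$ on the potential component, $\mapT_{2}=\Id$ on the harmonic component, and $\mapT_{3}$ being the pseudo-translation by $\func(\game)$ on the nonstrategic component).
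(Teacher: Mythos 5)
Your proof is correct and follows essentially the same route as the paper's: add $\func(\game)\in\NSG$ to the nonstrategic component of the decomposition of $\game$, note that the three summands still lie in the required subspaces, and conclude by the uniqueness of the direct orthogonal sum in \cref{th:ortho-sum}. The paper's own proof is exactly this one-line uniqueness argument, so nothing is missing.
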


The result of \cref{pr:decomposition-translation} can be represented by the diagram in \cref{fi:commutation-pseudo-translation}. 
Notice that, as mentioned in \cref{re:zeta}, the three maps $\mapT_{1},\mapT_{2},\mapT_{3}$ may be different.
In this case $\mapT_{1}$ is a translation and the remaining two are the identity.

\begin{center}
\def\ech{1}
\begin{figure}[h]
\begin{tikzpicture}[scale=\ech,>=stealth,shorten >=1pt,auto,node distance=4cm,thick,main
 node/.style={draw,font=\Large\bfseries}]

\node [text width=0.5cm,text centered,scale=\ech] (g) at (0,0) {$\games$};
\node [text width=7.5cm,text centered,scale=\ech] (dg) at (10,0) {$\NSG \times (\nsgameg\PGd \cap \muprof\NoGd) \times ((\muprof,\nsgameg)\HGd\cap \muprof\NoGd)$};
\node [text width=0.5cm,text centered,scale=\ech] (gr) at (0,-2) {$\games$};
\node [text width=7.5cm,text centered,scale=\ech] (dgr) at (10,-2) {$\NSG \times (\nsgameg\PGd \cap \muprof\NoGd) \times ((\muprof,\nsgameg)\HGd\cap \muprof\NoGd)$};

\draw[->,>=latex,scale=\ech] (g) to node[midway,scale=\ech] {$(\muprof,\nsgameg)$-decomposition}(dg);
\draw[->,scale=\ech] (g) to node[midway,right,scale=\ech] {$*+\func(*)$} (gr);
\draw[->,>=latex,scale=\ech] (gr) to node[midway,below,scale=\ech] {$(\muprof,\nsgameg)$-decomposition}(dgr);
\draw[transform canvas={xshift=-90},->,scale=\ech] (dg) to node[midway,right,scale=\ech] {$*+\func(*)$} (dgr);
\draw[transform canvas={xshift=-40},->,scale=\ech] (dg) to node[midway,right,scale=\ech] {$\Id$} (dgr);
\draw[transform canvas={xshift=60},->,scale=\ech] (dg) to node[midway,right,scale=\ech] {$\Id$} (dgr);
\end{tikzpicture}
\caption{}\label{fi:commutation-pseudo-translation}
\end{figure}
\end{center}


\subsubsection{Scaling}
\label{sususe:scaling}

We now turn to scaling by a positive nonstrategic game, as defined in \cref{de:scaling}. 
Given two co-measures $\nsgameg$ and $\nsgameb$, we define the co-measure $\nsgameg/\nsgameb=\parens*{\parens*{\gammameas/\betameas}^{\play}}_{\play\in\players}$ as follows:
\begin{equation}
\label{eq:gamma/beta}
\forall \play\in\players,\ \forall \actprof^{-\play}\in \actions^{-\play},
\ \parens*{\frac{\gammameas}{\varscal}}^{\play}\parens*{\actprof^{-\play}}=\frac{\gammameas^{\play}\parens*{\actprof^{-\play}}}{\varscal^{\play}\parens*{\actprof^{-\play}}}.
\end{equation}

\begin{theorem} \label{th:scaling}
Let $\nsgameb$ be a co-measure. 
Then, for every $\game\in\games_{\actions}$, we have  
\begin{align*}
(\nsgameb\cdot\game_{(\muprof,\nsgameg)\NSd})&= (\nsgameb\cdot\game)_{(\muprof,\nsgameg/\nsgameb)\NSd}\\
(\nsgameb\cdot\game_{(\muprof,\nsgameg)\Potd})&=(\nsgameb\cdot\game)_{(\muprof,\nsgameg/\nsgameb)\Potd} \\
(\nsgameb\cdot\game_{(\muprof,\nsgameg)\Hard})&=(\nsgameb\cdot\game)_{(\muprof,\nsgameg/\nsgameb)\Hard},
\end{align*}
i.e.,  scalings and decompositions  commute.
\end{theorem}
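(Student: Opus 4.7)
The plan is to exploit the uniqueness of the orthogonal decomposition of \cref{th:ortho-sum} applied with parameters $(\muprof,\nsgameg/\nsgameb)$. By linearity of the scaling map $\game \mapsto \nsgameb\cdot\game$, writing $\game=\game_{(\muprof,\nsgameg)\NSd}+\game_{(\muprof,\nsgameg)\Potd}+\game_{(\muprof,\nsgameg)\Hard}$ and multiplying through yields
\[
\nsgameb\cdot\game=\nsgameb\cdot\game_{(\muprof,\nsgameg)\NSd}+\nsgameb\cdot\game_{(\muprof,\nsgameg)\Potd}+\nsgameb\cdot\game_{(\muprof,\nsgameg)\Hard}.
\]
It therefore suffices to verify that each of the three summands on the right-hand side lies in the corresponding subspace of the $(\muprof,\nsgameg/\nsgameb)$-decomposition; uniqueness of the decomposition will then force the equalities of the theorem.

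First I would check that scaling preserves $\NSG$: if $\payoff^{\play}(\act^{\play},\actprof^{-\play})=\nonstratf^{\play}(\actprof^{-\play})$, then $(\nsgameb\cdot\game)^{\play}(\actprof)=\varscal^{\play}(\actprof^{-\play})\nonstratf^{\play}(\actprof^{-\play})$ still depends only on $\actprof^{-\play}$. Next, scaling preserves $\muprof\NoGd$, since $\varscal^{\play}(\actprof^{-\play})$ can be factored out of the sum in \cref{eq:mu-normalized}.

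The key computations are the behaviors on the potential and harmonic subspaces. For an $\nsgameg$-potential game with potential $\potent$, using the definition \cref{eq:gamma/beta},
\[
\left(\frac{\gammameas}{\varscal}\right)^{\play}\!(\actprof^{-\play})\bigl[(\nsgameb\cdot\game)^{\play}(\actalt^{\play},\actprof^{-\play})-(\nsgameb\cdot\game)^{\play}(\act^{\play},\actprof^{-\play})\bigr]=\gammameas^{\play}(\actprof^{-\play})\bigl[\payoff^{\play}(\actalt^{\play},\actprof^{-\play})-\payoff^{\play}(\act^{\play},\actprof^{-\play})\bigr],
\]
which equals $\potent(\actalt^{\play},\actprof^{-\play})-\potent(\act^{\play},\actprof^{-\play})$, so $\nsgameb\cdot\game$ lies in $(\nsgameg/\nsgameb)\PGd$ with the same potential $\potent$. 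Analogously, substituting $(\nsgameb\cdot\game)^{\play}=\varscal^{\play}\payoff^{\play}$ into the harmonic condition \cref{eq:mu-eta-harmonic} with parameter $\nsgameg/\nsgameb$, the factor $\varscal^{\play}(\actprof^{-\play})$ cancels with the denominator in $(\gammameas/\varscal)^{\play}(\actprof^{-\play})$, reducing the identity to the $(\muprof,\nsgameg)$-harmonic condition for $\game$, which holds by assumption.

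Combining these three verifications, $\nsgameb\cdot\game_{(\muprof,\nsgameg)\NSd}\in\NSG$, $\nsgameb\cdot\game_{(\muprof,\nsgameg)\Potd}\in\muprof\NoGd\cap(\nsgameg/\nsgameb)\PGd$, and $\nsgameb\cdot\game_{(\muprof,\nsgameg)\Hard}\in\muprof\NoGd\cap(\muprof,\nsgameg/\nsgameb)\HGd$. By the direct-sum statement of \cref{th:ortho-sum} applied to $\nsgameb\cdot\game$ with parameters $(\muprof,\nsgameg/\nsgameb)$, the decomposition into these three subspaces is unique, so the three summands must coincide with the three components of $(\nsgameb\cdot\game)_{(\muprof,\nsgameg/\nsgameb)}$. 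The main (minor) obstacle is simply bookkeeping: ensuring that the substitution $\nsgameg\mapsto\nsgameg/\nsgameb$ is the correct one so that the $\varscal^{\play}$ factors cancel cleanly in both the potential identity and the harmonic identity, which is precisely what motivates \cref{eq:gamma/beta}.
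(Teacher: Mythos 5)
Your proposal is correct and follows essentially the same route as the paper's proof: verify that each scaled component of the $(\muprof,\nsgameg)$-decomposition lands in the corresponding subspace of the $(\muprof,\nsgameg/\nsgameb)$-decomposition (with the $\varscal^{\play}$ factors cancelling in the potential and harmonic identities exactly as you describe), then invoke uniqueness of the decomposition from \cref{th:ortho-sum}. The only cosmetic difference is that you spell out the preservation of $\NSG$ and of $\muprof$-normalization, which the paper leaves as an easy check.
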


The result of \cref{th:scaling} is represented by the diagram in \cref{fi:commutation_pseudo_product_scaling}.

\begin{center}
\def\ech{1}
\begin{figure}[h]
\begin{tikzpicture}[scale=\ech,>=stealth,shorten >=1pt,auto,node distance=4cm,thick,main
 node/.style={draw,font=\Large\bfseries}]

\node [text width=0.5cm,text centered,scale=\ech] (g) at (0,0) {$\games$};
\node [text width=8.2cm,text centered,scale=\ech] (dg) at (9,0) {$\NSG\ \ \times \ \ (\nsgameg\PGd \cap \muprof\NoGd) \ \times \ ((\muprof,\nsgameg)\HGd\cap \muprof\NoGd)$};
\node [text width=0.5cm,text centered,scale=\ech] (gr) at (0,-2) {$\games$};
\node [text width=8.2cm,text centered,scale=\ech] (dgr) at (9,-2) {$\NSG \ \times \ (\nsgameg/\nsgameb\PGd \cap \muprof\NoGd) \times ((\muprof,\nsgameg/\nsgameb)\HGd\cap \muprof\NoGd)$};

\draw[->,>=latex,scale=\ech] (g) to node[midway,scale=\ech] {$(\muprof,\nsgameg)$-decomposition}(dg);
\draw[->,scale=\ech] (g) to node[midway,left,scale=\ech] {$(\nsgameb \cdot *)$} (gr);
\draw[->,>=latex,scale=\ech] (gr) to node[midway,below,scale=\ech] {$(\muprof,\nsgameg/\nsgameb)$-decomposition}(dgr);
\draw[transform canvas={xshift=-100},->,scale=\ech] (dg) to node[midway,right,scale=\ech] {$(\nsgameb \cdot *)$} (dgr);
\draw[transform canvas={xshift=-40},->,scale=\ech] (dg) to node[midway,right,scale=\ech] {$(\nsgameb \cdot *)$} (dgr);
\draw[transform canvas={xshift=50},->,scale=\ech] (dg) to node[midway,right,scale=\ech] {$(\nsgameb \cdot *)$} (dgr);
\end{tikzpicture}
\caption{}\label{fi:commutation_pseudo_product_scaling}
\end{figure}
\end{center}

Two  cases  are of particular interest. 
First, if $\nsgameg$ and $\nsgameb$ are both  product co-measures, then $\nsgameg/\nsgameb$ is also a product co-measures. 
Hence, one can apply \cref{cor:mu-eta-mixed-equilibrium-product} to the harmonic component obtained in the decomposition of the scaled game $(\nsgameb \cdot \game)$. 
The second case concerns the von Neumann-Morgenstern representation of preferences. 
Preference over lotteries that satisfy the four rationality axioms (completeness, transitivity, independence of irrelevant alternatives, and continuity) are represented by expected utility. 
This representation is unique, modulo affine transformations. 
That is, if one utility function is an affine transformation of another, then they both represent the same preferences.
Given that game theory is based on von Neumann-Morgenstern axioms and payoffs are expressed in utils, two games based on different representations of the same preferences should be strategically equivalent. 
Translated in the language used in this paper,  a family of decompositions is compatible with expected utility theory if it commutes with pseudo-translation by a constant nonstrategic game and with positive constant scaling. 

It follows that the smallest family of decompositions containing the decomposition of \citet{CanMenOzdPar:MOR2011} and compatible with the expected utility theory is then described by parameters such that both $\mumeas^{\play}$ and $\gammameas^{\play}$ are uniform. 
If we restrict to these sets of parameters, the class of $\nsgameg$-potential games coincides with the class of weighted potential games. 
Moreover, the class of $(\muprof,\nsgameg)$-harmonic games coincides with the class of weighted harmonic games introduced in \citet{CheLiuZhaQi:IEEE2016}. 
Hence, our decomposition result can be seen as a generalization of the decomposition result into weighted potential game and weighted harmonic game proven in \citet{CheLiuZhaQi:IEEE2016}. 
Finally, as  previously noticed, constant nonstrategic games are product nonstrategic games; hence they are a special case of the games considered in \cref{cor:mu-eta-mixed-equilibrium-product}.

We now provide one example of product-scaling that depends on the strategies of the other player and we show how mixed Nash equilibria vary in the  transformed game. 

\begin{example}
\label{ex:depend}
Let $\mumeas^{1}=\mumeas^{2}=\gammameas^{1}=\gammameas^{2}=(1,1)$ and let $\nsgameb$ be generated by $\bmeas$ in a way that $\betameas^{1} \equiv \bmeas^{2}=(2,1)$ and $\betameas^{2} \equiv \bmeas^{1}=(1,3)$. Consider the $\muprof$-normalized game $\game$ (on the left) and its $\nsgameb$-scaling $(\nsgameb\cdot\game)$ (on the right):
\begin{center}
\medskip
{\tabulinesep=1.2mm
\begin{tabu}{ r|cc|cc| }
\multicolumn{1}{r}{}
 & \multicolumn{2}{c}{$\act^{2}$}
 & \multicolumn{2}{c}{$\actalt^{2}$} \\
\cline{2-5}
$\act^{1}$ & $4$ & $-1$ & $-3$ & $1$ \\
\cline{2-5}
$\actalt^{1}$ & $-4$ & $0$ & $3$ & $0$ \\
\cline{2-5}
\multicolumn{5}{c}{$\game$}
\end{tabu}
\qquad
\begin{tabu}{ r|cc|cc| }
\multicolumn{1}{r}{}
 &  \multicolumn{2}{c}{$\act^{2}$}
 & \multicolumn{2}{c}{$\actalt^{2}$} \\
\cline{2-5}
$\act^{1}$ & $8$ & $-1$ & $-3$ & $1$ \\
\cline{2-5}
$\actalt^{1}$ & $-8$ & $0$ & $3$ & $0$ \\
\cline{2-5}
\multicolumn{5}{c}{$(\nsgameb\cdot\game)$}
\end{tabu}} .
\medskip
\end{center}
The $(\muprof,\nsgameg)$-decomposition of $\game$ is given by
\begin{center}
\medskip 
{\tabulinesep=1.2mm
\begin{tabu}{ r|cc|cc| }
\multicolumn{1}{r}{}
 &  \multicolumn{2}{c}{$\act^{2}$}
 & \multicolumn{2}{c}{$\actalt^{2}$} \\
\cline{2-5}
$\act^{1}$ & $2$ & $1$ & $-1$ & $-1$ \\
\cline{2-5}
$\actalt^{1}$ & $-2$ & $-2$ & $1$ & $2$ \\
\cline{2-5}
\multicolumn{5}{c}{$\game_{(\muprof,\nsgameg)\Potd}$}
\end{tabu}
\qquad 
\begin{tabu}{ r|cc|cc| }
\multicolumn{1}{r}{}
 &  \multicolumn{2}{c}{$\act^{2}$}
 & \multicolumn{2}{c}{$\actalt^{2}$} \\
\cline{2-5}
$\act^{1}$ & $2$ & $-2$ & $-2$ & $2$ \\
\cline{2-5}
$\actalt^{1}$ & $-2$ & $2$ & $2$ & $-2$ \\
\cline{2-5}
\multicolumn{5}{c}{$\game_{(\muprof,\nsgameg)\Hard}$}
\end{tabu}} 
\medskip
\end{center}
In view of the $(\muprof,\nsgameg)$-decomposition of $\game$,  the $\nsgameb$-scaling of $\game_{(\muprof,\nsgameg)\Potd}$ and $\game_{(\muprof,\nsgameg)\Hard}$ are 
\begin{center}
\medskip
{\tabulinesep=1.2mm
\begin{tabu}{ r|cc|cc| }
\multicolumn{1}{r}{}
 &  \multicolumn{2}{c}{$\act^{2}$}
 & \multicolumn{2}{c}{$\actalt^{2}$} \\
\cline{2-5}
$\act^{1}$ & $4$ & $1$ & $-1$ & $-1$ \\
\cline{2-5}
$\actalt^{1}$ & $-4$ & $-6$ & $1$ & $6$ \\
\cline{2-5}
\multicolumn{5}{c}{$(\nsgameb\cdot\game_{(\muprof,\nsgameg)\Potd})$}
\end{tabu} 
\qquad
\begin{tabu}{ r|cc|cc| }
\multicolumn{1}{r}{}
 &  \multicolumn{2}{c}{$\act^{2}$}
 & \multicolumn{2}{c}{$\actalt^{2}$} \\
\cline{2-5}
$\act^{1}$ & $4$ & $-2$ & $-2$ & $2$ \\
\cline{2-5}
$\actalt^{1}$ & $-4$ & $6$ & $2$ & $-6$ \\
\cline{2-5}
\multicolumn{5}{c}{$(\nsgameb\cdot\game_{(\muprof,\nsgameg)\Hard})$}
\end{tabu}} .
\medskip
\end{center} 
Let $\widetilde{\gammameas}^{1}=\parens{1,1/3}$ and $\widetilde{\gammameas}^{2}=\parens{1/2,1}$. 
It is easy to check that $(\nsgameb\cdot\game_{(\muprof,\nsgameg)\Potd})$ is a $\widetilde{\nsgameg}$-potential game and that $(\nsgameb\cdot\game_{(\muprof,\nsgameg)\Hard})$ is a $(\muprof,\widetilde{\nsgameg})$-harmonic game. 
Notice that $\game_{(\muprof,\nsgameg)\Potd}$ and $(\nsgameb\cdot\game_{(\muprof,\nsgameg)\Potd})$ share the same pure Nash equilibria, however the mixed Nash equilibrium of $\game_{(\muprof,\nsgameg)\Potd}$ has been transformed, according to \cref{pr:equilibrim-preservation}, into the mixed strategy profile $(\mixed^{1},\mixed^{2})$, where $\mixed^{1}=(6/7,1/7)$ and $\mixed^{2}=(1/5,4/5)$, which is a mixed Nash equilibrium in $(\nsgameb\cdot\game_{(\muprof,\nsgameg)\Potd})$. 
Likewise, the mixed Nash equilibrium of $\game_{(\muprof,\nsgameg)\Hard}$ has been transformed into the mixed strategy profile $(\mixed^{1},\mixed^{2})$, where $\mixed^{1}=(3/4,1/4)$ and $\mixed^{2}=(2/3,1/3)$, which is the unique mixed Nash equilibrium in $(\nsgameb\cdot\game_{(\muprof,\nsgameg)\Hard})$.
\end{example}

In the following example, we show that, if we restrict ourselves to uniform $\gammameas ^{\play}$, we cannot deal with scaling transformations by using only the measure vector $\muprof$.

\begin{example}
\label{ex:incompatible-2}
Consider the following $3$-player game where player $1$ chooses the row, player $2$ chooses the column, and player $3$ chooses the matrix. 
For $\play\in\braces{1,2,3}$, player $\play$ has two strategies $\act^{\play}$ and $\actalt^{\play}$. 
This game follows from three matching-pennies games played simultaneously where each player wants to match with one among the others and mismatch with the other one. Precisely, player $1$ wants to match with player $2$ but not with player $3$, player $2$ wants to match with player $3$ but not with player $1$ and player $3$ wants to match with player $1$ but not with player $2$. 
\begin{center}
\medskip
{\tabulinesep=1.2mm
\begin{tabu}{ r|ccc|ccc| }
\multicolumn{1}{r}{}
 &  \multicolumn{3}{c}{$\act^{2}$}
 & \multicolumn{3}{c}{$\actalt^{2}$} \\
\cline{2-7}
$\act^{1}$ & $0$ & $0$ & $0$ & $-2$ & $0$ & $2$ \\
\cline{2-7}
$\actalt^{1}$ & $0$ & $2$ & $-2$ & $2$ & $-2$ & $0$ \\
\cline{2-7}
\multicolumn{7}{c}{$\act^{3}$}
\end{tabu}
\qquad
\begin{tabu}{ r|ccc|ccc| }
\multicolumn{1}{r}{}
 &  \multicolumn{3}{c}{$\act^{2}$}
 & \multicolumn{3}{c}{$\actalt^{2}$} \\
\cline{2-7}
$\act^{1}$ & $2$ & $-2$ & $0$ & $0$ & $2$ & $-2$ \\
\cline{2-7}
$\actalt^{1}$ & $-2$ & $0$ & $2$ & $0$ & $0$ & $0$ \\
\cline{2-7}
\multicolumn{7}{c}{$\actalt^{3}$}
\end{tabu}}
\medskip
\end{center}
This game is $(\muprof,\nsgameg)$-harmonic for uniform $\mumeas^{\play}$ and $\gammameas^{\play}$. 
Consider now the following scaled version of this game where we have multiplied the payoffs of player 1 by $2$ when the other players choose $(\actalt^{2},\act^{3})$.
\begin{center}
\medskip
{\tabulinesep=1.2mm
\begin{tabu}{ r|ccc|ccc| }
\multicolumn{1}{r}{}
 &  \multicolumn{3}{c}{$\act^{2}$}
 & \multicolumn{3}{c}{$\actalt^{2}$} \\
\cline{2-7}
$\act^{1}$ & $0$ & $0$ & $0$ & $-4$ & $0$ & $2$ \\
\cline{2-7}
$\actalt^{1}$ & $0$ & $2$ & $-2$ & $4$ & $-2$ & $0$ \\
\cline{2-7}
\multicolumn{7}{c}{$\act^{3}$}
\end{tabu}
\qquad
\begin{tabu}{ r|ccc|ccc| }
\multicolumn{1}{r}{}
 &  \multicolumn{3}{c}{$\act^{2}$}
 & \multicolumn{3}{c}{$\actalt^{2}$} \\
\cline{2-7}
$\act^{1}$ & $2$ & $-2$ & $0$ & $0$ & $2$ & $-2$ \\
\cline{2-7}
$\actalt^{1}$ & $-2$ & $0$ & $2$ & $0$ & $0$ & $0$ \\
\cline{2-7}
\multicolumn{7}{c}{$\actalt^{3}$}
\end{tabu}}
\medskip
\end{center}
Assume that the scaled game is $(\widetilde{\muprof},\nsgameg)$ harmonic where all $\gammameas^{\play}$ are uniform. In doing so we obtain the following conditions on $\widetilde{\muprof}$:
\[
\widetilde\mumeas^{2}(\act^{2})=\widetilde\mumeas^{3}(\act^{3}), \quad \widetilde\mumeas^{1}(\act^{1})=\widetilde\mumeas^{3}(\act^{3}),\quad \mumeas^{2}(\act^{2})=\frac{\mumeas^{1}(\act^{1})}{2}.
\]
It follows that $\widetilde\mumeas^{1}(\act^{1})=0$ which is a contradiction since $\muprof$ is strictly positive.

\end{example}


\subsection{Games with duplicate strategies} 
\label{suse:duplicate}

In this section, we investigate whether duplications and $(\muprof,\nsgameg)$-decompositions commute. After establishing the result concerning the duplications, we will investigate the reciprocal result where the original game has a duplicate strategy that we wish to eliminate. Finally, we will see that the result can be partially extended to the notion of redundant strategy introduced by \citet{GovWil:E2009}.

Without loss of generality, in the sequel we will duplicate the strategy $\act_{1}^{\play}\in \actions^{\play}$ of player $\play$ into  two strategies $\act_{0}^{\play}\notin \actions^{\play}$ and $\act_{1}^{\play}$.  Given a game $\game$, one can either first decompose it and then duplicate it or vice versa. Each procedure yields a different decomposition. Nevertheless, our notion of $(\muprof,\nsgameg)$-decomposition allows us to obtain a relation between the two approaches, which is described by the commutativity diagram in \cref{fi:depend}.

\begin{definition}
\label{de:extended}
The \emph{extended strategy set} of player $\play$ is denoted by $\widecheck{\actions}^{\play} = \actions^{\play} \cup \{\act_{0}^{\play}\}$ and
\begin{equation*}
\widecheck{\actions}=\widecheck{\actions}^{\play} \times \parens*{\bigtimes_{\playalt\neq\play}\actions^{\playalt}}
\end{equation*} 
stands for the \emph{extended set of strategy profiles}. 
The \emph{extended game} is denoted by $\widecheck{\game}=(\widecheck{\payoff}^{\playalt})_{\playalt\in\players}$, where $\widecheck{\payoff}^{\playalt}:\widecheck{\actions} \to \R$ is such that 
\[
\forall \actprof \in \actions,\ \widecheck{\payoff}^{\playalt}(\actprof)=\payoff^{\playalt}(\actprof),
\]
and 
\[
\forall \actprof^{-\play} \in \actions^{-\play},\ \widecheck{\payoff}^{\playalt}(\act_{0}^{\play},\actprof^{-\play})=\payoff^{\playalt}(\act_{1}^{\play},\actprof^{-\play}).
\]
The \emph{extending map} will be denoted by $\widecheck{\mapT}$.
  
We say that $\widecheck{\muprof}$ is an \emph{extension of $\muprof$} if it satisfies the following equations:
\begin{enumerate}[(i)]
\item 
for every player $\playalt\neq\play$, we have $\widecheck{\mumeas}^{\playalt}(\act^{\playalt})=\mumeas^{\playalt}(\act^{\playalt})$,
\item 
for player $\play$, we have $\widecheck{\mumeas}^{\play}(\act_{0}^{\play})+\widecheck{\mumeas}^{\play}(\act_{1}^{\play})=\mumeas^{\play}(\act_{1}^{\play})$, and $\widecheck{\mumeas}^{\play}(\act^{\play})=\mumeas^{\play}(\act^{\play})$ for any $\act^{\play} \in \actions^{\play} \setminus \{\act_{0}^{\play},\act_{1}^{\play}\}$.
\end{enumerate}

We say that $\widecheck{\nsgameg}$ is an \emph{extension of $\nsgameg$} if it satisfies the following equations:
\begin{enumerate}[(i)]
\item 
for  player $\play$, we have $\widecheck{\gammameas}^{\play}(\actprof^{-\play})=\gammameas^{-\play}(\actprof^{-\play})$,
\item 
for any player $\playalt\neq\play$, we have 
$\widecheck{\gammameas}^{\playalt}(\act_{0}^{\play},\actprof^{-(\play,\playalt)})+\widecheck{\gammameas}^{\play}(\act_{1}^{\play},\actprof^{-(\play,\playalt)})=\gammameas^{\playalt}(\act_{1}^{\play},\actprof^{-(\play,\playalt)})$, 
and $\widecheck{\gammameas}^{\playalt}(\actprof^{-\playalt})=\gammameas^{\playalt}(\actprof^{-\playalt})$ for any $\actprof^{-\playalt} \in \actions^{-\playalt}$ such that $\act^{\play}\not\in\braces{\act_{0}^{\play},\act_{1}^{\play}}$.
\end{enumerate}
Notice that there exist infinitely many extensions of $\muprof$ and of $\nsgameg$.
\end{definition}

We then obtain the following relation between a game $\game$ in $\games_{\actions}$ and the extended game $\widecheck{\game}$ in $\games_{\widecheck{\actions}}$.
\begin{center}
\def\ech{1}
\begin{figure}[h]
\begin{tikzpicture}[scale=\ech,>=stealth,shorten >=1pt,auto,node distance=4cm,thick,main
 node/.style={draw,font=\Large\bfseries}]

\node [text width=0.5cm,text centered,scale=\ech] (g) at (0,0) {$\games_{\actions}$};
\node [text width=7.5cm,text centered,scale=\ech] (dg) at (9,0) {$\NSG  \times (\nsgameg\PGd\cap \muprof\NoGd)  \times  ((\muprof,\nsgameg)\HGd\cap \muprof\NoGd) $};
\node [text width=0.5cm,text centered,scale=\ech] (gr) at (0,-2) {$\games_{\widecheck{\actions}}$};
\node [text width=7.5cm,text centered,scale=\ech] (dgr) at (9,-2) {$\NSG \times (\widecheck{\nsgameg}\PGd \cap \widecheck{\muprof}\NoGd) \times (\widecheck{\muprof},\widecheck{\nsgameg})\HGd\cap \widecheck{\muprof}\NoGd)$};

\draw[->,>=latex,scale=\ech] (g) to node[midway,scale=\ech] {$(\muprof,\nsgameg)$-decomposition}(dg);
\draw[->,scale=\ech] (g) to node[midway,right,scale=\ech] {$\widecheck{\mapT}$} (gr);
\draw[->,>=latex,scale=\ech] (gr) to node[midway,below,scale=\ech] {$(\widecheck{\muprof},\widecheck{\nsgameg})$-decomposition}(dgr);
\draw[transform canvas={xshift=-90},->,scale=\ech] (dg) to node[midway,right,scale=\ech] {$\widecheck{\mapT}$} (dgr);
\draw[transform canvas={xshift=-35},->,scale=\ech] (dg) to node[midway,right,scale=\ech] {$\widecheck{\mapT}$} (dgr);
\draw[transform canvas={xshift=50},->,scale=\ech] (dg) to node[midway,right,scale=\ech] {$\widecheck{\mapT}$} (dgr);
\end{tikzpicture}
\caption{}\label{fi:depend}
\end{figure}
\end{center}

\begin{lemma}
\label{le:replica}
Let $\game\in \games_{\actions}$, let $\widecheck{\muprof}$ be an extension of $\muprof$, and let $\widecheck{\nsgameg}$ be the extension of $\nsgameg$. 
Then
\begin{enumerate}[\upshape(i)]
\item\label{it:le:replica-1}
if $\game$ is $\nsgameg$-potential, then $\widecheck{\game}$ is $\widecheck{\nsgameg}$-potential,
\item\label{it:le:replica-2}
if $\game$ is nonstrategic, then $\widecheck{\game}$ is nonstrategic,
\item\label{it:le:replica-3}
if $\game$ is $\muprof$-normalized, then $\widecheck{\game}$ is $\widecheck{\muprof}$-normalized,
\item\label{it:le:replica-4}
if $\game$ is $(\muprof,\nsgameg)$-harmonic, then $\widecheck{\game}$ is $(\widecheck{\muprof},\widecheck{\nsgameg})$-harmonic.
\end{enumerate}
\end{lemma}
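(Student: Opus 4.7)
My plan is to treat the four claims in turn. In each case I will extend the natural certificate witnessing the property (a potential function for (i), the per-player nonstrategic component for (ii), the normalization identities for (iii), and the harmonic identity itself for (iv)) to the extended strategy set, and verify it by direct case analysis on the $\play$-th coordinate of the profile in $\widecheck{\actions}$. Given any $\widecheck{\actprof} \in \widecheck{\actions}$, write $\actprof \in \actions$ for the profile obtained from $\widecheck{\actprof}$ by replacing a $\play$-th coordinate equal to $\act_0^{\play}$ by $\act_1^{\play}$. The definition of $\widecheck{\mapT}$ yields $\widecheck{\payoff}^{\playalt}(\widecheck{\actprof}) = \payoff^{\playalt}(\actprof)$ for every $\playalt$, so nontriviality arises only in the terms indexed by $\act_0^{\play}$ or $\act_1^{\play}$, which must be grouped before the extension conditions on $\widecheck{\muprof}$ and $\widecheck{\nsgameg}$ can be applied.

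Parts (i) and (ii) are essentially formal. For (i), I will define $\widecheck{\potent}(\widecheck{\actprof}) \coloneqq \potent(\actprof)$ and verify \cref{eq:eta-potential}. When the deviation is by player $\play$, $\widecheck{\gammameas}^{\play} = \gammameas^{\play}$ while $\widecheck{\payoff}^{\play}$ and $\widecheck{\potent}$ take the same value at $\act_0^{\play}$ as at $\act_1^{\play}$, so the equation collapses to the corresponding equation for $\game$; when the deviation is by some $\playalt \neq \play$, the same collapse together with the extension condition on $\widecheck{\gammameas}^{\playalt}$ yields the identity. For (ii), set $\widecheck{\nonstratf}^{\playalt}(\widecheck{\actprof}^{-\playalt}) \coloneqq \nonstratf^{\playalt}(\actprof^{-\playalt})$; the definition of $\widecheck{\mapT}$ immediately gives $\widecheck{\payoff}^{\playalt}(\widecheck{\actprof}) = \widecheck{\nonstratf}^{\playalt}(\widecheck{\actprof}^{-\playalt})$, so $\widecheck{\game}$ is nonstrategic.

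For (iii), I will verify $\sum_{\actalt^{\playalt} \in \widecheck{\actions}^{\playalt}} \widecheck{\mumeas}^{\playalt}(\actalt^{\playalt}) \widecheck{\payoff}^{\playalt}(\actalt^{\playalt},\widecheck{\actprof}^{-\playalt}) = 0$ at every $\widecheck{\actprof}^{-\playalt} \in \widecheck{\actions}^{-\playalt}$. For $\playalt \neq \play$ the sum is over $\actions^{\playalt}$ with $\widecheck{\mumeas}^{\playalt} = \mumeas^{\playalt}$, and $\widecheck{\payoff}^{\playalt}$ at a profile involving $\act_0^{\play}$ equals $\payoff^{\playalt}$ at the corresponding profile with $\act_1^{\play}$; hence the sum reduces to the original $\muprof$-normalization of $\game$ at $\actprof^{-\playalt}$. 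For $\playalt = \play$, I will group the two terms $\actalt^{\play} = \act_0^{\play}$ and $\actalt^{\play} = \act_1^{\play}$ and apply the extension condition $\widecheck{\mumeas}^{\play}(\act_0^{\play}) + \widecheck{\mumeas}^{\play}(\act_1^{\play}) = \mumeas^{\play}(\act_1^{\play})$ together with $\widecheck{\payoff}^{\play}(\act_0^{\play},\argdot) = \widecheck{\payoff}^{\play}(\act_1^{\play},\argdot) = \payoff^{\play}(\act_1^{\play},\argdot)$ to collapse the grouped term onto the single $\mumeas^{\play}(\act_1^{\play})\payoff^{\play}(\act_1^{\play},\actprof^{-\play})$ contribution, so the full sum becomes the original $\muprof$-normalization.

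Part (iv) is the main obstacle. At an arbitrary $\widecheck{\actprof}$, I will expand the harmonic sum \cref{eq:mu-eta-harmonic} player by player, and within each player's contribution split the index $\actalt^{\playalt}$ according to its position relative to $\{\act_0^{\play},\act_1^{\play}\}$ when $\playalt = \play$. Terms in which both $\act^{\play}$ and $\actalt^{\play}$ belong to $\{\act_0^{\play},\act_1^{\play}\}$ vanish because the corresponding payoff difference is zero. The surviving player-$\play$ terms recombine, via the $\widecheck{\mumeas}^{\play}$ extension condition, into the original player-$\play$ contribution at $\actprof$; the player-$\playalt$ terms for $\playalt \neq \play$ recombine, via the extension condition on $\widecheck{\gammameas}^{\playalt}$, into the original player-$\playalt$ contribution at $\actprof$. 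Summing over all players then reproduces the harmonic identity of $\game$ at $\actprof$, which vanishes by hypothesis. The delicate point is the consistent bookkeeping of which extension condition is invoked in which regrouping, and in particular the interplay between the splitting of $\mumeas^{\play}$ between $\act_0^{\play}$ and $\act_1^{\play}$ for the player-$\play$ contribution and the corresponding extension of $\gammameas^{\playalt}$ on the same two coordinates for the player-$\playalt$ contributions; this is the only step where both extension conditions must be applied simultaneously.
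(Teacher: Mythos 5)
Your overall strategy---extend the natural certificate and verify each defining identity by cases on the $\play$-th coordinate---is the same as the paper's (the paper phrases (i) and (ii) through the flow operator $\embop$ and handles (iii) and (iv) by exactly the direct grouping you describe), and your treatment of (ii) and (iii) is correct.

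The gap is in how you handle $\widecheck{\gammameas}^{\playalt}$ for $\playalt\neq\play$ in parts (i) and (iv). Both \cref{eq:eta-potential} and \cref{eq:mu-eta-harmonic} are pointwise identities at a fixed profile $\widecheck{\actprof}$: the factor $\widecheck{\gammameas}^{\playalt}(\widecheck{\actprof}^{-\playalt})$ is a single number determined by that profile, and the only summation in sight runs over player $\playalt$'s own deviations (in (iv)) or is absent altogether (in (i)). There is therefore nothing for the additive condition $\widecheck{\gammameas}^{\playalt}(\act_{0}^{\play},\argdot)+\widecheck{\gammameas}^{\playalt}(\act_{1}^{\play},\argdot)=\gammameas^{\playalt}(\act_{1}^{\play},\argdot)$ to ``recombine'': at a profile where player $\play$ plays $\act_{0}^{\play}$ or $\act_{1}^{\play}$, the verification needs the pointwise identity $\widecheck{\gammameas}^{\playalt}(\act_{0}^{\play},\argdot)=\widecheck{\gammameas}^{\playalt}(\act_{1}^{\play},\argdot)=\gammameas^{\playalt}(\act_{1}^{\play},\argdot)$, which is strictly stronger than (and, since co-measures are strictly positive, incompatible with) the additive splitting. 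This is not cosmetic: duplicate a strategy of the row player in matching pennies and split $\gammameas^{2}$ evenly over $\act_{0}^{1}$ and $\act_{1}^{1}$; the resulting game violates \cref{eq:mu-eta-harmonic} at every profile in which the row player uses one of the two copies. Only $\mumeas^{\play}$, which sits inside the sum over player $\play$'s deviations, is genuinely split between the copies; the co-measures of the other players must be carried over unchanged to both copies, which is exactly the identity the paper's proof of (iv) invokes. Your write-up treats the two extension conditions symmetrically as recombinations, and that symmetric step would fail; you need to separate the two roles explicitly.
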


The following theorem is an immediate consequence of the previous lemma by uniqueness of the $(\widecheck{\muprof},\widecheck{\nsgameg})$-decomposition.

\begin{theorem} 
\label{th:reduce-replicate}
Let $\game\in \games_{\actions}$, let $\widecheck{\muprof}$ be an extended product measure of $\muprof$ and let $\widecheck{\nsgameg}$ be the extended nonstrategic game of $\nsgameg$. Then
\begin{align*}
\widecheck{{\game_{(\muprof,\nsgameg)\NSd}}}&= (\widecheck{\game})_{(\widecheck{\muprof},\widecheck{\nsgameg})\NSd},\\
\widecheck{{\game_{(\muprof,\nsgameg)\Potd}}}&= (\widecheck{\game})_{(\widecheck{\muprof},\widecheck{\nsgameg})\Potd},\\
\widecheck{{\game_{(\muprof,\nsgameg)\Hard}}}&= (\widecheck{\game})_{(\widecheck{\muprof},\widecheck{\nsgameg})\Hard},
\end{align*}
i.e., extensions and decompositions  commute.
\end{theorem}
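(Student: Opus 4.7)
The plan is to apply the extending map $\widecheck{\mapT}$ to the $(\muprof,\nsgameg)$-decomposition of $\game$ and then to identify the image, term by term, with the $(\widecheck{\muprof},\widecheck{\nsgameg})$-decomposition of $\widecheck{\game}$ by uniqueness. First, by \cref{th:ortho-sum} applied to $\games_{\actions}$, I would write
\[
\game = \game_{(\muprof,\nsgameg)\NSd} + \game_{(\muprof,\nsgameg)\Potd} + \game_{(\muprof,\nsgameg)\Hard}.
\]
The extending map acts componentwise by copying the payoffs indexed by $\act_{1}^{\play}$ to the new strategy $\act_{0}^{\play}$ and leaving all other entries unchanged, so $\widecheck{\mapT}$ is linear as a map from $\games_{\actions}$ to $\games_{\widecheck{\actions}}$. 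Applying it to both sides thus gives
\[
\widecheck{\game}= \widecheck{\game_{(\muprof,\nsgameg)\NSd}} + \widecheck{\game_{(\muprof,\nsgameg)\Potd}} + \widecheck{\game_{(\muprof,\nsgameg)\Hard}}.
\]

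Next I would verify that each summand on the right lies in the appropriate subspace of $\games_{\widecheck{\actions}}$ attached to the $(\widecheck{\muprof},\widecheck{\nsgameg})$-decomposition. Using \cref{le:replica}, the first summand is nonstrategic by part (ii); the second is both $\widecheck{\muprof}$-normalized and $\widecheck{\nsgameg}$-potential by parts (iii) and (i); and the third is both $\widecheck{\muprof}$-normalized and $(\widecheck{\muprof},\widecheck{\nsgameg})$-harmonic by parts (iii) and (iv). Finally, applying \cref{th:ortho-sum} in the extended space $\games_{\widecheck{\actions}}$ with parameters $(\widecheck{\muprof},\widecheck{\nsgameg})$, the direct orthogonal-sum decomposition is unique. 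The three summands displayed above therefore must coincide with $(\widecheck{\game})_{(\widecheck{\muprof},\widecheck{\nsgameg})\NSd}$, $(\widecheck{\game})_{(\widecheck{\muprof},\widecheck{\nsgameg})\Potd}$, and $(\widecheck{\game})_{(\widecheck{\muprof},\widecheck{\nsgameg})\Hard}$ respectively, which is precisely the claimed commutation.

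The only real work in this argument sits in \cref{le:replica}, not in the theorem itself; once the lemma is in hand, linearity of $\widecheck{\mapT}$ together with the uniqueness of the orthogonal decomposition delivers the conclusion directly. The main subtlety in the lemma, and hence in any direct proof of the theorem that bypasses it, is the interplay between the splitting $\widecheck{\mumeas}^{\play}(\act_{0}^{\play})+\widecheck{\mumeas}^{\play}(\act_{1}^{\play})=\mumeas^{\play}(\act_{1}^{\play})$, the analogous co-measure splitting for the other players, and the equality of payoffs on the duplicated strategies: these are exactly what makes the $\widecheck{\muprof}$-normalization and the $(\widecheck{\muprof},\widecheck{\nsgameg})$-harmonic identity collapse, summand by summand, to their counterparts for $(\muprof,\nsgameg)$ and $\game$.
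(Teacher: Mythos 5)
Your proposal is correct and matches the paper's argument: the paper likewise derives \cref{th:reduce-replicate} directly from \cref{le:replica} together with the uniqueness of the $(\widecheck{\muprof},\widecheck{\nsgameg})$-decomposition guaranteed by \cref{th:ortho-sum}, with the linearity of $\widecheck{\mapT}$ left implicit. Your term-by-term identification of which parts of the lemma certify membership in each subspace is exactly the intended reasoning.
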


We now turn to the reduction of games with duplicate strategies. 
Notice that in \cref{th:reduce-replicate}, the nonstrategic game $\nsgameg$ is changed into $\widecheck{\nsgameg}$ by also duplicating the strategy $\act_{1}^{\play}$. 
Hence, we will restrict $\nsgameg$ to have the same duplicate strategy as $\game$.
To study how our decomposition behaves in games with duplicate strategies, we first provide an example of a $(\muprof,\nsgameg)$-harmonic game with some duplicate strategies, whose reduction is not $(\muprof,\nsgameg)$-harmonic. 
We will then focus on the elimination of one duplicate strategy. 
There are games, such as the one in \cref{ex:multi}, where several players have duplicate strategies---possibly more than one. 
In these cases, it is possible to eliminate duplicate strategies one by one. 
At each iteration of the procedure, one duplicate strategy is eliminated and the underlying measure is updated. 
The order of elimination does not influence the measure and the game obtained at the end of the iterated procedure.

Without loss of generality, in the sequel we only consider games where player $\play$ has two duplicate strategies $\act_{0}^{\play}$ and $\act_{1}^{\play}$ and we will study the elimination of the strategy $\act_{0}^{\play}$.  
Duplicate strategies remain duplicate in the components of the decomposition result. 
That is, given a game with some duplicate strategy, the components of the $(\muprof,\nsgameg)$-decomposition contain  the same duplicate strategy. 
It follows that, when considering a game with some duplicate strategies, one can either first decompose it and then reduce it or vice versa. Each procedure yields a different decomposition. 
Nevertheless, our notion of $(\muprof,\nsgameg)$-decomposition allows us to obtain a relation between the two approaches, which is described by the commutativity diagram in \cref{fi:decreduc}.

\begin{center}
\def\ech{1}
\begin{figure}[h]
\begin{tikzpicture}[scale=\ech,>=stealth,shorten >=1pt,auto,node distance=4cm,thick,main
 node/.style={draw,font=\Large\bfseries}]

\node [text width=0.5cm,text centered,scale=\ech] (g) at (0,0) {$\games_{\actions}^{\dup}$};
\node [text width=7.5cm,text centered,scale=\ech] (dg) at (9,0) {$\NSG  \times (\nsgameg\PGd\cap \muprof\NoGd)  \times  ((\muprof,\nsgameg)\HGd\cap \muprof\NoGd) $};
\node [text width=0.5cm,text centered,scale=\ech] (gr) at (0,-2) {$\games_{\reduc{\actions}}$};
\node [text width=7.5cm,text centered,scale=\ech] (dgr) at (9,-2) {$\NSG \times (\reduc{\nsgameg}\PGd \cap \reduc{\muprof}\NoGd) \times (\reduc{\muprof},\reduc{\nsgameg})\HGd\cap \reduc{\muprof}\NoGd)$};

\draw[->,>=latex,scale=\ech] (g) to node[midway,scale=\ech] {$(\muprof,\nsgameg)$-decomposition}(dg);
\draw[->,scale=\ech] (g) to node[midway,right,scale=\ech] {$\reduc{\mapT}$} (gr);
\draw[->,>=latex,scale=\ech] (gr) to node[midway,below,scale=\ech] {$(\reduc{\muprof},\reduc{\nsgameg})$-decomposition}(dgr);
\draw[transform canvas={xshift=-90},->,scale=\ech] (dg) to node[midway,right,scale=\ech] {$\reduc{\mapT}$} (dgr);
\draw[transform canvas={xshift=-35},->,scale=\ech] (dg) to node[midway,right,scale=\ech] {$\reduc{\mapT}$} (dgr);
\draw[transform canvas={xshift=50},->,scale=\ech] (dg) to node[midway,right,scale=\ech] {$\reduc{\mapT}$} (dgr);
\end{tikzpicture}
\caption{}\label{fi:decreduc}.
\end{figure}
\end{center}

\begin{definition}
\label{de:dupli}
The strategy $\act_{0}^{\play}$ is \emph{a duplicate strategy} of $\act_{1}^{\play}$  in the game $\game \in \games$ if for every $\playalt\in\players$ and every $\actprof^{-\play} \in \actions^{-\play}$, we have: $\payoff^{\playalt}(\act_{0}^{\play},\actprof^{-\play})=\payoff^{\playalt}(\act_{1}^{\play},\actprof^{-\play})$.
The set of games such that $\act_{0}^{\play}$ is a duplicate strategy of $\act_{1}^{\play}$ is denoted by $\games_{\actions}^{\dup}$.

A  co-measure vector $\nsgameg$ is called \emph{coherent} with $\games_{\actions}^{\dup}$ if for every $\playalt\in\players\setminus\braces{\play}$ and every $\actprof^{-\play} \in \actions^{-\play}$, we have: $\gammameas^{\playalt}(\act_{0}^{\play},\actprof^{-(\play,\playalt)})=\gammameas^{\playalt}(\act_{1}^{\play},\actprof^{-(\play,\playalt)})$.
\end{definition}

\begin{definition}
\label{de:reduced}
The \emph{reduced strategy set} of player $\play$ is denoted by $\reduc{\actions}^{\play} = \actions^{\play} \setminus \{\act_{0}^{\play}\}$ and
\begin{equation*}
\reduc{\actions}=\reduc{\actions}^{\play} \times \parens*{\bigtimes_{\playalt\neq\play}\actions^{\playalt}}
\end{equation*} 
stands for the \emph{reduced strategy profile set}. 
The \emph{reduced game} is denoted by $\reduc{\game}=(\reduc{\payoff}^{\playalt})_{\playalt\in\players}$, where $\reduc{\payoff}^{\playalt}:\reduc{\actions} \to \R$ is such that $\reduc{\payoff}^{\playalt}(\actprof)=\payoff^{\playalt}(\actprof)$. 
The \emph{reducing map} will be denoted by $\reduc{\mapT}$.  If $\muprof$ is a  measure vector on $\actions$, we also define the \emph{reduced  measure vector}  $\reduc{\muprof}$  as follows:
\begin{enumerate}[(i)]
\item 
for every player $\playalt\neq\play$, we have $\reduc{\mumeas}^{\playalt}(\act^{\playalt})=\mumeas^{\playalt}(\act^{\playalt})$,
\item 
for player $\play$, we have $\reduc{\mumeas}^{\play}(\act_{1}^{\play})
=\mumeas^{\play}(\act_{0}^{\play})+\mumeas^{\play}(\act_{1}^{\play})$, and $\reduc{\mumeas}^{\play}(\act^{\play})=\mumeas^{\play}(\act^{\play})$ for any $\act^{\play} \neq \act_{1}^{\play}$.
\end{enumerate}
\end{definition}

\begin{example}
\label{ex:multi}
Let $\mumeas^{1}=\mumeas^{2}=(1,1,1)$ and let $\nsgameg\equiv 1$. 
Let $\game$ be the following game:
\begin{center}
\medskip
{\tabulinesep=1.2mm
\begin{tabu}{ r|cc|cc|cc| }
\multicolumn{1}{r}{}
 &  \multicolumn{2}{c}{$\act^{2}$}
 & \multicolumn{2}{c}{$\actalt_{0}^{2}$} 
  & \multicolumn{2}{c}{$\actalt_{1}^{2}$}\\
\cline{2-7}
$\act_{0}^{1}$ & $2$ & $-2$ & $-1$ & $1$ & $-1$ & $1$ \\
\cline{2-7}
$\act_{1}^{1}$ & $2$ & $-2$ & $-1$ & $1$ & $-1$ & $1$\\
\cline{2-7}
$\actalt^{1}$ & $-4$ & $4$ & $2$ & $-2$ & $2$ & $-2$\\
\cline{2-7}
\multicolumn{7}{c}{$\game$}
\end{tabu}} 
\medskip
\end{center}

The game $\game$ is $(\muprof,\nsgameg)$-harmonic and admits the uniform profile as equilibrium. By eliminating the duplicate strategy of the row-player, we obtain the reduced game $\reduc{\game}$:

\begin{center}
\medskip
{\tabulinesep=1.2mm
\begin{tabu}{ r|cc|cc|cc| }
\multicolumn{1}{r}{}
 &  \multicolumn{2}{c}{$\act^{2}$}
 & \multicolumn{2}{c}{$\actalt_{0}^{2}$} 
  & \multicolumn{2}{c}{$\actalt_{1}^{2}$}\\
\cline{2-7}
$\act^{1}$ & $2$ & $-2$ & $-1$ & $1$ & $-1$ & $1$\\
\cline{2-7}
$\actalt^{1}$ & $-4$ & $4$ & $2$ & $-2$ & $2$ & $-2$\\
\cline{2-7}
\multicolumn{7}{c}{$\reduc{\game}$}
\end{tabu}} .
\medskip
\end{center}
The reduced game $\reduc{\game}$ admits the profile $\parens*{(2/3,1/3),(1/3,1/3,1/3)}$ as unique equilibrium. 
Let $\reduc{\mumeas}^{1}=(2,1)$, $\reduc{\mumeas}^{2}=\mumeas^{2}=(1,1,1)$, $\reduc{\gammameas}^{1}=(1,1)$ and $\reduc{\gammameas}^{2}=(1,1,1)$. 
Then, $\reduc{\game}$ is a $(\reduc{\muprof}, \reduc{\nsgameg})$-harmonic game.
 
It is possible to further eliminate the duplicate strategy of the column-player. We then obtain the reduced game $\dreduc{\game}$:
\begin{center}
\medskip
{\tabulinesep=1.2mm
\begin{tabu}{r|cc|cc|}
\multicolumn{1}{r}{}
&\multicolumn{2}{c}{$\act^{2}$}
&\multicolumn{2}{c}{$\actalt^{2}$}\\
\cline{2-5}
$\act^{1}$ & $2$ & $-2$ & $-1$ & $1$\\
\cline{2-5}
$\actalt^{1}$ & $-4$ & $4$ & $2$ & $-2$\\
\cline{2-5}
\multicolumn{5}{c}{$\dreduc{\game}$}
\end{tabu}} .
\medskip
\end{center}
Let $\dreduc{\mumeas}^{1}=(2,1)$, $\dreduc{\mumeas}^{2}=(1,2)$, $\dreduc{\gammameas}^{1}=(1,1)$ and $\dreduc{\gammameas}^{2}=(1,1)$. It follows that $\dreduc{\game}$ is a $(\dreduc{\mumeas},\dreduc{\nsgameg})$-harmonic game.
\end{example}

Notice that the mapping $\reduc{\mapT}$ is the right-inverse of $\widecheck{\mapT}$. 
Therefore,  the following theorem is an immediate consequence of \cref{th:reduce-replicate}.

\begin{theorem} 
\label{th:reduce-duplicate}
Let $\game\in \games_{\actions}^{\dup}$  and let $\nsgameg$ be coherent with $\games_{\actions}^{\dup}$. 
Then 
\begin{align*}
\reduc{{\game_{(\muprof,\nsgameg)\NSd}}}&= (\reduc{\game})_{(\reduc{\muprof},\reduc{\nsgameg})\NSd},\\
\reduc{{\game_{(\muprof,\nsgameg)\Potd}}}&= (\reduc{\game})_{(\reduc{\muprof},\reduc{\nsgameg})\Potd},\\
\reduc{{\game_{(\muprof,\nsgameg)\Hard}}}&= (\reduc{\game})_{(\reduc{\muprof},\reduc{\nsgameg})\Hard},
\end{align*}
i.e., reductions with respect to duplication and decompositions commute.
\end{theorem}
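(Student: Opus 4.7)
The plan is to deduce the theorem directly from \cref{th:reduce-replicate} by exploiting the observation flagged in the text that $\reduc{\mapT}$ is a right-inverse of $\widecheck{\mapT}$ on $\games_{\actions}^{\dup}$. Rather than reducing $\game$ and comparing decompositions, I would invert the viewpoint: start from $\reduc{\game}$, extend it, apply \cref{th:reduce-replicate}, and use uniqueness of the decomposition to identify the components of $\game$.

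First I would verify that the extension conditions of \cref{de:extended} are met in the opposite direction, namely that $\muprof$ is an extension of $\reduc{\muprof}$ and $\nsgameg$ is an extension of $\reduc{\nsgameg}$. For measures this is essentially built into \cref{de:reduced}, since $\reduc{\mumeas}^{\play}(\act_{1}^{\play}) = \mumeas^{\play}(\act_{0}^{\play}) + \mumeas^{\play}(\act_{1}^{\play})$ is exactly the extension equation read backwards. For co-measures, the coherence hypothesis gives $\gammameas^{\playalt}(\act_{0}^{\play},\actprof^{-(\play,\playalt)}) = \gammameas^{\playalt}(\act_{1}^{\play},\actprof^{-(\play,\playalt)})$ for every $\playalt \neq \play$, and one defines $\reduc{\gammameas}^{\playalt}(\act_{1}^{\play},\actprof^{-(\play,\playalt)}) \coloneqq \gammameas^{\playalt}(\act_{0}^{\play},\actprof^{-(\play,\playalt)}) + \gammameas^{\playalt}(\act_{1}^{\play},\actprof^{-(\play,\playalt)})$ (and equality for all other arguments), so that the extension relation of \cref{de:extended} is satisfied by the symmetric split that coherence imposes.

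The key observation is then that, since $\payoff^{\playalt}(\act_{0}^{\play},\argdot) = \payoff^{\playalt}(\act_{1}^{\play},\argdot)$ for every $\playalt$, we have $\widecheck{\reduc{\game}} = \game$. Applying \cref{th:reduce-replicate} to $\reduc{\game}$ with base parameters $(\reduc{\muprof},\reduc{\nsgameg})$ and extensions $(\muprof,\nsgameg)$ therefore gives
\begin{align*}
\game_{(\muprof,\nsgameg)\NSd} &= \widecheck{(\reduc{\game})_{(\reduc{\muprof},\reduc{\nsgameg})\NSd}}, \\
\game_{(\muprof,\nsgameg)\Potd} &= \widecheck{(\reduc{\game})_{(\reduc{\muprof},\reduc{\nsgameg})\Potd}}, \\
\game_{(\muprof,\nsgameg)\Hard} &= \widecheck{(\reduc{\game})_{(\reduc{\muprof},\reduc{\nsgameg})\Hard}},
\end{align*}
where uniqueness of the orthogonal decomposition in \cref{th:ortho-sum} is used to conclude that the extension of the components of $\reduc{\game}$ is the decomposition of $\widecheck{\reduc{\game}} = \game$.

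Applying $\reduc{\mapT}$ to both sides of each of these identities and invoking $\reduc{\mapT} \circ \widecheck{\mapT} = \Id$ on $\games_{\reduc{\actions}}$ yields the three required equalities. The main obstacle I anticipate is not in the logical chain but in the bookkeeping required to align the coherence hypothesis with the extension condition of \cref{de:extended}: the latter permits arbitrary splittings of the mass between $\act_{0}^{\play}$ and $\act_{1}^{\play}$, while coherence forces the symmetric split, and one must verify that this particular split is precisely what makes $(\reduc{\muprof},\reduc{\nsgameg})$ well-defined and compatible with the reductions on the components of the decomposition. Once this compatibility is pinned down, the theorem follows essentially as a corollary of \cref{th:reduce-replicate}.
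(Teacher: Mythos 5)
Your proposal is correct and follows essentially the same route as the paper: both observe that $\game=\widecheck{\reduc{\game}}$, that $\muprof$ and $\nsgameg$ are extensions of $\reduc{\muprof}$ and $\reduc{\nsgameg}$ (the latter thanks to coherence), apply \cref{th:reduce-replicate} to $\reduc{\game}$ together with uniqueness of the decomposition, and then apply $\reduc{\mapT}$. Your write-up merely spells out the bookkeeping that the paper's terser proof leaves implicit.
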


Finally, \cref{th:reduce-duplicate} can be extended to the notion of redundant strategy introduced by \citet{GovWil:E2009} when $\nsgameg$ is uniform.

\begin{definition}
\label{de:redund}
Given $\ared\in \simplex(\actions^{\play}\setminus \{\act^{\play}_0\})$, the strategy $\act_{0}^{\play}$ is \emph{$\ared$-redundant}  in the game $\game \in \games$  if for every $\playalt\in\players$ and every $\actprof^{-\play} \in \actions^{-\play}$, we have: 
\[\payoff^{\playalt}(\act_{0}^{\play},\actprof^{-\play})=\sum_{\act^{\play} \in \actions^{\play}\setminus \{\act^{\play}_0\}}\ared(\act^{\play}) \payoff^{\playalt}(\act^{\play},\actprof^{-\play}).
\]
The set of games such that $\act_{0}^{\play}$ is $\ared$-redundant is denoted by $\games_{\actions}^{\aredred}$.
\end{definition}

The reduced strategy set, the reduced game and the reducing map are defined as before. 

\begin{definition}
\label{de:reduced_measure_red}
If, for $\play\in\players$, $\mumeas^{\play}\in\measures_{+}(\actions^{\play})$, we define the \emph{$\ared$-reduced measure vector}  $\muprof_{\ared}^{*}$  as follows:
\begin{enumerate}[(i)]
\item 
for every player $\playalt\neq\play$, we have ${\mumeas_{\ared}^{*}}^{\playalt}(\act^{\playalt})=\mumeas^{\playalt}(\act^{\playalt})$,
\item 
for player $\play$ and for every $\act^{\play}\in \actions^{\play}\setminus \{\act^{\play}_0\}$, we have ${\mumeas^*_\ared}^{\play}(\act^{\play})
= \mumeas^{\play}(\act^{\play})+\ared(\act^{\play})\mumeas^{\play}(\act^{\play}_0)$.
\end{enumerate}
\end{definition}

\begin{theorem} 
\label{th:reduce-redundant}
Let $\game \in \games_{\actions}^{\aredred}$ and let $\nsgameg$ be a  uniform co-measure vector.
Then 
\begin{align*}
\reduc{{\game_{(\muprof,\nsgameg)\NSd}}}&= (\reduc{\game})_{(\mumeas^*_\ared,\reduc{\nsgameg})\NSd},\\
\reduc{{\game_{(\muprof,\nsgameg)\Potd}}}&= (\reduc{\game})_{(\mumeas^*_\ared,\reduc{\nsgameg})\Potd},\\
\reduc{{\game_{(\muprof,\nsgameg)\Hard}}}&= (\reduc{\game})_{(\mumeas^*_\ared,\reduc{\nsgameg})\Hard},
\end{align*}
i.e., reductions with respect to redundancy and decompositions commute.
\end{theorem}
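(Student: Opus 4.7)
The plan is to follow the structure of the proof of \cref{th:reduce-duplicate}, replacing the extending map $\widecheck{\mapT}$ by an $\ared$-dependent embedding $\widecheck{\mapT}_{\ared}\colon \games_{\reduc{\actions}} \to \games_{\actions}^{\aredred}$ that agrees with $\reduc{\game}$ on profiles in $\reduc{\actions}$ and satisfies
\[
\widecheck{\mapT}_{\ared}(\reduc{\game})^{\playalt}(\act_{0}^{\play},\actprof^{-\play}) \coloneqq \sum_{\act^{\play}\in \reduc{\actions}^{\play}} \ared(\act^{\play})\, \reduc{\game}^{\playalt}(\act^{\play},\actprof^{-\play})
\]
on the extended profiles. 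By \cref{de:redund}, $\reduc{\mapT}$ is a left inverse of $\widecheck{\mapT}_{\ared}$ on $\games_{\actions}^{\aredred}$, hence every $\game \in \games_{\actions}^{\aredred}$ satisfies $\game=\widecheck{\mapT}_{\ared}(\reduc{\game})$. The theorem will then follow from the uniqueness of the $(\muprof,\nsgameg)$-decomposition, provided we establish an analogue of \cref{le:replica}: when $\nsgameg$ is uniform, if $\reduc{\game}$ is nonstrategic, $\reduc{\nsgameg}$-potential, $\muprof_{\ared}^{*}$-normalized, or $(\muprof_{\ared}^{*},\reduc{\nsgameg})$-harmonic, then $\widecheck{\mapT}_{\ared}(\reduc{\game})$ lies in the corresponding class on $\actions$ for $(\muprof,\nsgameg)$.

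I would first dispatch the three easier classes. For nonstrategic games, since $\reduc{\game}^{\playalt}(\actprof)$ depends only on $\actprof^{-\playalt}$, the $\ared$-mixture at $\act_{0}^{\play}$ still yields a function of $\actprof^{-\playalt}$ only, so nonstrategy is preserved. For $\nsgameg$-potential games, extending the potential $\reduc{\potent}$ by the same $\ared$-mixture rule at $\act_{0}^{\play}$ and using $\sum_{\act^{\play}}\ared(\act^{\play})=1$ shows that the $\nsgameg$-potential identity for $\actalt^{\play}=\act_{0}^{\play}$ is a convex combination of the identities for $\reduc{\game}$. For $\muprof$-normalization, using ${\mumeas_{\ared}^{*}}^{\play}(\act^{\play})=\mumeas^{\play}(\act^{\play})+\ared(\act^{\play})\mumeas^{\play}(\act_{0}^{\play})$, the sum $\sum_{\act^{\play}\in\actions^{\play}}\mumeas^{\play}(\act^{\play})\widecheck{\mapT}_{\ared}(\reduc{\game})^{\playalt}(\act^{\play},\actprof^{-\play})$ collapses, after absorbing the $\act_{0}^{\play}$-term via the $\ared$-weights, into the $\muprof_{\ared}^{*}$-normalization sum for $\reduc{\game}$; for the other players, the mixing variable is summed first and normalization follows.

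The hard step is the harmonic condition, which is the only place where the uniform-$\nsgameg$ hypothesis is essential. I would verify \cref{eq:mu-eta-harmonic} for $\widecheck{\mapT}_{\ared}(\reduc{\game})$ at an arbitrary $\actprof\in\actions$ in two cases. When $\act^{\play}\neq\act_{0}^{\play}$, the player-$\play$ inner sum splits over $\reduc{\actions}^{\play}$ and the singleton $\braces{\act_{0}^{\play}}$; rewriting the latter via the $\ared$-mixture definition and regrouping by $\actalt^{\play}$, the weight $\mumeas^{\play}(\actalt^{\play})+\ared(\actalt^{\play})\mumeas^{\play}(\act_{0}^{\play})={\mumeas_{\ared}^{*}}^{\play}(\actalt^{\play})$ appears, so the whole expression matches the harmonic identity for $\reduc{\game}$ at $\actprof$, which vanishes. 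When $\act^{\play}=\act_{0}^{\play}$, I would expand each player-$\playalt$ contribution ($\playalt\neq\play$) as an $\ared$-mixture of the contributions for $\reduc{\game}$ at profiles where $\act_{0}^{\play}$ is replaced by $\actr^{\play}\in\reduc{\actions}^{\play}$, and apply the harmonic identity for $\reduc{\game}$ at each such profile to rewrite those sums in terms of player-$\play$ differences weighted by $\muprof_{\ared}^{*}$. Combining with the player-$\play$ contribution and using $\mumeas^{\play}-{\mumeas_{\ared}^{*}}^{\play}=-\ared\cdot \mumeas^{\play}(\act_{0}^{\play})$ on $\reduc{\actions}^{\play}$, the residual takes the antisymmetric form
\[
\sum_{\actr^{\play},\actalt^{\play}\in\reduc{\actions}^{\play}} \ared(\actr^{\play})\ared(\actalt^{\play})\parens*{\reduc{\game}^{\play}(\actr^{\play},\actprof^{-\play})-\reduc{\game}^{\play}(\actalt^{\play},\actprof^{-\play})},
\]
which vanishes by the exchange $\actr^{\play}\leftrightarrow\actalt^{\play}$. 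Uniformity of $\nsgameg$ is used precisely to pull the constants $\gammameas^{\playalt}$ through the $\ared$-mixture over $\act^{\play}$ in the $\playalt\neq\play$ contributions; without it, one would need a coherence condition on $\nsgameg$ at $\act_{0}^{\play}$ analogous to the one in \cref{th:reduce-duplicate}, which is the main technical obstacle avoided by the uniformity hypothesis.
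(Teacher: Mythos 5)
Your proposal is correct, and it is precisely the argument the paper intends: the paper omits this proof because it is the redundancy analogue of \cref{le:replica} and \cref{th:reduce-duplicate}, namely embed $\reduc{\game}$ back into $\games_{\actions}^{\aredred}$ via the $\ared$-mixture, check that the embedding sends each of the four classes for $(\muprof_{\ared}^{*},\reduc{\nsgameg})$ into the corresponding class for $(\muprof,\nsgameg)$, and invoke uniqueness of the decomposition from \cref{th:ortho-sum}; your treatment of the harmonic case, including the antisymmetric cancellation against $\ared(\actr^{\play})\ared(\actalt^{\play})$, is the right computation. One small inaccuracy: uniformity of $\nsgameg$ is \emph{not} used only in the harmonic case — the $\nsgameg$-potential identity for a deviation of a player $\playalt\neq\play$ at a profile with $\act^{\play}=\act_{0}^{\play}$ also requires pulling $\gammameas^{\playalt}$ through the $\ared$-mixture, and your potential paragraph only checks player $\play$'s own deviations to and from $\act_{0}^{\play}$.
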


The proof of theorem \cref{th:reduce-redundant} is omitted.

In the sequel, we provide two illustrating examples combining scaling and duplication. First, the given game is a $(\muprof,\nsgameg)$-harmonic game where the row-player has a redundant strategy and the goal is to investigate how the two parameters are transformed after eliminating the redundant strategy and applying a $\nsgameb$ scaling in the $\nsgameb$-scaling game $(\nsgameb \cdot \game)$.

\begin{example}\label{ex:redscale}
Let $\mumeas^{1}=(1-\theta,1,\theta)$, with $\theta \in (0,1)$, $\mumeas^{2}\equiv1$ and let  $\nsgameg$ be such that ${\gammameas}^{1}\equiv 1$ and $\gammameas^{2}\equiv1/2$. 
Let  $\game \in \games_{\actions}^{\aredred}$  be the game where the strategy $\actr^{1}$ of the row-player is a $\ared$-redundant, with  $\ared=(\theta,1-\theta)$.
\begin{center}
\medskip
{\tabulinesep=1.2mm
\begin{tabu}{ r|cc|cc| }
\multicolumn{1}{r}{}
 &  \multicolumn{2}{c}{$\act^{2}$}
 & \multicolumn{2}{c}{$\actalt^{2}$} \\
\cline{2-5}
$\act^{1}$ & $1$ & $-2$ & $-1$ & $2$ \\
\cline{2-5}
$\actr^{1}$ & $2\theta - 1 $ & $2(1-2\theta)$ & $1 -2\theta$ & $2(2\theta - 1)$ \\
\cline{2-5}
$\actalt^{1}$ & $-1$ & $2$ & $1$ & $-2$ \\
\cline{2-5}
\multicolumn{5}{c}{$\game$}
\end{tabu}} .
\medskip
\end{center}
The game $\game$ is $(\muprof,\nsgameg)$-harmonic and admits an infinite set of equilibria given by:  
\begin{equation}\label{eq:infinit-mixed}
\braces*{\parens*{\mixed,\parens*{\frac{1}{2}-\mixed}\frac{1}{\theta},1-\mixed-\parens*{\frac{1}{2}-\mixed}\frac{1}{\theta}}, \parens*{\frac{1}{2},\frac{1}{2}} \colon \mixed \in \parens*{0,1/2}, \, \theta \in \parens*{\frac{1-2\mixed}{2(1-\mixed)},1-\mixed}}
\end{equation} 
By eliminating the redundant strategy of the row-player, we obtain the reduced game $\reduc{\game}$:
\begin{center}
\medskip
{\tabulinesep=1.2mm
\begin{tabu}{r|cc|cc|}
\multicolumn{1}{r}{}
&\multicolumn{2}{c}{$\act^{2}$}
&\multicolumn{2}{c}{$\actalt^{2}$}\\
\cline{2-5}
$\act^{1}$ & $1$ & $-2$ & $-1$ & $2$\\
\cline{2-5}
$\actalt^{1}$ & $-1$ & $2$ & $1$ & $-2$\\
\cline{2-5}
\multicolumn{5}{c}{$\reduc{\game}$}
\end{tabu}} .
\medskip
\end{center}
Let $\reduc{\muprof}=\parens*{\reduc{\mumeas}^{1},\reduc{\mumeas}^{2}}$ be the reduced product measure over $\reduc{\actions}$ defined by
\begin{equation}
\label{eq:reduced-pm}
\reduc{\mumeas}^{1}
=\parens*{\mumeas^{1}(\act^{1})/\ared(\actalt^{1}),\mumeas^{1}(\actalt^{1})/\ared(\act^{1})}
=(1,1).
\end{equation} 
and $\reduc{\mumeas}^{2}=\mumeas^{2}$. 
Then, $\reduc{\game}$ is a $(\reduc{\muprof},\reduc{\nsgameg})$-harmonic game, where $\reduc{\nsgameg}$ is the restriction of $\nsgameg$ over the reduced strategy space $\reduc{\actions}$. 
In the sequel, let $\nsgameb$ be such that $\betameas^{1} \equiv 1$ and $\betameas^{2}\equiv  1/2$. 
Then, consider the $\nsgameb$-scaling game $(\nsgameb \cdot \reduc{\game})$:
\begin{center}
\medskip
{\tabulinesep=1.2mm
\begin{tabu}{r|cc|cc|}
\multicolumn{1}{r}{}
&\multicolumn{2}{c}{$\act^{2}$}
&\multicolumn{2}{c}{$\actalt^{2}$}\\
\cline{2-5}
$\act^{1}$ & $1$ & $-1$ & $-1$ & $1$\\
\cline{2-5}
$\actalt^{1}$ & $-1$ & $1$ & $1$ & $-1$\\
\cline{2-5}
\multicolumn{5}{c}{$(\nsgameb \cdot \reduc{\game})$}
\end{tabu}} .
\medskip
\end{center}
Notice that $(\nsgameb \cdot \reduc{\game})$ is the matching-pennies game, which is $(1,1)$-harmonic game and admits the uniform strategy profile as Nash equilibrium. 
This is due to the fact that $(\nsgameb \cdot \reduc{\game})$ is a $(\reduc{\muprof},\reduc{\nsgameg}/\nsgameb)$-harmonic game where $(\reduc{\muprof},\reduc{\nsgameg}/\nsgameb)=(1,1)$.
\end{example}

We now provide a final example combining scaling and duplication. Here, we deal with a duplicate strategy instead of a redundant. 
We investigate how the decomposition of the duplicated and scaled game $({\nsgameb} \cdot \widecheck{\game})$ is related to the decomposition of the reduced game $\game$.

\begin{example}
\label{ex:two-game transformations}
Let $\mumeas^{1}=\mumeas^{2}=\gammameas^{1}=\gammameas^{2}=(1,1)$, $\betameas^{1}=(2,1)$ and $\betameas^{2}=(1,3)$. 
Consider the following $\muprof$-normalized game $\game$ (on the left) and let $({\nsgameb} \cdot \widecheck{\game})$ (on the right) be its $\nsgameb$-scaling where, moreover, the strategy $\act^{1}$ is duplicate:

\begin{center}
\medskip
{\tabulinesep=1.2mm
\begin{tabu}{ r|cc|cc| }
\multicolumn{1}{r}{}
 &  \multicolumn{2}{c}{$\act^{2}$}
 & \multicolumn{2}{c}{$\actalt^{2}$} \\
\cline{2-5}
$\act^{1}$ & $4$ & $-1$ & $-3$ & $1$ \\
\cline{2-5}
$\actalt^{1}$ & $-4$ & $0$ & $3$ & $0$ \\
\cline{2-5}
\multicolumn{5}{c}{$\game$}
\end{tabu}
\qquad
\begin{tabu}{ r|cc|cc| }
\multicolumn{1}{r}{}
 &  \multicolumn{2}{c}{$\act^{2}$}
 & \multicolumn{2}{c}{$\actalt^{2}$} \\
\cline{2-5}
$\act_{0}^{1}$ & $8$ & $-1$ & $-3$ & $1$ \\
\cline{2-5}
$\act_{1}^{1}$ & $8$ & $-1$ & $-3$ & $1$ \\
\cline{2-5}
$\actalt^{1}$ & $-8$ & $0$ & $3$ & $0$ \\
\cline{2-5}
\multicolumn{5}{c}{$({\nsgameb} \cdot \widecheck{\game})$}
\end{tabu}} .
\medskip
\end{center}
In view of the $(\muprof,\nsgameg)$-decomposition of $\game$ (see \cref{ex:depend}), if we apply the same pair of game transformations to  $\game_{(\muprof,\nsgameg)\Potd}$ and $\game_{(\muprof,\nsgameg)\Hard}$, we get

\begin{center}
\medskip
{\tabulinesep=1.2mm
\begin{tabu}{ r|cc|cc| }
\multicolumn{1}{r}{}
 &  \multicolumn{2}{c}{$\act^{2}$}
 & \multicolumn{2}{c}{$\actalt^{2}$} \\
\cline{2-5}
$\act_{0}^{1}$ & $4$ & $1$ & $-1$ & $-1$ \\
\cline{2-5}
$\act_{1}^{1}$ & $4$ & $1$ & $-1$ & $-1$ \\
\cline{2-5}
$\actalt^{1}$ & $-4$ & $-6$ & $1$ & $6$ \\
\cline{2-5}
\multicolumn{5}{c}{$({\nsgameb} \cdot \widecheck{\game}_{(\muprof,\nsgameg)\Potd})$}
\end{tabu} 
\qquad
\begin{tabu}{ r|cc|cc| }
\multicolumn{1}{r}{}
 &  \multicolumn{2}{c}{$\act^{2}$}
 & \multicolumn{2}{c}{$\actalt^{2}$} \\
\cline{2-5}
$\act_{0}^{1}$ & $4$ & $-2$ & $-2$ & $2$ \\
\cline{2-5}
$\act_{1}^{1}$ & $4$ & $-2$ & $-2$ & $2$ \\
\cline{2-5}
$\actalt^{1}$ & $-4$ & $6$ & $2$ & $-6$ \\
\cline{2-5}
\multicolumn{5}{c}{$({\nsgameb} \cdot \widecheck{\game}_{(\muprof,\nsgameg)\Hard})$}
\end{tabu}} .
\medskip
\end{center} 
Let $\widecheck{\gammameas}_{\nsgameb}^{1}=(1,1,1/3)$, $\widecheck{\gammameas}_{\nsgameb}^{2}=(1/2,1)$, $\widecheck{\mumeas}_{\nsgameb}^{1}=(\ared,1-\ared,1)$, $\ared \in (0,1)$ and $\widecheck{\mumeas}_{\nsgameb}^{2}=(1,1)$. 
It is easy to check that $({\nsgameb} \cdot \widecheck{\game}_{(\muprof,\nsgameg)\Potd})$ is a $\widecheck{\nsgameg}_{\nsgameb}$-potential game and $({\nsgameb} \cdot \widecheck{\game}_{(\muprof,\nsgameg)\Hard})$ is a $(\widecheck{\mumeas}_{\nsgameb},\widecheck{\nsgameg}_{\nsgameb})$-harmonic game. 
Notice that the mixed Nash equilibrium of $\game_{(\muprof,\nsgameg)\Potd}$ has been transformed, according to \cref{pr:equilibrim-preservation}, into the mixed strategy profile $(\mixed^{1},\mixed^{2})$, where $\mixed^{1}=(3/7,3/7,1/7)$ and $\mixed^{2}=(1/5,4/5)$, which is a mixed Nash equilibrium in $({\nsgameb} \cdot \widecheck{\game}_{(\muprof,\nsgameg)\Potd})$. 
Likewise, the mixed Nash equilibrium of $\game_{(\muprof,\nsgameg)\Hard}$ has been transformed into the mixed strategy profile $(\mixed^{1},\mixed^{2})$, where $\mixed^{1}=(3/8,3/8,1/4)$ and $\mixed^{2}=(2/3,1/3)$, which is a mixed Nash equilibrium in $({\nsgameb} \cdot \widecheck{\game}_{(\muprof,\nsgameg)\Hard})$.
\end{example}

\begin{remark}
\label{re:noncontinuous}
Another possibility to take into consideration the existence of duplicate strategies would be to associate to a game a canonical reduced form, decompose the game and, then expand the new game to come back to the original strategy space.  
The question of coherence of decompositions is not solved by such an approach. 
Indeed, if we use on every space of games the decomposition of \citet{CanMenOzdPar:MOR2011}, then this new decomposition is not continuous as shown by the following example:

\begin{center}
\begin{multicols}{2}
\medskip
{\tabulinesep=1.2mm
\begin{tabu}{ r|cc|cc| }
\multicolumn{1}{r}{}
 &  \multicolumn{2}{c}{$\act^{2}$}
 & \multicolumn{2}{c}{$\actalt^{2}$} \\
\cline{2-5}
$\act^{1}$ & $1$ & $-1$ & $-1$ & $1$ \\
\cline{2-5}
$\actalt_{0}^{1}$ & $-1$ & $1$ & $1$ & $-1$ \\
\cline{2-5}
$\actalt_{1}^{1}$ & $-1$ & $1$ & $1$ & $-1$ \\
\cline{2-5}
\multicolumn{5}{c}{$\game_{1}$}
\end{tabu}} .
\medskip

\columnbreak
\medskip
{\tabulinesep=1.2mm
\begin{tabu}{ r|cc|cc| }
\multicolumn{1}{r}{}
 &  \multicolumn{2}{c}{$\act^{2}$}
 & \multicolumn{2}{c}{$\actalt^{2}$} \\
\cline{2-5}
$\act^{1}$ & $1$ & $-1$ & $-1$ & $1$ \\
\cline{2-5}
$\actalt_{0}^{1}$ & $-1$ & $1$ & $1$ & $-1$ \\
\cline{2-5}
$\actalt_{1}^{1}$ & $-1+\varepsilon$ & $1-\varepsilon$ & $1-\varepsilon$ & $-1+\varepsilon$ \\
\cline{2-5}
\multicolumn{5}{c}{$\game_{2}(\varepsilon)$}
\end{tabu}} .
\medskip
\end{multicols}\end{center}
The game $\game_{1}$ is a game with duplicate strategies $\actalt_{0}^{1}$ and $\actalt_{1}^{1}$. If we remove the duplicate strategy, we are led to the well-known matching-pennies game which is a harmonic normalized game and therefore in its decomposition the potential and nonstrategic component would be zero. 
The game $\game_{2}(\varepsilon)$ is ``almost'' a game with duplicate strategies, i.e., $\lim_{\varepsilon \rightarrow 0} \game_{2}(\varepsilon)=\game_{1}$, however it cannot be reduced. Applying the  decomposition result in \citet{CanMenOzdPar:MOR2011} to $\game_{2}(\varepsilon)$, which is neither normalized nor a game with duplicate strategies, leads to the following normalized and nonstrategic components:
\begin{multicols}{2}
\medskip
{\tabulinesep=1.2mm
\begin{tabu}{ r|cc|cc| }
\multicolumn{1}{r}{}
 &  \multicolumn{2}{c}{$\act^{2}$}
 & \multicolumn{2}{c}{$\actalt^{2}$} \\
\cline{2-5}
$\act^{1}$ & $\frac{4-\varepsilon}{3}$ & $-1$ & $-\frac{4-\varepsilon}{3}$ & $1$ \\
\cline{2-5}
$\actalt_{0}^{1}$ & $-\frac{2+\varepsilon}{3}$ & $1$ & $\frac{2+\varepsilon}{3}$ & $-1$ \\
\cline{2-5}
$\actalt_{1}^{1}$ & $-\frac{2(1-\varepsilon)}{3}$ & $1-\varepsilon$ & $\frac{2(1-\varepsilon)}{3}$ & $-1+\varepsilon$ \\
\cline{2-5}
\multicolumn{5}{c}{$\game_{2,\No}(\varepsilon)$}
\end{tabu}} 
\medskip

\columnbreak
\medskip
{\tabulinesep=1.2mm
\begin{tabu}{ r|cc|cc| }
\multicolumn{1}{r}{}
 &  \multicolumn{2}{c}{$\act^{2}$}
 & \multicolumn{2}{c}{$\actalt^{2}$} \\
\cline{2-5}
$\act^{1}$ & $-\frac{1-\varepsilon}{3}$ & $0$ & $\frac{1-\varepsilon}{3}$ & $0$ \\
\cline{2-5}
$\actalt_{0}^{1}$ & $-\frac{1-\varepsilon}{3}$ & $0$ & $\frac{1-\varepsilon}{3}$ & $0$ \\
\cline{2-5}
$\actalt_{1}^{1}$ & $-\frac{1-\varepsilon}{3}$ & $0$ & $\frac{1-\varepsilon}{3}$ & $0$ \\
\cline{2-5}
\multicolumn{5}{c}{$\game_{2,{\NS}}(\varepsilon)$}
\end{tabu}} .
\medskip

\end{multicols}
Hence, the decomposition map is not continuous since at the limit we obtain two nonzero components instead of one: 

\begin{center}
\begin{multicols}{2}
\medskip
{\tabulinesep=1.2mm
\begin{tabu}{ r|cc|cc| }
\multicolumn{1}{r}{}
 &  \multicolumn{2}{c}{$\act^{2}$}
 & \multicolumn{2}{c}{$\actalt^{2}$} \\
\cline{2-5}
$\act^{1}$ & $\frac{4}{3}$ & $-1$ & $-\frac{4}{3}$ & $1$ \\
\cline{2-5}
$\actalt_{0}^{1}$ & $-\frac{2}{3}$ & $1$ & $\frac{2}{3}$ & $-1$ \\
\cline{2-5}
$\actalt_{1}^{1}$ & $-\frac{2}{3}$ & $1$ & $\frac{2}{3}$ & $-1$ \\
\cline{2-5}
\end{tabu}} 
\medskip

\columnbreak
\medskip
{\tabulinesep=1.2mm
\begin{tabu}{ r|cc|cc| }
\multicolumn{1}{r}{}
 &  \multicolumn{2}{c}{$\act^{2}$}
 & \multicolumn{2}{c}{$\actalt^{2}$} \\
\cline{2-5}
$\act^{1}$ & $-\frac{1}{3}$ & $0$ & $\frac{1}{3}$ & $0$ \\
\cline{2-5}
$\actalt_{0}^{1}$ & $-\frac{1}{3}$ & $0$ & $\frac{1}{3}$ & $0$ \\
\cline{2-5}
$\actalt_{1}^{1}$ & $-\frac{1}{3}$ & $0$ & $\frac{1}{3}$ & $0$ \\
\cline{2-5}
\end{tabu}} .
\medskip

\end{multicols}
\end{center}

\end{remark}


\appendix

\section{Proofs} \label{app:proofs}

To prove the equilibrium characterization in $(\muprof,\nsgameg)$-harmonic game (\cref{th:mu-eta-mixed-equilibrium}), we need some notation that is naturally introduced in the proof of the decomposition result. 
Hence, the proofs of \cref{suse:potential-harmonic} are postponed after the proofs of \cref{suse:decomp1,suse:mu-eta-decomposition}.
We remind the reader that $\mumeas$ indicates the product measure of $\parens*{\mumeas^{\play}}_{\play\in\players}$.


\subsection*{Proofs of \cref{suse:decomp1}}

We prove \cref{pr:direct-sum} by describing $\NoG$ and $\NSG$ through orthogonal projection operators. 
The notation $\Id_{\setA}$ stands for the identity operator over a set $\setA$.
First, we define for any $\play\in\players$, the linear operators $\opLam^{\play} : \Czero\to \Czero$ and $\opPi^{\play} \colon \Czero\to\Czero$ as follows:
\begin{subequations}
\label{eq:Lambda-Pi}
\begin{align}
\label{eq:Lambda}
\opLam^{\play}(\payoff^{\play})(\act^{\play},\actprof^{-\play})
&=\sum_{\actalt^{\play} \in \actions^{\play}}\norml{\mumeas}^{\play}(\actalt^{\play})\payoff^{\play}(\actalt^{\play},\actprof^{-\play}),\\
\label{eq:Pi}
\opPi^{\play} 
&= \Id_{\Czero} -\opLam^{\play}.
\end{align}
\end{subequations}
We call $\NSG^{\play} \subset \Czero$ the set of functions $\func$ for which there exists $\nonstratf \colon \actions^{-\play} \to \R$ such that
\begin{align}\label{eq:ns}
\forall \act^{\play}\in\actions^{\play},\ \forall \actprof^{-\play}\in \actions^{-\play},\ \func(\act^{\play},\actprof^{-\play})=\nonstratf(\actprof^{-\play}).
\end{align}
We call  $\NoG^{\play} \subset \Czero$ the set of functions $\func$ such that
\begin{equation}
\label{eq:no}
\forall \actprof^{-\play}\in \actions^{-\play},\ \sum_{\actprof^{\play} \in \actions^{\play}} \mumeas^{\play}(\act^{\play})\func(\act^{\play},\actprof^{-\play})=0.
\end{equation}

\begin{lemma}\label{le:chara-comp}
For $\opLam$ and $\opPi$ defined as in \cref{eq:Lambda-Pi}, the following properties hold:
\begin{enumerate}[\upshape(a)]
\item \label{it:le:chara-comp-1}
$\opPi^{\play}$ and $\opLam^{\play}$ are projections onto $\Czero$.
\item\label{it:le:chara-comp-2} For every $\func_{\Ker} \in \Ker(\opLam^{\play})$ and for every $\func_{\Ima} \in \Ima(\opLam^{\play})$,
\[
\inner*{\gammameas^{\play} \func_{\Ker}}{\gammameas^{\play} \func_{\Ima}}_{0}=0.
\]
\item \label{it:le:chara-comp-3}
We have $\NoG^{\play}=\Ker(\opLam^{\play})$ and $\NSG^{\play}=\Ima(\opLam^{\play})$.
\end{enumerate}
\end{lemma}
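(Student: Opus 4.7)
The plan is to prove the three items in the natural order (a), (c), (b), since part (c) provides concrete descriptions of $\Ker(\opLam^{\play})$ and $\Ima(\opLam^{\play})$ that make the orthogonality computation in (b) transparent.

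For part (a), the key observation is that $\opLam^{\play}(\func)(\act^{\play},\actprof^{-\play})$ is independent of $\act^{\play}$, being a weighted average over $\actions^{\play}$ for each fixed $\actprof^{-\play}$. I will compute
\[
(\opLam^{\play})^{2}(\func)(\act^{\play},\actprof^{-\play})
= \sum_{\actalt^{\play}\in\actions^{\play}} \norml{\mumeas}^{\play}(\actalt^{\play}) \opLam^{\play}(\func)(\actalt^{\play},\actprof^{-\play})
= \opLam^{\play}(\func)(\act^{\play},\actprof^{-\play}) \sum_{\actalt^{\play}\in\actions^{\play}} \norml{\mumeas}^{\play}(\actalt^{\play}),
\]
which equals $\opLam^{\play}(\func)$ since $\norml{\mumeas}^{\play}$ is a probability measure on $\actions^{\play}$. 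Linearity being clear, $\opLam^{\play}$ is a projection, and hence $\opPi^{\play}=\Id_{\Czero}-\opLam^{\play}$ satisfies $(\opPi^{\play})^{2}=\Id_{\Czero}-2\opLam^{\play}+(\opLam^{\play})^{2}=\opPi^{\play}$.

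For part (c), I will first observe that any $\func\in\Ima(\opLam^{\play})$ is constant in its $\act^{\play}$-coordinate by the very definition of $\opLam^{\play}$, so $\Ima(\opLam^{\play})\subseteq \NSG^{\play}$; conversely, if $\func(\act^{\play},\actprof^{-\play})=\nonstratf(\actprof^{-\play})$, then $\opLam^{\play}(\func)=\func$ since $\sum_{\actalt^{\play}}\norml{\mumeas}^{\play}(\actalt^{\play})=1$, so $\func\in\Ima(\opLam^{\play})$. For the kernel, $\opLam^{\play}(\func)=0$ means $\sum_{\actalt^{\play}}\norml{\mumeas}^{\play}(\actalt^{\play})\func(\actalt^{\play},\actprof^{-\play})=0$ for every $\actprof^{-\play}$, which is equivalent to \cref{eq:no} after multiplying by the (strictly positive, finite) normalizing constant $\sum_{\actalt^{\play}}\mumeas^{\play}(\actalt^{\play})$. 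Hence $\Ker(\opLam^{\play})=\NoG^{\play}$.

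For part (b), having characterized $\Ima(\opLam^{\play})$ as functions independent of $\act^{\play}$, I will write $\func_{\Ima}(\act^{\play},\actprof^{-\play})=\nonstratf(\actprof^{-\play})$ and expand using \cref{eq:prod-nu,eq:inn-prod-0}:
\begin{align*}
\inner{\gammameas^{\play}\func_{\Ker}}{\gammameas^{\play}\func_{\Ima}}_{0}
&= \sum_{\actprof\in\actions}\mumeas(\actprof)\,\gammameas^{\play}(\actprof^{-\play})^{2}\,\func_{\Ker}(\actprof)\,\nonstratf(\actprof^{-\play})\\
&= \sum_{\actprof^{-\play}\in\actions^{-\play}} \mumeas^{-\play}(\actprof^{-\play})\,\gammameas^{\play}(\actprof^{-\play})^{2}\,\nonstratf(\actprof^{-\play})\sum_{\act^{\play}\in\actions^{\play}}\mumeas^{\play}(\act^{\play})\,\func_{\Ker}(\act^{\play},\actprof^{-\play}),
\end{align*}
and conclude that the inner sum vanishes for every $\actprof^{-\play}$ by the characterization $\Ker(\opLam^{\play})=\NoG^{\play}$ from part (c).

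None of the three parts presents a genuine obstacle; the argument is essentially linear algebra combined with the separability of $\mumeas$ as a product measure. The only subtlety worth flagging is the factor $\gammameas^{\play}(\actprof^{-\play})^{2}$ appearing in (b): because $\gammameas^{\play}$ depends only on $\actprof^{-\play}$, it pulls out of the sum over $\act^{\play}$ and does not interfere with the orthogonality, which is why the $(\muprof,\nsgameg)$-weighted inner product still splits cleanly along the $\opLam^{\play}$/$\opPi^{\play}$ decomposition.
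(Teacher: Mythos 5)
Your proposal is correct and follows essentially the same route as the paper's proof: the idempotence computation for $\opLam^{\play}$ via the fact that $\norml{\mumeas}^{\play}$ is a probability distribution, the identification of $\Ima(\opLam^{\play})$ with functions independent of $\act^{\play}$ and of $\Ker(\opLam^{\play})$ with $\NoG^{\play}$ by rescaling, and the factorization of the inner product through the product structure of $\mumeas$ with $\gammameas^{\play}(\actprof^{-\play})^{2}$ pulled out of the sum over $\act^{\play}$. The only difference is that you prove (c) before (b), whereas the paper establishes the inclusion $\Ima(\opLam^{\play})\subset\NSG^{\play}$ inside the proof of (b) and reuses it in (c); this is a harmless reordering.
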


\begin{proof}
\ref{it:le:chara-comp-1} 
In view of \cref{eq:Lambda-Pi}, we have 
\begin{equation}\label{eqKerIma}
\Ker(\opPi^{\play})=\Ima(\opLam^{\play})\quad\text{and}\quad \Ker(\opLam^{\play})=\Ima(\opPi^{\play}).
\end{equation}
Thus, we only need to prove that $\opLam^{\play} \circ \opLam^{\play}=\opLam^{\play}$. Indeed,
\begin{align*}
\opLam^{\play} \circ \opLam^{\play}(\func^{\play})(\act^{\play},\actprof^{-\play})
&= \sum_{\actalt^{\play} \in \actions^{\play}}\norml{\mumeas}^{\play}(\actalt^{\play})\parens*\opLam^{\play}(\func^{\play})(\actalt^{\play},\actprof^{-\play})\\
&=\sum_{\actalt^{\play} \in \actions^{\play}}\norml{\mumeas}^{\play}(\actalt^{\play})\sum_{\actr^{\play} \in \actions^{\play}} \norml{\mumeas}^{\play}(\actr^{\play})\func^{\play}(\actr^{\play},\actprof^{-\play})\\
&=\sum_{r^{\play} \in \actions^{\play}}\norml{\mumeas}^{\play}(\actr^{\play})\parens*{\sum_{\actalt^{\play} \in \actions^{\play}} \norml{\mumeas}^{\play}(\actalt^{\play})}\func^{\play}(\actr^{\play},\actprof^{-\play})\\
&=\sum_{\actr^{\play} \in \actions^{\play}} \norml{\mumeas}^{\play}(\actr^{\play})\func^{\play}(\actr^{\play},\actprof^{-\play}),
\end{align*}
where the last equality stems from the fact that $\norml{\mumeas}^{\play}$ is probability distribution. 
Hence, $\opLam^{\play}$ is a projection operator.

\noindent
\ref{it:le:chara-comp-2} Let $\func_{\Ker} \in \Ker(\opLam^{\play})$ and $\func_{\Ima} \in \Ima(\opLam^{\play})$. 
By definition of $\opLam^{\play}$, any function in $\Ima(\opLam^{\play})$ does not depend on $\act^{\play}$, i.e., $\Ima(\opLam^{\play})\subset \NSG^{\play}$. 
Hence, there exists $\nonstratf \colon \actions^{-\play} \to \R$ such that $\nonstratf(\actprof^{-\play})=\func_{\Ima}(\act^{\play},\actprof^{-\play})$. 
Therefore,
\begin{align*}
\inner*{\gammameas^{\play}\func_{\Ker}}{\gammameas^{\play}\func_{\Ima}}_{0} 
&= \sum_{\actprof\in\actions}\mumeas(\actprof)(\gammameas^{\play}(\actprof^{-\play}))^{2}\func_{\Ker}(\actprof)\func_{\Ima}(\actprof)\\
&=\mumeas^{\play}(\actions^{\play})\sum_{\actprof^{-\play} \in \actions^{-\play}}\mumeas^{-\play}(\actprof^{-\play})(\gammameas^{\play}(\actprof^{-\play}))^{2}\nonstratf(\actprof^{-\play}) \parens*{\sum_{\act^{\play}\in\actions^{\play}}\norml{\mumeas}^{\play}(\act^{\play})\func_{\Ker}(\act^{\play},\actprof^{-\play})}\\
&=0,
\end{align*}
since $\func_{\Ker} \in \Ker(\opLam^{\play})$. 

\ref{it:le:chara-comp-3} In view of \cref{eq:no,eq:Lambda}, we have
\[
\sum_{\act^{\play} \in \actions^{\play}} \mumeas^{\play}(\act^{\play})\func(\act^{\play},\actprof^{-\play})
=\mumeas^{\play}(\actions^{\play}) \parens*{\sum_{\act^{\play} \in \actions^{\play}} \norml{\mumeas}^{\play}(\act^{\play})\func(\act^{\play},\actprof^{-\play})}
=\mumeas^{\play}(\actions^{\play})\opLam^{\play}(\func)(\actprof).
\]
Hence, we have $\Ker(\opLam^{\play})=\NoG^{\play}$. 
We saw in the proof of \ref{it:le:chara-comp-2} that $\Ima(\opLam^{\play}) \subset \NSG^{\play}$. 
Reciprocally, a function $\func\in \NSG^{\play}$ does not depend on $\act^{\play}$ and an immediate computation yields $\func=\opLam^{\play} (\func)$. 
Therefore $\func \in \Ima(\opLam^{\play})$ and we obtain $\Ima(\opLam^{\play})= \NSG^{\play}$.
\end{proof}

We now define $\opLam:\Czero^{\nplayers} \to \Czero^{\nplayers}$ and $\opPi :\Czero^{\nplayers} \to \Czero^{\nplayers}$ as 
\begin{align}
\opLam(\funcvec) &=\parens{\opLam^{1}(\func^{1}),\dots,\opLam^{\nplayers}(\func^{\nplayers})},\label{eq:Lproduct} \\
\opPi(\funcvec)&= \parens{\opPi^{1}(\func^{1}),\dots,\opPi^{\nplayers}(\func^{\nplayers})},
\end{align}
with $\funcvec=(\func^{\play})_{\play\in\players}$.

\begin{lemma} \label{le:chara-NO-NS}
\begin{enumerate}[\upshape(a)]
\item\label{it:le:chara-NO-NS-1} We have
\begin{align}
\NoG&=\braces*{ \game\in \games : \forall \play\in\players, \payoff^{\play} \in \NoG^{\play}}=\Ker(\opLam),\\
\NSG&=\braces*{\game \in \games : \forall \play\in\players, \payoff^{\play} \in \NSG^{\play}}=\Ima(\opLam).
\end{align}
\item\label{it:le:chara-NO-NS-2}
$\opPi$ and $\opLam$ are orthogonal projections onto $\games$ for $\inner{\argdot}{\argdot}_{\muprof,\nsgameg}$.
\end{enumerate}
\end{lemma}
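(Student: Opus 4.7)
The plan is to derive this lemma as an essentially formal consequence of \cref{le:chara-comp} applied componentwise, together with unpacking the definitions of $\muprof\NoGd$ and $\NSG$.

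For part \ref{it:le:chara-NO-NS-1}, I would first compare the defining condition \cref{eq:mu-normalized} for a $\muprof$-normalized game with the defining condition \cref{eq:no} for $\NoG^{\play}$: since \cref{eq:mu-normalized} must hold for every $\play$ and every $\actprof^{-\play}$, it is equivalent to $\payoff^{\play} \in \NoG^{\play}$ for each $\play \in \players$. The same direct comparison between \cref{eq:nonstrategic} and \cref{eq:ns} yields that $\game \in \NSG$ iff $\payoff^{\play} \in \NSG^{\play}$ for each $\play$. The remaining equalities then follow from \cref{le:chara-comp}\ref{it:le:chara-comp-3} applied componentwise, together with the definition \cref{eq:Lproduct} of $\opLam$: since $\opLam$ acts coordinatewise, $\funcvec \in \Ker(\opLam)$ iff $\func^{\play} \in \Ker(\opLam^{\play})$ for all $\play$, and similarly for $\Ima(\opLam)$.

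For part \ref{it:le:chara-NO-NS-2}, the idempotence of $\opLam$ and $\opPi$ on $\Czero^{\nplayers}$ is immediate from the coordinatewise definition and \cref{le:chara-comp}\ref{it:le:chara-comp-1}: for instance, $\opLam \circ \opLam(\funcvec) = \parens*{\opLam^{\play} \circ \opLam^{\play}(\func^{\play})}_{\play} = \parens*{\opLam^{\play}(\func^{\play})}_{\play} = \opLam(\funcvec)$, and analogously for $\opPi$. The substantive point is orthogonality of $\Ker(\opLam)$ and $\Ima(\opLam)$ with respect to $\inner{\argdot}{\argdot}_{\muprof,\nsgameg}$. Given $\game_{1}\in\Ker(\opLam)$ and $\game_{2}\in\Ima(\opLam)$, I would plug into \cref{eq:g1g2mu} to obtain
\[
\inner{\game_{1}}{\game_{2}}_{\muprof,\nsgameg}
= \sum_{\play\in\players} \mumeas^{\play}(\actions^{\play}) \inner{\gammameas^{\play}\payoff_{1}^{\play}}{\gammameas^{\play}\payoff_{2}^{\play}}_{0},
\]
and observe that, by the componentwise characterisation just established, each $\payoff_{1}^{\play}\in\Ker(\opLam^{\play})$ and each $\payoff_{2}^{\play}\in\Ima(\opLam^{\play})$, so \cref{le:chara-comp}\ref{it:le:chara-comp-2} makes every summand vanish. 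Hence $\inner{\game_{1}}{\game_{2}}_{\muprof,\nsgameg}=0$. The same argument with the roles of $\opLam$ and $\opPi$ exchanged (using $\opPi = \Id_{\Czero} - \opLam$ and \cref{eqKerIma}) handles $\opPi$.

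I do not anticipate any real obstacle here: the only point requiring care is to keep the bookkeeping straight, namely that the weights $\mumeas^{\play}(\actions^{\play})$ appearing in \cref{eq:g1g2mu} are irrelevant once one knows that each inner product $\inner{\gammameas^{\play}\payoff_{1}^{\play}}{\gammameas^{\play}\payoff_{2}^{\play}}_{0}$ already vanishes. Everything else is bookkeeping that translates between the scalar-valued setting of \cref{le:chara-comp} and the vector-valued setting of the present lemma.
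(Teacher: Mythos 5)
Your proposal is correct and follows essentially the same route as the paper: part (a) by unpacking the definitions and applying \cref{le:chara-comp}\ref{it:le:chara-comp-3} componentwise, and part (b) by combining the coordinatewise idempotence with the vanishing of each summand $\inner{\gammameas^{\play}\payoff_{1}^{\play}}{\gammameas^{\play}\payoff_{2}^{\play}}_{0}$ via \cref{le:chara-comp}\ref{it:le:chara-comp-2}. The only difference is that you spell out details the paper leaves implicit (the equivalence of \cref{eq:mu-normalized} with \cref{eq:no}, and the idempotence check), which is harmless.
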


\begin{proof}
\ref{it:le:chara-NO-NS-1} 
This follows immediately from \cref{le:chara-comp}\ref{it:le:chara-comp-3}.

\noindent
\ref{it:le:chara-NO-NS-2} We have $\Ker(\opLam)=\NoG$ and $\Ima(\opLam)=\NSG$, where $\opLam$ is defined as in \cref{eq:Lproduct}. 
Let $\game_{\NoG} \in \NoG$ and $\game_{\NSG} \in \NSG$. 
Then, using the definition of the scalar product of \cref{eq:g1g2mu} and \cref{le:chara-comp}\ref{it:le:chara-comp-2}, we obtain
\begin{align*}
\inner{\game_{\NoG}}{\game_{\NSG}}_{\muprof,\nsgameg} 
= \sum_{\play\in\players} \mumeas^{\play}(\actions^{\play}) \inner*{\gammameas^{\play} \payoff_{\NoG}^{\play}}{\gammameas^{\play}\payoff_{\NSG}^{\play}}_{0}= 0,
\end{align*}
since $\payoff^{\play}_{\NoG} \in \Ker(\opLam^{\play})$ and $\payoff^{\play}_{\NSG} \in \Ima(\opLam^{\play})$ for every $\play\in\players$.
\end{proof}

\begin{proof}[Proof of \cref{pr:direct-sum}]
This is an immediate corollary of \cref{le:chara-NO-NS}. We showed that $\NoG$ and $\NSG$ are orthogonal subspaces and further that any game $\game$ can be decomposed into $\opLam(\game) \in \NSG$ and $(\Id_{\games}-\opLam )(\game) \in \NoG$.
\end{proof}


\subsection*{Proofs of \cref{suse:mu-eta-decomposition}}

We first give the definition of flow associated to a game and then, in \cref{pr:operator}, we characterize the different classes of games introduced in \cref{de:classes-of-games}. 
Then, from these characterizations and from the properties of the Moore-Penrose pseudo-inverse,  we  deduce  the proof of \cref{th:ortho-sum}.

Let \begin{equation}
\label{eq:C-one}
\Cone \coloneqq \braces{\flow \colon \actions\times \actions \to \R \text{ such that } \flow(\actprof,\actaltprof)=-\flow(\actaltprof,\actprof), \forall  \actprof,\actaltprof \in \actions}
\end{equation} 
be the set of flows. 
We endow $\Cone$ with the following inner product:
\begin{equation} 
\label{eq:inn-prod-1}
\forall \flow,\flowalt\in\Cone, \quad \inner*{\flow}{\flowalt}_{1}
=\frac{1}{2}\sum_{\actprof,\actaltprof\in\actions}\mumeas(\actprof)\mumeas(\actaltprof)\flow(\actprof,\actaltprof)\flowalt(\actprof,\actaltprof).
\end{equation}

To any game $\game$, we associate an undirected graph as follows. Given a pair of strategy profiles $(\actprof,\actaltprof)\in\actions\times\actions$, if there exists a unique $\play\in\players$, such that $\act^{\play} \neq \actalt^{\play}$ then $(\actprof,\actaltprof)$ will be referred to as an $\play$-\emph{comparable profile pair}. 
The set of $\play$-comparable profile pairs will be denoted by  $\compar^{\play} \subset \actions \times \actions$.  
For any two different players $\play$ and $\playalt$ we have: $\compar^{\play} \cap \compar^{\playalt}=\emptyset$. 
We call $\compar$ the set of \emph{comparable profile pairs}, i.e.,  $\compar=\cup_{\play\in\players}\compar^{\play}$. 
Following \citet{CanMenOzdPar:MOR2011}, for any given game $\game$, we associate to player $\play$ an undirected graph defined as $\graph^{\play}\coloneqq(\actions,\compar^{\play})$ 
and we further associate to the game $\game$ the disjoint union of the graphs $\graph^{\play}$, denoted by $\graph\coloneqq(\actions,\compar)$. 

For any $\play\in\players$, let $\symmfunc^{\play} \colon \actions\times\actions \to \R$ be the nonnegative symmetric function defined as
\begin{equation}
\label{eq:symm-func}
{\symmfunc^{\play}}(\actprof,\actaltprof)=
\begin{dcases}
\frac{1}{\sqrt{\mumeas^{-\play}(\actprof^{-\play})}}&\text{if }(\actprof,\actaltprof) \in \compar^{\play},\\
0&\text{otherwise}.
\end{dcases} 
\end{equation}
Recalling that any pair of strategy profiles cannot be comparable for  more than one player, we have:
\begin{equation}
\symmfunc^{\play}(\actprof,\actaltprof)\symmfunc^{\playalt}(\actprof,\actaltprof) = 0,\quad\text{for all $\playalt\neq\play$ and $\actprof,\actaltprof\in\actions$}. 
\end{equation}

To any player $\play$, we associate the \emph{partial gradient operator} $\gradop^{\play}:\Czero \rightarrow \Cone$, defined for any $\func\in\Czero$ as follows:
\begin{equation}
\gradop^{\play}(\func)({\actprof,\actaltprof})
=\symmfunc^{\play}(\actprof,\actaltprof) \parens*{\func(\actaltprof)-\func(\actprof)}.\label{eq:deltaW}
\end{equation}
The gradient operator $\gradop$ on $\graph$ is defined as $\gradop=\sum_{\play\in\players}\gradop^{\play}$.

We now introduce the adjoint operators. 
Recall that in \cref{eq:inn-prod-0} we have considered the following inner product on $\Czero$:
\[
\forall \ \func,\funcalt \in \Czero, \ \inner{\func}{\funcalt}_{0}=\sum_{\actprof\in\actions} \mumeas(\actprof)\func(\actprof)\funcalt(\actprof).
\]
The adjoint of $\gradop^{\play}$, denoted by $\gradop^{\play*} \colon \Cone\to\Czero$ is the unique linear operator satisfying: 
\begin{equation} 
\label{eq:deltak-fk}
\inner{\gradop^{\play}\func}{\flow}_{1}=\inner{\func}{\gradop^{\play*}\flow}_{0},
\end{equation}
for any $\func\in\Czero$, $\flow\in\Cone$. 
By linearity of dual operations, we obtain that the dual of $\gradop$ satisfies $\gradop^{*}=\sum_{\play\in\players} \gradop^{\play*}$. 
Moreover, we have the following explicit expression for $\gradop^{\play*}$.

\begin{proposition} \label{pr:adjoint}  
The adjoint $\gradop^{\play*}$ of the gradient operator $\gradop^{\play}$ satisfies for any $\flow\in\Cone$:
\begin{align}
\label{eq:adjoint}
\forall \actprof\in\actions, \quad \gradop^{\play*}\flow(\actprof)
&=-\sum_{\actaltprof\in\actions}\mumeas(\actaltprof)\symmfunc^{\play}(\actprof,\actaltprof)\flow(\actprof,\actaltprof)\\
&=-\sum_{\actaltprof:(\actprof,\actaltprof)\in \compar^{\play}}\mumeas^{\play}(\actalt^{\play})\sqrt{\mumeas^{-\play}(\actprof^{-\play})}\flow(\actprof,\actaltprof).
\end{align}
\end{proposition}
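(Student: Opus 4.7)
The plan is to verify directly that the formula in the statement satisfies the defining adjoint identity \eqref{eq:deltak-fk}; since the adjoint of a linear operator between finite-dimensional inner product spaces is unique, this will suffice. I would start from $\inner{\gradop^{\play}\func}{\flow}_{1}$, expanding it via \eqref{eq:inn-prod-1} and \eqref{eq:deltaW}:
\[
\inner{\gradop^{\play}\func}{\flow}_{1}=\frac{1}{2}\sum_{\actprof,\actaltprof\in\actions}\mumeas(\actprof)\mumeas(\actaltprof)\symmfunc^{\play}(\actprof,\actaltprof)\parens*{\func(\actaltprof)-\func(\actprof)}\flow(\actprof,\actaltprof).
\]

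Next, I would split the sum into the two contributions from $\func(\actaltprof)$ and $-\func(\actprof)$. In the $\func(\actaltprof)$ part, swap the dummy indices $\actprof\leftrightarrow\actaltprof$ and invoke the symmetry $\symmfunc^{\play}(\actaltprof,\actprof)=\symmfunc^{\play}(\actprof,\actaltprof)$ together with the antisymmetry $\flow(\actaltprof,\actprof)=-\flow(\actprof,\actaltprof)$ from \eqref{eq:C-one}. The two pieces coincide, so the factor $\tfrac{1}{2}$ cancels and one obtains
\[
\inner{\gradop^{\play}\func}{\flow}_{1}=-\sum_{\actprof\in\actions}\mumeas(\actprof)\func(\actprof)\parens*{\sum_{\actaltprof\in\actions}\mumeas(\actaltprof)\symmfunc^{\play}(\actprof,\actaltprof)\flow(\actprof,\actaltprof)}.
\]
Comparing with \eqref{eq:inn-prod-0}, this is exactly $\inner{\func}{\gradop^{\play*}\flow}_{0}$ when $\gradop^{\play*}\flow$ is defined as in the first displayed equation of the proposition. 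By the uniqueness of the adjoint, the first equality follows.

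For the second equality, observe from \eqref{eq:symm-func} that the summand vanishes unless $(\actprof,\actaltprof)\in\compar^{\play}$, in which case $\actaltprof^{-\play}=\actprof^{-\play}$. Using \eqref{eq:prod-nu} to factor $\mumeas(\actaltprof)=\mumeas^{\play}(\actalt^{\play})\mumeas^{-\play}(\actprof^{-\play})$ and plugging in the value $\symmfunc^{\play}(\actprof,\actaltprof)=1/\sqrt{\mumeas^{-\play}(\actprof^{-\play})}$ yields $\mumeas(\actaltprof)\symmfunc^{\play}(\actprof,\actaltprof)=\mumeas^{\play}(\actalt^{\play})\sqrt{\mumeas^{-\play}(\actprof^{-\play})}$, which restricted to $\compar^{\play}$ gives the second form.

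The argument is essentially bookkeeping, so there is no substantive obstacle; the only delicate step is the antisymmetrization that eliminates the factor $\tfrac12$ coming from the inner product on $\Cone$, which is where the symmetry of $\symmfunc^{\play}$ and antisymmetry of $\flow$ must be used in tandem.
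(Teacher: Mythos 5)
Your proof is correct and follows essentially the same route as the paper's: both reduce to expanding $\inner{\gradop^{\play}\func}{\flow}_{1}$ and exploiting the antisymmetry of $\flow$ together with the symmetry of $\symmfunc^{\play}$ to merge the two halves of the sum and cancel the factor $1/2$. The only cosmetic difference is that the paper tests against the orthonormal basis $\parens{\basis_{\actrprof}}_{\actrprof\in\actions}$ to read off the coefficients of $\gradop^{\play*}\flow$ directly, whereas you test against an arbitrary $\func$ and invoke uniqueness of the adjoint; the underlying computation is identical.
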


\begin{proof}
We introduce the  basis $\parens{\basis_{\actrprof}}_{\actrprof\in\actions}$ of $\Czero$,  defined as
\begin{equation} \label{eq:basis1}
\basis_{\actrprof}(\actprof)=
\begin{dcases}\frac{1}{\sqrt{\mumeas(\actrprof)}}&\text{if }\actprof=\actrprof,\\
0 &\text{otherwise}.
\end{dcases}
\end{equation}
This basis is orthonormal with respect to the inner product in  \cref{eq:inn-prod-0}.
For any $\flow\in\Cone$, we have
\begin{equation*}
\gradop^{\play*}\flow = \sum_{\actrprof \in S}\inner*{\basis_{\actrprof}}{\gradop^{\play*}\flow}_{0} \basis_{\actrprof}
\end{equation*}
and thus, 
\begin{equation*}
\gradop^{\play*}\flow(\actprof_{0})=\frac{1}{\sqrt{\mumeas(\actprof_{0})}}\inner{ \basis_{\actprof_{0}}}{\gradop^{\play*}\flow}_{0}.
\end{equation*} 
By using the relation between $\gradop^{\play*}$ and $\gradop^{\play}$ in \cref{eq:deltak-fk} and then the definition of $\gradop$ in \cref{eq:deltaW}, we get
\begin{align*}
\gradop^{\play*}\flow(\actprof_{0})&=\frac{1}{\sqrt{\mumeas(\actprof_{0})}}\inner*{\gradop\basis_{\actprof_{0}}}{\flow}_{1}\\
& =\frac{1}{2\sqrt{\mumeas(\actprof_{0})}}\sum_{\actprof,\actaltprof\in\actions}\mumeas(\actprof)\mumeas(\actaltprof)\parens*{\gradop\basis_{\actprof_{0}}}(\actprof,\actaltprof)\flow(\actprof,\actaltprof)\\
& = \frac{1}{2\sqrt{\mumeas(\actprof_{0})}}\left(\sum_{\actprof,\actaltprof\in\actions}\mumeas(\actprof)\mumeas(\actaltprof)\symmfunc^{\play}(\actprof,\actaltprof)\basis_{\actprof_{0}}(\actaltprof)\flow(\actprof,\actaltprof)\right.\\
&\quad\left.\quad-\sum_{\actprof,\actaltprof\in\actions}\mumeas(\actprof)\mumeas(\actaltprof)\symmfunc^{\play}(\actprof,\actaltprof)\basis_{\actprof_{0}}(\actprof)\flow(\actprof,\actaltprof)\right).
\end{align*}
Using the definition of $\basis_{\actprof_{0}}$ in \cref{eq:basis1}, we obtain
\begin{align*}
\gradop^{\play*}\flow(\actprof_{0})
&=\frac{1}{2\sqrt{\mumeas(\actprof_{0})}}\left(\sum_{\actprof\in\actions}\mumeas(\actprof)\mumeas(\actprof_{0})\frac{1}{\sqrt{\mumeas(\actprof_{0})}}\symmfunc^{\play}(\actprof,\actprof_{0})\flow(\actprof,\actprof_{0}) \right.\\
& \quad \left.-\sum_{\actaltprof\in\actions}\mumeas(\actaltprof)\mumeas(\actprof_{0})\frac{1}{\sqrt{\mumeas(\actprof_{0})}}\symmfunc^{\play}(\actprof_{0},\actaltprof)\flow(\actprof_{0},\actaltprof)\right)\\
 &=\frac{1}{2}\parens*{\sum_{\actprof\in\actions}\mumeas(\actprof)\symmfunc^{\play}(\actprof,\actprof_{0})\flow(\actprof,\actprof_{0})-\sum_{\actaltprof\in\actions}\mumeas(\actaltprof)\symmfunc^{\play}(\actprof_{0},\actaltprof)\flow(\actprof_{0},\actaltprof)}\\
&=-\frac{1}{2}\sum_{\actprof\in S}\mumeas(\actprof)\symmfunc^{\play}(\actprof_{0},\actprof)\flow(\actprof_{0},\actprof)-\frac{1}{2}\sum_{\actaltprof \in S}\mumeas(\actaltprof)\symmfunc^{\play}(\actprof_{0},\actaltprof)\flow(\actprof_{0},\actaltprof)\\
&=-\sum_{\actaltprof \in S}\mumeas(\actaltprof)\symmfunc^{\play}(\actprof_{0},\actaltprof)\flow(\actprof_{0},\actaltprof),
\end{align*}
where the third equality is due to the skew-symmetric structure of $\flow$ and the last equality is simply obtained by merging the two sums.
\end{proof}

To $\graph$ we associate the \emph{joint embedding operator} $\embop \colon \Czero^{\nplayers} \to \Cone$. 
The operator $\embop$ maps a game $\game$ into a flow $\embop(\game)$. 
It is defined for any $\game \in \Czero^{\nplayers}$ as
\begin{align}\label{eq:D}
\embop(\game)=\sum_{\play\in\players}\gradop^{\play} (\gammameas^{\play} \payoff^{\play}).
\end{align}

\begin{lemma}
\label{le:deltas}
The following properties hold:
\begin{enumerate}[\upshape(a)]
\item
\label{it:le:delta-1}
For all $\play\neq\playalt$, we have
\begin{equation}\label{eq:delta-1}
\gradop^{\play*} \gradop^{\playalt} = 0.
\end{equation}

\item
\label{it:le:delta-2}
For all $\game\in \Czero^{\nplayers}$, we have
\begin{equation}\label{eq:delta-2}
\gradop^{*}\embop(\game)
=\sum_{\play\in\players}\gradop^{\play*}\gradop^{\play}(\gammameas^{\play} \payoff^{\play}).
\end{equation}

\item
\label{it:le:delta-3}
For all $\func\in\Czero$, we have
\begin{equation}\label{eq:delta-3}
\gradop^{\play*}\gradop^{\play}(\gammameas^{\play}\func)
=\gammameas^{\play}\gradop^{\play*}\gradop^{\play}(\func).
\end{equation}
\end{enumerate}
\end{lemma}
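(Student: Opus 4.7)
The three statements follow directly from the two key structural facts already built into the definitions: (i) the ``locality'' of the partial gradient $\gradop^{\play}$, namely that $\gradop^{\play}(\func)(\actprof,\actaltprof)$ is supported on $\compar^{\play}$, and (ii) the disjointness $\compar^{\play}\cap\compar^{\playalt}=\emptyset$ whenever $\play\neq\playalt$. All three parts are essentially computational, so I will first prove (a), then deduce (b) as a formal consequence, and finally verify (c) by exploiting the fact that on $\compar^{\play}$ the $-\play$-coordinates agree.

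For part (a), I would fix $\func\in\Czero$ and apply the explicit formula from \cref{pr:adjoint}:
\[
\gradop^{\play*}\gradop^{\playalt}(\func)(\actprof)
=-\sum_{\actaltprof\in\actions}\mumeas(\actaltprof)\,\symmfunc^{\play}(\actprof,\actaltprof)\,\gradop^{\playalt}(\func)(\actprof,\actaltprof).
\]
Since $\gradop^{\playalt}(\func)(\actprof,\actaltprof)=\symmfunc^{\playalt}(\actprof,\actaltprof)(\func(\actaltprof)-\func(\actprof))$, each summand contains the product $\symmfunc^{\play}(\actprof,\actaltprof)\symmfunc^{\playalt}(\actprof,\actaltprof)$, which vanishes by \cref{eq:symm-func} because a pair cannot be both $\play$-comparable and $\playalt$-comparable when $\play\neq\playalt$. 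Hence the whole sum is zero.

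For part (b), I would simply combine the linearity of $\gradop^{*}=\sum_{\play}\gradop^{\play*}$ with the definition $\embop(\game)=\sum_{\playalt}\gradop^{\playalt}(\gammameas^{\playalt}\payoff^{\playalt})$ from \cref{eq:D}, yielding
\[
\gradop^{*}\embop(\game)
=\sum_{\play\in\players}\sum_{\playalt\in\players}\gradop^{\play*}\gradop^{\playalt}(\gammameas^{\playalt}\payoff^{\playalt}).
\]
By part (a), only the diagonal terms $\play=\playalt$ contribute, giving precisely \cref{eq:delta-2}.

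For part (c), the key observation is that if $(\actprof,\actaltprof)\in\compar^{\play}$ then $\actprof^{-\play}=\actaltprof^{-\play}$, so $\gammameas^{\play}(\actprof^{-\play})=\gammameas^{\play}(\actaltprof^{-\play})$. Using the notation from \cref{eq:eta-minus-i}, I would compute
\[
\gradop^{\play}(\gammameas^{\play}\func)(\actprof,\actaltprof)
=\symmfunc^{\play}(\actprof,\actaltprof)\bigl(\gammameas^{\play}(\actaltprof^{-\play})\func(\actaltprof)-\gammameas^{\play}(\actprof^{-\play})\func(\actprof)\bigr)
=\gammameas^{\play}(\actprof^{-\play})\,\gradop^{\play}(\func)(\actprof,\actaltprof),
\]
since outside of $\compar^{\play}$ both sides vanish and inside of $\compar^{\play}$ the factor $\gammameas^{\play}(\actprof^{-\play})$ can be pulled out. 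Applying $\gradop^{\play*}$ via \cref{eq:adjoint} and pulling $\gammameas^{\play}(\actprof^{-\play})$ out of the sum (this is legitimate because $\gammameas^{\play}$ is constant in $\actaltprof$ for $(\actprof,\actaltprof)\in\compar^{\play}$) then yields $\gammameas^{\play}(\actprof^{-\play})\,\gradop^{\play*}\gradop^{\play}(\func)(\actprof)$, which is exactly the right-hand side of \cref{eq:delta-3}. The only subtle point, and the one to watch, is making clear that the factor $\gammameas^{\play}(\actprof^{-\play})$ does not depend on the summation variable $\actaltprof$ as long as $\symmfunc^{\play}(\actprof,\actaltprof)\neq 0$; once this is noted, the identity is immediate.
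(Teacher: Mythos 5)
Your proposal is correct and follows essentially the same route as the paper's own proof: part (a) via the explicit adjoint formula and the vanishing of $\symmfunc^{\play}\symmfunc^{\playalt}$ on account of $\compar^{\play}\cap\compar^{\playalt}=\emptyset$, part (b) by expanding $\gradop^{*}\embop$ and killing the cross terms with (a), and part (c) by observing that $\actprof^{-\play}=\actaltprof^{-\play}$ on $\compar^{\play}$ so that $\gammameas^{\play}(\actprof^{-\play})$ factors out of the sum. The only cosmetic difference is that you pull the factor out at the level of $\gradop^{\play}$ first and then apply $\gradop^{\play*}$, whereas the paper does both steps in a single computation; the content is identical.
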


\begin{proof}
\noindent
\ref{it:le:delta-1}
Let $\func\in\Czero$.
By \cref{eq:adjoint}, for all $\actprof\in\actions$, we have
\begin{align*}
\gradop^{\play*}(\gradop^{\playalt}\func)(\actprof)&=-\sum_{\actaltprof\in\actions}\mumeas(\actaltprof)\symmfunc^{\play}(\actprof,\actaltprof)(\gradop^{\playalt}\func)(\actprof,\actaltprof)\\
&=-\sum_{\actaltprof\in\actions}\mumeas(\actaltprof)\symmfunc^{\play}(\actprof,\actaltprof)\symmfunc^{\playalt}(\actprof,\actaltprof)\parens*{\func(\actaltprof)-\func(\actprof)}\\
&=0,
\end{align*}
since  for $\play\neq\playalt$, we have $\symmfunc^{\play}(\actprof,\actaltprof)\symmfunc^{\playalt}(\actprof,\actaltprof)=0$. 

\noindent
\ref{it:le:delta-2}
Let $\game \in \games$. We then get
\begin{equation*}
\gradop^{*}\embop(\game)=\sum_{\play\in\players}\gradop^{\play*} \parens*{\sum_{\playalt\in\players}\gradop^{\playalt}(\gammameas^{\playalt} \payoff^{\playalt})}
=\sum_{\play\in\players}\gradop^{\play*}\gradop^{\play}(\gammameas^{\play} \payoff^{\play}),
\end{equation*}
since, by \ref{it:le:delta-1}, all cross-products are equal to $0$.

\noindent
\ref{it:le:delta-3}
Let $\func\in\Czero$. We have, for all $\actprof\in\actions$,
\begin{align*}
\gradop^{\play*}(\gradop^{\play} \gammameas^{\play}\func)(\actprof)
&=-\sum_{\actaltprof\in\actions}\mumeas(\actaltprof)\symmfunc^{\play}(\actprof,\actaltprof)(\gradop^{\play} \gammameas^{\play}\func)(\actprof,\actaltprof)\\
& \hspace{-10mm} =-\sum_{\actalt^{\play}\in\actions^{\play}} \mumeas(\actalt^{\play},\actprof^{-\play})\symmfunc^{\play}\parens*{\actprof,(\actalt^{\play},\actprof^{-\play})}\symmfunc^{\play}\parens*{\actprof,(\actalt^{\play},\actprof^{-\play})}\gammameas^{\play}(\actprof^{-\play})\parens*{\func(\act^{\play},\actprof^{-\play})-\func(\actalt^{\play},\actprof^{-\play})}\\
&\hspace{-10mm} =\gammameas^{\play}(\actprof^{-\play})\parens*{-\sum_{\actalt^{\play}\in\actions^{\play}} \mumeas(\actalt^{\play},\actprof^{-\play})\symmfunc^{\play}\parens*{\actprof,(\actalt^{\play},\actprof^{-\play})}\symmfunc^{\play}\parens*{\actprof,(\actalt^{\play},\actprof^{-\play})}\parens*{\func(\act^{\play},\actprof^{-\play})-\func(\actalt^{\play},\actprof^{-\play})}}\\
& \hspace{-10mm}= \gammameas^{\play}(\actprof^{-\play})\gradop^{\play*}(\gradop^{\play}\func)(\actprof). \qedhere
\end{align*}
\end{proof}

At this point, we can relate the classes of games introduced in \cref{de:classes-of-games} to the previously defined operators. 

\begin{proposition}\label{pr:operator} 
We have:
\begin{enumerate}[\upshape(a)]
\item
\label{it:pr:operator-a}
 $\NSG=\braces{ \game\in \games \colon \embop(\game)=0 }$.
\item
\label{it:pr:operator-b}
 $\nsgameg\PGd=\braces{ \game\in \games \colon \embop(\game) \in \Ima \gradop}=(\embop)^{-1}(\Ima(\gradop))$.
\item
\label{it:pr:operator-c}
 $(\muprof,\nsgameg)\HGd=\braces{\game\in \games \colon  \embop(\game)\in \Ker\gradop^{*}}$.
\end{enumerate}
\end{proposition}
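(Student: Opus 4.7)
The strategy is to translate each of the three conditions into a property of the flow $\embop(\game)=\sum_{\play\in\players}\gradop^{\play}(\gammameas^{\play}\payoff^{\play})$, exploiting the pairwise disjointness of the edge sets $\compar^{\play}$. Each summand $\gradop^{\play}(\gammameas^{\play}\payoff^{\play})$ is supported on $\compar^{\play}$ and evaluates there to $\symmfunc^{\play}(\actprof,\actaltprof)\,\gammameas^{\play}(\actprof^{-\play})\parens*{\payoff^{\play}(\actaltprof)-\payoff^{\play}(\actprof)}$, with $\symmfunc^{\play}$ and $\gammameas^{\play}$ strictly positive there. Consequently, the individual summands are recovered from $\embop(\game)$ by restriction to each $\compar^{\play}$, and membership conditions on $\game$ transfer to conditions on the individual $\gradop^{\play}(\gammameas^{\play}\payoff^{\play})$.

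Part (\ref{it:pr:operator-a}) follows at once: by the disjointness argument, $\embop(\game)=0$ is equivalent to $\gradop^{\play}(\gammameas^{\play}\payoff^{\play})=0$ for every $\play$, which—by positivity of $\symmfunc^{\play}$ and $\gammameas^{\play}$—amounts to $\payoff^{\play}(\actaltprof)=\payoff^{\play}(\actprof)$ whenever $\actaltprof^{-\play}=\actprof^{-\play}$, i.e., $\game\in\NSG$. For part (\ref{it:pr:operator-b}), the $\nsgameg$-potential identity \eqref{eq:eta-potential} read on a pair $(\actprof,\actaltprof)\in\compar^{\play}$ and multiplied by $\symmfunc^{\play}(\actprof,\actaltprof)$ is exactly $\gradop^{\play}(\gammameas^{\play}\payoff^{\play})=\gradop^{\play}(\potent)$ on $\compar^{\play}$; summing over $\play$ gives $\embop(\game)=\gradop(\potent)\in\Ima(\gradop)$. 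Conversely, if $\embop(\game)=\gradop(\potent)$ for some $\potent\in\Czero$, restriction to each $\compar^{\play}$ recovers $\gradop^{\play}(\gammameas^{\play}\payoff^{\play})=\gradop^{\play}(\potent)$, and dividing by $\symmfunc^{\play}>0$ yields \eqref{eq:eta-potential} with potential $\potent$.

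For part (\ref{it:pr:operator-c}), I will use Lemma~\ref{le:deltas}\ref{it:le:delta-2} to write $\gradop^{*}\embop(\game)=\sum_{\play}\gradop^{\play*}\gradop^{\play}(\gammameas^{\play}\payoff^{\play})$ and then evaluate each term via \eqref{eq:adjoint} and \eqref{eq:deltaW}. Because $\symmfunc^{\play}(\actprof,\actaltprof)^{2}=1/\mumeas^{-\play}(\actprof^{-\play})$ on $\compar^{\play}$ and there $\mumeas(\actaltprof)=\mumeas^{\play}(\actalt^{\play})\mumeas^{-\play}(\actprof^{-\play})$, the factors $\mumeas^{-\play}(\actprof^{-\play})$ cancel and one obtains
\begin{equation*}
\gradop^{*}\embop(\game)(\actprof)=\sum_{\play\in\players}\sum_{\actalt^{\play}\in\actions^{\play}}\mumeas^{\play}(\actalt^{\play})\gammameas^{\play}(\actprof^{-\play})\parens*{\payoff^{\play}(\act^{\play},\actprof^{-\play})-\payoff^{\play}(\actalt^{\play},\actprof^{-\play})}.
\end{equation*}
This vanishes for every $\actprof\in\actions$ precisely when \eqref{eq:mu-eta-harmonic} holds, giving $\embop(\game)\in\Ker\gradop^{*}$ if and only if $\game\in(\muprof,\nsgameg)\HGd$.

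The main obstacle is the bookkeeping in part (\ref{it:pr:operator-c}): the cancellation between $\symmfunc^{\play}$ and the $\mumeas^{-\play}$-weights inherent in the adjoint computation must be tracked carefully, and it relies on the particular normalization of $\symmfunc^{\play}$ in \eqref{eq:symm-func} having been chosen precisely so that $\gradop^{\play*}\gradop^{\play}$ produces the $\mumeas^{\play}$-weighted deviation sums appearing in the harmonic condition. Parts (\ref{it:pr:operator-a}) and (\ref{it:pr:operator-b}) are routine once the disjointness of the $\compar^{\play}$ is in hand.
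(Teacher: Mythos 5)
Your proposal is correct and follows essentially the same route as the paper: parts (a) and (b) via the disjointness of the edge sets $\compar^{\play}$ together with strict positivity of $\symmfunc^{\play}$ and $\gammameas^{\play}$, and part (c) via \cref{le:deltas}\ref{it:le:delta-2} and the explicit adjoint formula, where the $\mumeas^{-\play}$ factors cancel against $\symmfunc^{\play}{}^{2}$ exactly as you describe. (Your sign in the displayed identity for $\gradop^{*}\embop(\game)$ differs from the paper's by an overall factor $-1$, which is immaterial since the expression is equated to zero.)
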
 

\begin{proof}
\ref{it:pr:operator-a} 
Let $\game \in \games$.  \cref{eq:D} implies that $\embop(\game) \in 0$ if and only if, for all $\play\in\players$ and for all $\actprof,\actaltprof \in S$, we have
\begin{equation}\label{eq:Wi}
\symmfunc^{\play}(\actprof,\actaltprof)\parens*{\gammameas^{\play}(\actprof^{-\play})\payoff^{\play}(\actprof)-\gammameas^{\play}(\actprof^{-\play})\payoff^{\play}(\actaltprof)}=0.
\end{equation}
Since $\symmfunc^{\play}(\actprof,\actaltprof)$ is strictly positive on $\compar^{\play}$ and $\gammameas^{\play}$ is strictly positive, \cref{eq:Wi} holds if and only if, for all $\play\in\players$, for all $\actprof^{-\play}\in\actions^{-\play}$, and for all $\act^{\play},\actalt^{\play}\in\actions^{\play}$, we have
\[ 
\payoff^{\play}(\act^{\play},\actprof^{-\play})-\payoff^{\play}(\actalt^{\play},\actprof^{-\play})=0.
\]
Hence $\payoff^{\play}$ does not depend on player~$\play$'s strategy, which, by \cref{de:classes-of-games}\ref{it:classes-nonstrategic}, means that the game is nonstrategic.

\noindent
\ref{it:pr:operator-b} Let $\game\in\games$. Then, by \cref{eq:D}, $\embop(\game) \in \Ima(\gradop)$ if and only if there exists $\funcphi\in\Czero$ such that for every $\play\in\players$, and $\actprof,\actaltprof\in\actions$, we have
\[
\symmfunc^{\play}(\actprof,\actaltprof)\parens*{\gammameas^{\play}(\actprof^{-\play})\payoff^{\play}(\actprof)-\gammameas^{\play}(\actprof^{-\play})\payoff^{\play}(\actaltprof)}
=\symmfunc^{\play}(\actprof,\actaltprof)\parens*{\funcphi(\actprof)-\funcphi(\actaltprof)}.
\]
We equivalently have, for any $\act^{\play},\actalt^{\play}\in\actions^{\play}$ and any $\actprof^{-\play}\in\actions^{-\play}$,
\begin{align*}
\funcphi(\act^{\play},\actprof^{-\play})-\funcphi(\actalt^{\play},\actprof^{-\play})& =\gammameas^{\play}(\actprof^{-\play})\parens*{\payoff^{\play}(\act^{\play},\actprof^{-\play})-\payoff^{\play}(\actalt^{\play},\actprof^{-\play})}.
\end{align*}
Hence, the result follows from \cref{de:classes-of-games}\ref{it:classes-eta-potential}.

\noindent
\ref{it:pr:operator-c} 
Let $\game$ be a $(\muprof,\nsgameg)$-harmonic game. 
In view of  \ref{it:le:delta-2}, we can write the condition in terms of gradient and adjoint operator. 
By replacing them with their explicit expression, we obtain that, for any $\actprof\in\actions$, 
\begin{align*}
\gradop^{*}(\embop(\game))(\actprof)
&=\sum_{\play\in\players}\gradop^{\play*}\gradop^{\play} (\gammameas^{\play}\payoff^{\play})(\actprof)\\
&=\sum_{\play\in\players} \parens*{-\sum_{\actaltprof \in \actions}\mumeas(\actaltprof)\symmfunc^{\play}(\actprof,\actaltprof)\symmfunc^{\play}(\actprof,\actaltprof)\parens*{(\gammameas^{\play}\payoff^{\play})(\actprof)-(\gammameas^{\play}\payoff^{\play})(\actaltprof)}}\\
&=-\sum_{\play\in\players} \sum_{\actalt^{\play}\in \actions^{\play}} \mumeas(\actalt^{\play},\actprof^{-\play})\frac{1}{\mumeas^{-\play}(\actprof^{-\play})}\gammameas^{\play}(\actprof^{-\play})\parens*{\payoff^{\play}(\act^{\play},\actprof^{-\play})-\payoff^{\play}(\actalt^{\play},\actprof^{-\play})}\\
&=-\sum_{\play\in\players} \sum_{\actalt^{\play}\in\actions^{\play}} \mumeas^{\play}(\actalt^{\play})\gammameas^{\play}(\actprof^{-\play})\parens*{\payoff^{\play}(\act^{\play},\actprof^{-\play})-\payoff^{\play}(\actalt^{\play},\actprof^{-\play})},
\end{align*}
and the result follows from  \cref{de:classes-of-games}\ref{it:classes-mu-eta-harmonic}.
\end{proof}

The aim of the next results is to prove that the space of games is the orthogonal sum of $\nsgameg$-potential $\muprof$-normalized games, $(\muprof,\nsgameg)$-harmonic $\muprof$-normalized games, and nonstrategic games. 
We  first introduce the Moore-Penrose pseudo-inverse. 
Then, we prove several results leading to the above-mentioned orthogonality. 
Finally, we prove that any game can be decomposed in such  a triple by providing an explicit formula that uses the pseudo-inverse. 
These  results yield \cref{th:ortho-sum}.

Let $\gradop^{\play\dag}:\Cone \to \Czero$ be defined as follows:
\begin{itemize}
\item On $\Ima(\gradop^{\play})$, $\gradop^{\play\dag}$ is  the inverse of the restriction of $\gradop^{\play}$ on $\NoG^{\play}$, i.e.,
\begin{align*}
\gradop^{\play\dag}{\restriction_{\Ima(\gradop^{\play})}}=\bracks*{\gradop^{\play}{\restriction_{\NoG^{\play}}}}^{-1}.
\end{align*}
\item on $\Ima(\gradop^{\play})^{\perp}$, $\gradop^{\play\dag}=0$.
\item on $\Cone$, $\gradop^{\play\dag}$ is defined linearly, i.e., 
\begin{equation*}
\gradop^{\play\dag}(\flow)=\gradop^{\play\dag}(\flow_{\Ima} + \flow_{\Ima}^{\perp})=\gradop^{\play\dag}(\flow_{\Ima}),
\end{equation*}
where $\flow_{\Ima} \in \Ima(\gradop^{\play})$.
\end{itemize}
The operator $\gradop^{\play\dag}$ is in fact the Moore-Penrose pseudo-inverse of $\gradop^{\play}$. In particular, $\gradop^{\play\dag}\gradop^{\play}$ is the orthogonal projection onto $\NoG^{\play}$ and thus, 
\begin{equation}\label{eq:deltaidagdeltai}
\gradop^{\play\dag}\gradop^{\play}=\opPi^{\play}.
\end{equation} 
Moreover, $\gradop^{\play} \gradop^{\play\dag}\gradop^{\play} = \gradop^{\play}$.
Furthermore, $(\Id_{\Czero} - \gradop^{\play\dag}\gradop^{\play})$ is the orthogonal projection onto $\NSG^{\play}$ and thus, $(\Id_{\Czero} - \gradop^{\play\dag}\gradop^{\play})=\opLam^{\play}$. 
Moreover, $\gradop^{\play} \gradop^{\play\dag}$ is the orthogonal projection onto $\Ima(\gradop^{\play})$ and $(\Id_{\Cone}-\gradop^{\play} \gradop^{\play\dag})$ is the orthogonal projection onto $\Ker(\gradop^{\play*})$.

We further define $\gradop^{\dag}:\Cone \to \Czero$ as the pseudo-inverse of $\gradop$. 

\begin{lemma}
\label{le:varphi}
Let $\funcphi \in \NoG^{\play}$ and $\widetilde{\funcphi} \in \Czero$ and assume that $\gradop^{\play} \funcphi= \gradop^{\play}\widetilde{\funcphi}$. Then, for all $\funcpsi \in \NoG^{\play}$ we have: $\inner*{\funcphi}{\funcpsi}_{0}=\inner*{\widetilde{\funcphi}}{\funcpsi }_{0}$.
\end{lemma}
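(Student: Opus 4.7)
The plan is to observe that $\funcphi - \widetilde{\funcphi}$ lies in the kernel of $\gradop^{\play}$, identify this kernel with $\NSG^{\play}$, and then invoke the orthogonality between $\NSG^{\play}$ and $\NoG^{\play}$ in $(\Czero, \inner{\cdot}{\cdot}_{0})$.

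First I would argue that $\funcphi - \widetilde{\funcphi} \in \NSG^{\play}$. By linearity, the hypothesis $\gradop^{\play}\funcphi = \gradop^{\play}\widetilde{\funcphi}$ gives $\gradop^{\play}(\funcphi - \widetilde{\funcphi}) = 0$. By \cref{eq:deltaW,eq:symm-func}, this is equivalent to $(\funcphi - \widetilde{\funcphi})(\actprof) = (\funcphi - \widetilde{\funcphi})(\actaltprof)$ for every $(\actprof,\actaltprof) \in \compar^{\play}$, i.e., for every pair of strategy profiles differing only in player $\play$'s strategy. Hence $\funcphi - \widetilde{\funcphi}$ does not depend on player $\play$'s strategy, so $\funcphi - \widetilde{\funcphi} \in \NSG^{\play} = \Ima(\opLam^{\play})$ by \cref{le:chara-comp}\ref{it:le:chara-comp-3}.

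Second, I would verify that $\NSG^{\play}$ and $\NoG^{\play}$ are orthogonal with respect to $\inner{\cdot}{\cdot}_{0}$. This is essentially \cref{le:chara-comp}\ref{it:le:chara-comp-2}, without the $\gammameas^{\play}$ factors; I would include a short direct check. If $\func_{\NSG} \in \NSG^{\play}$, write $\func_{\NSG}(\act^{\play},\actprof^{-\play}) = \nonstratf(\actprof^{-\play})$. Then for any $\func_{\NoG} \in \NoG^{\play}$, using the product structure of $\mumeas$ given in \cref{eq:prod-nu},
\begin{align*}
\inner{\func_{\NSG}}{\func_{\NoG}}_{0}
&= \sum_{\actprof^{-\play}\in\actions^{-\play}} \mumeas^{-\play}(\actprof^{-\play})\, \nonstratf(\actprof^{-\play}) \sum_{\act^{\play}\in\actions^{\play}} \mumeas^{\play}(\act^{\play})\, \func_{\NoG}(\act^{\play},\actprof^{-\play}) = 0,
\end{align*}
where the inner sum vanishes by the defining property \cref{eq:no} of $\NoG^{\play}$.

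Combining the two steps, for any $\funcpsi \in \NoG^{\play}$ we have
\[
\inner{\funcphi}{\funcpsi}_{0} - \inner{\widetilde{\funcphi}}{\funcpsi}_{0} = \inner{\funcphi - \widetilde{\funcphi}}{\funcpsi}_{0} = 0,
\]
which is the claimed equality. There is no real obstacle here: the argument is a direct application of the kernel characterization of $\gradop^{\play}$ together with the orthogonal decomposition $\Czero = \NoG^{\play} \oplus \NSG^{\play}$ already used in the proof of \cref{pr:direct-sum}; the hypothesis $\funcphi \in \NoG^{\play}$ is in fact not needed for the conclusion, but it fits the context in which the lemma will be applied.
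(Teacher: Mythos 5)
Your proof is correct, but it follows a genuinely different route from the paper's. The paper's argument runs through the Moore--Penrose pseudo-inverse: it writes $\inner*{\funcphi}{\funcpsi}_{0}=\inner*{\gradop^{\play\dag}\gradop^{\play}\funcphi}{\funcpsi}_{0}=\inner*{\gradop^{\play\dag}\gradop^{\play}\widetilde{\funcphi}}{\funcpsi}_{0}$, then moves $\gradop^{\play\dag}\gradop^{\play}$ onto $\funcpsi$ using that this operator is self-adjoint and acts as the identity on $\NoG^{\play}$. You instead observe that $\funcphi-\widetilde{\funcphi}\in\Ker(\gradop^{\play})$, identify $\Ker(\gradop^{\play})$ with $\NSG^{\play}$ via the strict positivity of $\symmfunc^{\play}$ on $\compar^{\play}$, and conclude by the orthogonality $\NSG^{\play}\perp\NoG^{\play}$ in $(\Czero,\inner{\argdot}{\argdot}_{0})$, which you verify directly (the paper's \cref{le:chara-comp}\ref{it:le:chara-comp-2} is the $\gammameas^{\play}$-weighted version, so your unweighted check is appropriate). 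The two arguments rest on the same orthogonal decomposition $\Czero=\NoG^{\play}\oplus\NSG^{\play}$ --- after all, $\gradop^{\play\dag}\gradop^{\play}=\opPi^{\play}$ is precisely the orthogonal projection onto $\NoG^{\play}$ --- but yours avoids the pseudo-inverse machinery entirely and is the more elementary of the two. It also makes transparent something the paper's proof obscures: the hypothesis $\funcphi\in\NoG^{\play}$ is never used, since the conclusion only needs $\funcphi-\widetilde{\funcphi}\in\Ker(\gradop^{\play})$ and $\funcpsi\in\NoG^{\play}$. That is a correct and worthwhile observation.
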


\begin{proof}
Since $ \funcphi \in \NoG^{\play}$ and $\gradop^{\play\dag}\gradop^{\play}$ is the identity operator on $\NoG^{\play}$, we have:
\[
\inner*{\funcphi}{\funcpsi}_{0}
=\inner*{\gradop^{\play\dag} \gradop^{\play} \funcphi}{\funcpsi}_{0}
=\inner*{\gradop^{\play\dag}  \gradop^{\play} \widetilde{\funcphi}}{\funcpsi}_{0},
\]
where the last equality follows from the initial assumption. Using the fact that $\gradop^{\play\dag} \gradop^{\play} $ is self-adjoint, that the operator $\gradop^{\play\dag} \gradop^{\play}$ is the identity on $\NoG^{\play}$, and that $\funcpsi \in \NoG^{\play}$, we get
\begin{equation*}
\inner*{\funcphi}{\funcpsi}_{0}
=\inner*{\widetilde{\funcphi}}{\gradop^{\play\dag}\gradop^{\play}  \funcpsi}_{0}
=\inner*{\widetilde{\funcphi}}{\funcpsi}_{0}. \qedhere
\end{equation*}
\end{proof}

\begin{lemma}\label{le:orthcompo}
For any $\play\in\players$, we have
\begin{align}
\gradop^{\play*} \circ \gradop^{\play}=\mumeas^{\play}(\actions^{\play})\opPi^{\play}.
\end{align}
\end{lemma}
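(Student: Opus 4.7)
The plan is to verify the identity by direct computation on an arbitrary $\func\in\Czero$ evaluated at an arbitrary $\actprof=(\act^{\play},\actprof^{-\play})\in\actions$. First, I would substitute the explicit formula for $\gradop^{\play*}$ from \cref{eq:adjoint} with the flow $\flow=\gradop^{\play}\func$ given by \cref{eq:deltaW}, which yields
\[
\gradop^{\play*}\gradop^{\play}\func(\actprof)
=-\sum_{\actaltprof\in\actions}\mumeas(\actaltprof)\parens*{\symmfunc^{\play}(\actprof,\actaltprof)}^{2}\parens*{\func(\actaltprof)-\func(\actprof)}.
\]

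Next I would exploit the structure of $\symmfunc^{\play}$ from \cref{eq:symm-func}: the squared weight $(\symmfunc^{\play})^{2}$ vanishes off $\compar^{\play}$ and equals $1/\mumeas^{-\play}(\actprof^{-\play})$ on $\compar^{\play}$. This collapses the sum over $\actaltprof\in\actions$ to a sum over $\actalt^{\play}\in\actions^{\play}\setminus\braces{\act^{\play}}$ with $\actaltprof=(\actalt^{\play},\actprof^{-\play})$. Using the product structure $\mumeas(\actalt^{\play},\actprof^{-\play})=\mumeas^{\play}(\actalt^{\play})\mumeas^{-\play}(\actprof^{-\play})$ from \cref{eq:prod-nu}, the factor $\mumeas^{-\play}(\actprof^{-\play})$ cancels and I obtain
\[
\gradop^{\play*}\gradop^{\play}\func(\actprof)
=\sum_{\actalt^{\play}\in\actions^{\play}}\mumeas^{\play}(\actalt^{\play})\parens*{\func(\act^{\play},\actprof^{-\play})-\func(\actalt^{\play},\actprof^{-\play})},
\]
where the diagonal term $\actalt^{\play}=\act^{\play}$ contributes zero and may be safely included.

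Finally, I would split this last sum: the first piece equals $\mumeas^{\play}(\actions^{\play})\func(\actprof)$, while the second is exactly $\mumeas^{\play}(\actions^{\play})\opLam^{\play}(\func)(\actprof)$ after dividing and multiplying by $\mumeas^{\play}(\actions^{\play})$ to invoke the normalization \cref{eq:norm-nu}. Combining and using $\opPi^{\play}=\Id_{\Czero}-\opLam^{\play}$ from \cref{eq:Pi} gives
\[
\gradop^{\play*}\gradop^{\play}\func(\actprof)
=\mumeas^{\play}(\actions^{\play})\parens*{\func(\actprof)-\opLam^{\play}(\func)(\actprof)}
=\mumeas^{\play}(\actions^{\play})\opPi^{\play}(\func)(\actprof),
\]
as required. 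There is no serious obstacle here; the only subtlety is keeping the bookkeeping between $\mumeas$, $\mumeas^{\play}$, $\mumeas^{-\play}$, and $\norml{\mumeas}^{\play}$ straight so that the normalizing factor $\mumeas^{\play}(\actions^{\play})$ emerges as an overall scalar rather than being absorbed into $\opLam^{\play}$.
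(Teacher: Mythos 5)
Your proposal is correct and follows essentially the same route as the paper: apply the explicit adjoint formula of \cref{eq:adjoint} to the flow $\gradop^{\play}\func$, use the support and value of $\symmfunc^{\play}$ together with the product structure of $\mumeas$ to collapse the sum to $\sum_{\actalt^{\play}}\mumeas^{\play}(\actalt^{\play})\parens*{\func(\act^{\play},\actprof^{-\play})-\func(\actalt^{\play},\actprof^{-\play})}$, and then factor out $\mumeas^{\play}(\actions^{\play})$ to recognize $\opPi^{\play}$. The bookkeeping of the measures is handled exactly as in the paper's argument.
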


\begin{proof}
Using \cref{eq:adjoint} and replacing $\symmfunc^{\play}$ with its value, for all $\func\in \Czero$ and all $\actprof\in\actions$, we have
\begin{align*}
\parens*{\gradop^{\play*} \parens*{\gradop^{\play}\func}}(\actprof)
&=\sum_{\actaltprof\in\actions} \mumeas(\actaltprof) \symmfunc^{\play}(\actprof,\actaltprof)\symmfunc^{\play}(\actprof,\actaltprof) \parens*{\func(\actprof) - \func(\actaltprof)}\\
& = \sum_{\actalt^{\play}\in\actions^{\play}}\mumeas^{\play}(\actalt^{\play})\parens*{\func(\act^{\play},\actprof^{-\play}) - \func(\actalt^{\play},\actprof^{-\play})}.
\end{align*}
By introducing  the probability distribution $\norml{\mumeas}$ associated to $\muprof$, we obtain
\begin{align*}
\parens*{\gradop^{\play*} \parens*{\gradop^{\play}\func}}(\actprof)&= \mumeas^{\play}(\actions^{\play})\sum_{\actalt^{\play} \in \actions^{\play}}\norml{\mumeas}^{\play}(\actalt^{\play})\parens*{\func(\act^{\play},\actprof^{-\play}) - \func(\actalt^{\play},\actprof^{-\play})}\\
&=\mumeas^{\play}(\actions^{\play})\parens*{\opPi^{\play}(\func)}(\actprof),
\end{align*}
which concludes the proof.
\end{proof}

\begin{proposition} \label{pr:normalized-harmonic}
A game $\game$ is a $\muprof$-normalized $(\muprof,\nsgameg)$-harmonic game if and only if 
\[
\sum_{\play\in\players}\mumeas^{\play}(\actions^{\play})\gammameas^{\play}\payoff^{\play}=0 \quad\text{and}\quad \opPi(\game)=\game.
\]
\end{proposition}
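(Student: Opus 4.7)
The plan is to combine three already-established facts from the appendix: the characterization of $(\muprof,\nsgameg)$-harmonic games via the operator $\gradop^{*}\embop$ in \cref{pr:operator}\ref{it:pr:operator-c}, the decomposition of $\gradop^{*}\embop$ from \cref{le:deltas}, and the identity $\gradop^{\play*}\gradop^{\play}=\mumeas^{\play}(\actions^{\play})\opPi^{\play}$ from \cref{le:orthcompo}. The proof then reduces to a short algebraic chain with no genuine obstacle.

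First I would reinterpret the condition $\opPi(\game)=\game$ in terms of the classes in \cref{de:classes-of-games}. Since $\opPi^{\play}=\Id_{\Czero}-\opLam^{\play}$, the equality $\opPi^{\play}(\payoff^{\play})=\payoff^{\play}$ holds if and only if $\opLam^{\play}(\payoff^{\play})=0$, i.e.\ $\payoff^{\play}\in\Ker(\opLam^{\play})=\NoG^{\play}$ by \cref{le:chara-comp}\ref{it:le:chara-comp-3}. Applying this player by player gives that $\opPi(\game)=\game$ is equivalent to $\game\in\muprof\NoGd$.

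Next I would compute $\gradop^{*}\embop(\game)$ in closed form. By \cref{le:deltas}\ref{it:le:delta-2}, we have $\gradop^{*}\embop(\game)=\sum_{\play\in\players}\gradop^{\play*}\gradop^{\play}(\gammameas^{\play}\payoff^{\play})$. Using \cref{le:deltas}\ref{it:le:delta-3} to pull out the multiplicative factor $\gammameas^{\play}$, and then \cref{le:orthcompo}, each summand equals $\mumeas^{\play}(\actions^{\play})\,\gammameas^{\play}\,\opPi^{\play}(\payoff^{\play})$. Hence
\[
\gradop^{*}\embop(\game)=\sum_{\play\in\players}\mumeas^{\play}(\actions^{\play})\,\gammameas^{\play}\,\opPi^{\play}(\payoff^{\play}).
\]

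To conclude, apply \cref{pr:operator}\ref{it:pr:operator-c}: the game $\game$ is $(\muprof,\nsgameg)$-harmonic if and only if $\gradop^{*}\embop(\game)=0$. For the forward implication, if $\game$ is $\muprof$-normalized and harmonic, then $\opPi^{\play}(\payoff^{\play})=\payoff^{\play}$ by the first step, and substituting into the displayed formula yields $\sum_{\play}\mumeas^{\play}(\actions^{\play})\gammameas^{\play}\payoff^{\play}=0$. For the converse, assuming $\opPi(\game)=\game$ gives $\game\in\muprof\NoGd$ and allows the same substitution, so the assumed equation becomes $\gradop^{*}\embop(\game)=0$, i.e.\ $\game$ is $(\muprof,\nsgameg)$-harmonic. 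The proof is a direct splicing together of the preceding lemmas, so I would expect no technical obstacle beyond keeping the book-keeping straight.
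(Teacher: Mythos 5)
Your proof is correct and follows essentially the same route as the paper: both identify $\opPi(\game)=\game$ with $\muprof$-normalization via \cref{le:chara-comp}, reduce $\gradop^{*}\embop(\game)$ to $\sum_{\play}\mumeas^{\play}(\actions^{\play})\gammameas^{\play}\opPi^{\play}(\payoff^{\play})$ using \cref{le:deltas} and \cref{le:orthcompo}, and conclude with \cref{pr:operator}. The only difference is cosmetic: you compute $\gradop^{*}\embop(\game)$ first and then substitute, whereas the paper runs the same chain of equalities in the opposite direction.
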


\begin{proof}
Let $\game \in \NoG$. By \cref{le:chara-comp}\ref{it:le:chara-comp-3}, this holds if and only if $\opLam(\game)=0$, which is equivalent to $\opPi(\game)=\game$. 
Therefore, for all $\actprof\in\actions$, we get
\begin{align*}
\sum_{\play\in\players} \mumeas^{\play}(\actions^{\play})\gammameas^{\play}(\actprof^{-\play})\payoff^{\play}(\actprof)  &=\sum_{\play\in\players} \mumeas^{\play}(\actions^{\play})\gammameas^{\play}(\actprof^{-\play})\opPi^{\play}(\payoff^{\play})(\actprof)\\
&=\sum_{\play\in\players}\gammameas^{\play}(\actprof^{-\play})\gradop^{\play*}\gradop^{\play} \payoff^{\play}(\actprof)\\
&=\sum_{\play\in\players}\gradop^{\play*}\gradop^{\play} (\gammameas^{\play}\payoff^{\play})(\actprof)\\
&=\gradop^{*}\parens*{\embop(\game)}(\actprof),
\end{align*}
where the second equality follows from \cref{le:orthcompo}, the third from \cref{le:deltas}\ref{it:le:delta-3}, and the last from \cref{le:deltas}\ref{it:le:delta-2}.
This, together with \cref{pr:operator}\ref{it:pr:operator-c}, completes the proof.
\end{proof}

\begin{proposition}
\label{pr:ortho-jeux}
The sets of games $\NoG\cap(\muprof,\nsgameg)\HGd$, $\NoG\cap\nsgameg\PGd$, and $\NSG$ are orthogonal with respect to the inner product in \cref{eq:g1g2mu}.
\end{proposition}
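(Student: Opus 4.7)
The plan is to reduce the claim to two sub-claims: (i) $\NSG$ is orthogonal to $\NoG$, which immediately implies $\NSG \perp \NoG\cap(\muprof,\nsgameg)\HGd$ and $\NSG \perp \NoG\cap\nsgameg\PGd$; and (ii) the $\muprof$-normalized $\nsgameg$-potential games are orthogonal to the $\muprof$-normalized $(\muprof,\nsgameg)$-harmonic games. Sub-claim (i) is already contained in \cref{le:chara-NO-NS}\ref{it:le:chara-NO-NS-2}, so the real content is (ii). The strategy for (ii) is to transport the inner product on $\games$ to an inner product on flows via the embedding operator $\embop$, and then invoke the obvious orthogonality $\Ima(\gradop)\perp\Ker(\gradop^{*})$.

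Concretely, fix $\game_{\Har}\in\NoG\cap(\muprof,\nsgameg)\HGd$ and $\game_{\Pot}\in\NoG\cap\nsgameg\PGd$. By \cref{pr:operator}\ref{it:pr:operator-b}--\ref{it:pr:operator-c}, $\embop(\game_{\Pot})\in\Ima(\gradop)$ and $\embop(\game_{\Har})\in\Ker(\gradop^{*})$, so $\inner{\embop(\game_{\Har})}{\embop(\game_{\Pot})}_{1}=0$. On the other hand, expanding $\embop$ via \cref{eq:D} and using \cref{le:deltas}\ref{it:le:delta-1} (so $\gradop^{\play*}\gradop^{\playalt}=0$ for $\play\neq\playalt$) to kill the cross terms, we obtain
\begin{equation*}
\inner{\embop(\game_{\Har})}{\embop(\game_{\Pot})}_{1}
=\sum_{\play\in\players}\inner{\gradop^{\play}(\gammameas^{\play}\payoff_{\Har}^{\play})}{\gradop^{\play}(\gammameas^{\play}\payoff_{\Pot}^{\play})}_{1}
=\sum_{\play\in\players}\inner{\gammameas^{\play}\payoff_{\Har}^{\play}}{\gradop^{\play*}\gradop^{\play}(\gammameas^{\play}\payoff_{\Pot}^{\play})}_{0}.
\end{equation*}

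Now I invoke \cref{le:orthcompo} to rewrite $\gradop^{\play*}\gradop^{\play}=\mumeas^{\play}(\actions^{\play})\opPi^{\play}$. The key observation is that since $\game_{\Pot}\in\NoG$, each $\payoff_{\Pot}^{\play}\in\NoG^{\play}=\Ker(\opLam^{\play})$, and a direct computation from \cref{eq:Lambda} shows that $\opLam^{\play}(\gammameas^{\play}\payoff_{\Pot}^{\play})=\gammameas^{\play}\opLam^{\play}(\payoff_{\Pot}^{\play})=0$ (because $\gammameas^{\play}$ does not depend on $\act^{\play}$). Hence $\gammameas^{\play}\payoff_{\Pot}^{\play}\in\NoG^{\play}$ too, and so $\opPi^{\play}(\gammameas^{\play}\payoff_{\Pot}^{\play})=\gammameas^{\play}\payoff_{\Pot}^{\play}$. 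Substituting back yields
\begin{equation*}
0=\inner{\embop(\game_{\Har})}{\embop(\game_{\Pot})}_{1}
=\sum_{\play\in\players}\mumeas^{\play}(\actions^{\play})\inner{\gammameas^{\play}\payoff_{\Har}^{\play}}{\gammameas^{\play}\payoff_{\Pot}^{\play}}_{0}
=\inner{\game_{\Har}}{\game_{\Pot}}_{\muprof,\nsgameg},
\end{equation*}
which is exactly the desired orthogonality.

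The only slightly non-trivial step is the identification $\opPi^{\play}(\gammameas^{\play}\payoff_{\Pot}^{\play})=\gammameas^{\play}\payoff_{\Pot}^{\play}$, which uses that $\gammameas^{\play}$ depends only on $\actprof^{-\play}$ and can therefore be pulled out of the averaging $\opLam^{\play}$. Once that is in hand, everything collapses routinely via the adjoint and the fact that $\gradop^{\play*}\gradop^{\playalt}=0$ across distinct players. I do not anticipate any other obstacle.
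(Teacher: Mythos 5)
Your proof is correct, and it is worth noting that it is organized differently from the paper's. The paper stays in the game space: it expands $\inner{\game_{\Pot}}{\game_{\Har}}_{\muprof,\nsgameg}$, uses \cref{le:varphi} to replace each $\gammameas^{\play}\payoff_{\Pot}^{\play}$ by the single potential function $\funcphi$ (legitimate because $\gammameas^{\play}\payoff_{\Pot}^{\play}\in\NoG^{\play}$ and $\gradop^{\play}(\gammameas^{\play}\payoff_{\Pot}^{\play})=\gradop^{\play}\funcphi$), and then concludes via \cref{pr:normalized-harmonic}, which identifies $\muprof$-normalized $(\muprof,\nsgameg)$-harmonic games as exactly those with $\sum_{\play}\mumeas^{\play}(\actions^{\play})\gammameas^{\play}\payoff_{\Har}^{\play}=0$. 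You instead work in the flow space $\Cone$: you get orthogonality for free from $\Ima(\gradop)\perp\Ker(\gradop^{*})$ applied to $\embop(\game_{\Pot})$ and $\embop(\game_{\Har})$ via \cref{pr:operator}, and then pull this back by showing that $\inner{\embop(\game_{\Har})}{\embop(\game_{\Pot})}_{1}=\inner{\game_{\Har}}{\game_{\Pot}}_{\muprof,\nsgameg}$ whenever $\game_{\Pot}$ is $\muprof$-normalized, using the vanishing of cross terms (\cref{le:deltas}) and $\gradop^{\play*}\gradop^{\play}=\mumeas^{\play}(\actions^{\play})\opPi^{\play}$ (\cref{le:orthcompo}) together with the observation that $\gammameas^{\play}$ factors out of $\opLam^{\play}$. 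The underlying computations are essentially the same (indeed \cref{le:varphi} is itself proved via $\gradop^{\play\dag}\gradop^{\play}=\opPi^{\play}$), but your version makes the conceptual source of the orthogonality—image versus kernel of the adjoint, transported through an isometry on normalized games—more transparent, and as a bonus it only uses normalization of the potential component, so it actually yields orthogonality of $\NoG\cap\nsgameg\PGd$ against all of $(\muprof,\nsgameg)\HGd$. All steps check out; in particular the identity $\opPi^{\play}(\gammameas^{\play}\payoff_{\Pot}^{\play})=\gammameas^{\play}\payoff_{\Pot}^{\play}$, which you correctly flag as the one nontrivial point, holds because $\gammameas^{\play}$ depends only on $\actprof^{-\play}$.
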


\begin{proof}
We know by \cref{pr:direct-sum} that $\NoG$ and $\NSG$ are orthogonal spaces. 
It is sufficient to prove that $\NoG\cap(\muprof,\nsgameg)\HGd$ and $\NoG\cap\nsgameg\PGd$ are also orthogonal spaces.

To prove orthogonality between $\muprof$-normalized potential games and $\muprof$-normalized $(\muprof,\nsgameg)$-harmonic games, let $\game_{\Pot} \in \NoG \cap\nsgameg\PGd$ and $\game_{\Har} \in \NoG \cap(\muprof,\nsgameg)\HGd$. 

In view of \cref{eq:g1g2mu}, we get
\begin{align*}
\inner*{\game_{\Pot}}{\game_{\Har}}_{\muprof,\nsgameg} 
&=\sum_{\play\in\players} \mumeas^{\play}(\actions^{\play}) \inner*{\gammameas^{\play}\payoff_{\Pot}^{\play}}{\gammameas^{\play} \payoff_{\Har}^{\play}}_{0}\\
&=\sum_{\play\in\players} \inner*{\gammameas^{\play}\payoff_{\Pot}^{\play}}{\mumeas^{\play}(\actions^{\play})\gammameas^{\play}\payoff_{\Har}^{\play}}_{0}.
\end{align*}
Since $\game_{\Pot} \in \NoG \cap \nsgameg\PGd$, there exists $\funcphi$ such that for any $\play\in\players$, $\gradop^{\play} (\gammameas^{\play}\payoff_{\Pot}^{\play})=\gradop^{\play} \funcphi$. 
Hence using first \cref{le:varphi} and then \cref{pr:normalized-harmonic}, we obtain
\begin{align*}
\inner*{\game_{\Pot}}{\game_{\Har}}_{\muprof,\nsgameg} 
&=\sum_{\play\in\players}\inner*{\funcphi}{\mumeas^{\play}(\actions^{\play}) \gammameas^{\play}\payoff_{\Har}^{\play}}_{0} \\
&=\inner*{\funcphi}{\sum_{\play\in\players}\mumeas^{\play}(\actions^{\play})\gammameas^{\play} \payoff_{\Har}^{\play}}_{0}\\ 
&= \inner*{\funcphi}{0}_{0}\\
&=0. \qedhere
\end{align*}
\end{proof}

The following lemma  states that the flow induced by a game $\game$ and the flow induced by its projection $\opPi(\game)$ on normalized games  are equal.

\begin{lemma}\label{lem:flow_normalized}
Let $\game\in \games$, then $\embop(\opPi(\game))=\embop(\game)$.
\end{lemma}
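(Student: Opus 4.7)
The plan is to exploit the decomposition $\opPi = \Id_{\Czero} - \opLam$ (component by component on each player) together with the facts that $\opLam(\game)$ lies in $\NSG$ and that $\embop$ annihilates nonstrategic games.

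More precisely, I would first observe, directly from \cref{eq:Pi} applied coordinatewise (using the definition in \cref{eq:Lproduct}), that $\opPi(\game) = \game - \opLam(\game)$ as elements of $\games \cong \Czero^{\nplayers}$. Next, I would invoke linearity of the embedding operator $\embop \colon \Czero^{\nplayers}\to\Cone$, which is immediate from its defining formula \cref{eq:D}, to write
\[
\embop(\opPi(\game)) = \embop(\game) - \embop(\opLam(\game)).
\]
It then suffices to prove that $\embop(\opLam(\game)) = 0$.

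For the last step, I would use \cref{le:chara-NO-NS}\ref{it:le:chara-NO-NS-1}, which states that $\Ima(\opLam) = \NSG$, so that $\opLam(\game) \in \NSG$. Then, by \cref{pr:operator}\ref{it:pr:operator-a}, we have $\embop(\game') = 0$ for every nonstrategic game $\game'$; applied to $\game' = \opLam(\game)$, this gives $\embop(\opLam(\game)) = 0$ and concludes the proof.

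There is no substantial obstacle here: the statement is essentially a bookkeeping consequence of the two characterizations already proved (the identification of $\NSG$ with $\Ima(\opLam)$ and the identification of $\NSG$ with $\Ker(\embop)$), combined with the linearity of $\embop$. The lemma is really saying that shifting a game by a nonstrategic perturbation does not change the induced flow, a fact which will be useful later when projecting onto $\muprof$-normalized games before applying pseudo-inverse arguments.
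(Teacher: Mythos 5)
Your proof is correct, and it takes a genuinely different route from the paper's. The paper proves \cref{lem:flow_normalized} via the Moore--Penrose machinery: it substitutes $\opPi^{\play}=\gradop^{\play\dag}\gradop^{\play}$, commutes the multiplication by $\gammameas^{\play}$ past $\gradop^{\play}$ (which is legitimate because $\gammameas^{\play}$ is constant on $\play$-comparable pairs, though the paper does this silently), and then invokes the pseudo-inverse identity $\gradop^{\play}\gradop^{\play\dag}\gradop^{\play}=\gradop^{\play}$. You instead write $\opPi(\game)=\game-\opLam(\game)$, use linearity of $\embop$, and kill the term $\embop(\opLam(\game))$ by combining \cref{le:chara-NO-NS}\ref{it:le:chara-NO-NS-1} (which gives $\opLam(\game)\in\Ima(\opLam)=\NSG$) with \cref{pr:operator}\ref{it:pr:operator-a} (which identifies $\NSG$ with the kernel of $\embop$). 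Both of the facts you cite are established before \cref{lem:flow_normalized} and neither depends on it, so there is no circularity. Your argument is the more elementary of the two: it avoids the pseudo-inverse entirely and also sidesteps the commutation of $\gammameas^{\play}$ with $\gradop^{\play}$, at the cost of leaning on the two set-theoretic characterizations rather than on operator identities. The paper's version has the minor advantage of staying entirely within the operator calculus that the subsequent proof of \cref{le:projections} continues to use, but as a standalone proof of the lemma yours is cleaner and equally rigorous.
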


\begin{proof}
The proof relies on two properties of the pseudo-inverse. Let $\game \in \games$. By \cref{eq:deltaidagdeltai}, we have  $\gradop^{\play\dag} \gradop^{\play}= \opPi^{\play}$. Hence,
\begin{align*}
\embop(\opPi(\game))
&=\sum_{\play\in\players}\gradop^{\play} \parens*{\gammameas^{\play} \opPi^{\play}(\payoff^{\play})}\\
&=\sum_{\play\in\players}\gradop^{\play} \gammameas^{\play} \parens*{\gradop^{\play\dag}\gradop^{\play} \payoff^{\play}}\\
&=\sum_{\play\in\players} \gammameas^{\play} \gradop^{\play} \parens*{\gradop^{\play\dag}\gradop^{\play} \payoff^{\play}}.
\end{align*}
By definition of the pseudo-inverse, $\gradop^{\play} \gradop^{\play\dag} \gradop^{\play}=\gradop^{\play}$, hence we can simplify the right-hand side to obtain
\begin{align*}
\embop(\opPi(\game))&=\sum_{\play\in\players} \gammameas^{\play} \gradop^{\play} \payoff^{\play} =\sum_{\play\in\players} \gradop^{\play} (\gammameas^{\play} \payoff^{\play})=\embop(\game). 
\qedhere
\end{align*}
\end{proof}

\begin{lemma}\label{le:projections}
Given a game $\game$, let $\game_{\Pot},\game_{\Har},\game_{\NSG}$ be the games defined as follows:
\begin{equation}\label{eq:compos}
\game_{\Pot}=\opPi(\funcvec),\quad 
\game_{\Har}=\opPi \parens*{\game -\funcvec},\quad 
\game_{\NSG}=\opLam(\game),
\end{equation}
where 
\begin{equation}\label{eq:bolf}
\funcvec=\parens*{(1/\gammameas^{1})\gradop^{\dag}\embop(\game),\dots,(1/\gammameas^{\nplayers})\gradop^{\dag}\embop(\game)}.
\end{equation} 
Then, 
\begin{equation}\label{eq:sum-og-gs}
\game = \game_{\Pot} + \game_{\Har} + \game_{\NSG},
\end{equation}
where $\game_{\Pot}$ is $\nsgameg$-potential $\muprof$-normalized, $\game_{\Har}$ is $(\muprof,\nsgameg)$-harmonic $\muprof$-normalized, and $\game_{\NSG}$ is nonstrategic. 
Hence, $\game_{\Pot},\game_{\Har}$, and $\game_{\NSG}$ are the components of the $(\muprof,\nsgameg)$-decomposition of $\game$.
\end{lemma}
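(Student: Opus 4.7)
My plan is to verify each of the four conclusions of the lemma directly by pushing everything through the operators $\opPi$, $\opLam$, $\embop$, $\gradop$ and the Moore--Penrose pseudo-inverse $\gradop^{\dag}$, relying on the characterizations already established in \cref{le:chara-NO-NS,pr:operator,lem:flow_normalized} together with \cref{pr:direct-sum}.

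First I would establish the decomposition identity $\game = \game_{\Pot} + \game_{\Har} + \game_{\NSG}$ by pure bookkeeping: linearity of $\opPi$ gives $\opPi(\funcvec) + \opPi(\game-\funcvec) = \opPi(\game)$, and \cref{pr:direct-sum} tells us that $\opPi + \opLam = \Id_{\games}$, so summing yields $\opPi(\game)+\opLam(\game)=\game$. Next, I would check the class memberships that are immediate: $\game_{\NSG}=\opLam(\game)$ lies in $\Ima(\opLam)=\NSG$ by \cref{le:chara-NO-NS}\ref{it:le:chara-NO-NS-1}, while $\game_{\Pot}$ and $\game_{\Har}$ both lie in $\Ima(\opPi)=\Ker(\opLam)=\NoG$, so they are $\muprof$-normalized.

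The crucial step is then verifying that $\game_{\Pot}$ is $\nsgameg$-potential and $\game_{\Har}$ is $(\muprof,\nsgameg)$-harmonic, and here I would translate both requirements into flow conditions via \cref{pr:operator}. The choice of $\funcvec$ in \cref{eq:bolf} is designed so that the $\gammameas^{\play}$ built into the definition of $\embop$ cancels the $1/\gammameas^{\play}$ built into the $\play$-th coordinate of $\funcvec$; concretely,
\begin{equation*}
\embop(\funcvec) = \sum_{\play\in\players}\gradop^{\play}\parens*{\gammameas^{\play}\cdot \tfrac{1}{\gammameas^{\play}}\gradop^{\dag}\embop(\game)} = \gradop\parens*{\gradop^{\dag}\embop(\game)}.
\end{equation*}
Combined with \cref{lem:flow_normalized} (which says $\embop\circ\opPi=\embop$), this gives $\embop(\game_{\Pot}) = \gradop\gradop^{\dag}\embop(\game) \in \Ima(\gradop)$, so \cref{pr:operator}\ref{it:pr:operator-b} yields $\game_{\Pot}\in\nsgameg\PGd$. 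Analogously, $\embop(\game_{\Har}) = \embop(\game)-\embop(\funcvec) = (\Id_{\Cone}-\gradop\gradop^{\dag})\embop(\game)$, and since $\Id_{\Cone}-\gradop\gradop^{\dag}$ is the standard orthogonal projection onto $\Ker(\gradop^{*})$, \cref{pr:operator}\ref{it:pr:operator-c} delivers $\game_{\Har}\in(\muprof,\nsgameg)\HGd$.

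No step looks genuinely hard once the machinery is in place: the conceptual content lies in the characterizations of the three classes in terms of $\embop$ and in the pseudo-inverse identities $\gradop\gradop^{\dag}\gradop=\gradop$ and $\Ima(\Id-\gradop\gradop^{\dag})=\Ker(\gradop^{*})$. The only subtle observation is that dividing by $\gammameas^{\play}$ coordinatewise in $\funcvec$ is precisely what makes the weighted embedding $\embop$ collapse to the unweighted $\gradop$ applied to the single function $\gradop^{\dag}\embop(\game)$, cleanly splitting the flow into its $\Ima(\gradop)$ part (potential) and its $\Ker(\gradop^{*})$ part (harmonic). The final sentence, asserting that these are \emph{the} components of the $(\muprof,\nsgameg)$-decomposition, then follows immediately from the orthogonality of the three subspaces in \cref{pr:ortho-jeux}, which forces uniqueness of such a decomposition.
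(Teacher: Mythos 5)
Your proposal is correct and follows essentially the same route as the paper's own proof: the sum identity via $\opPi+\opLam=\Id$, membership in $\NSG$ and $\NoG$ via \cref{le:chara-NO-NS}, and then the cancellation of $\gammameas^{\play}$ against $1/\gammameas^{\play}$ together with \cref{lem:flow_normalized} and the pseudo-inverse identities to place $\embop(\game_{\Pot})$ in $\Ima(\gradop)$ and $\embop(\game_{\Har})$ in $\Ker(\gradop^{*})$, concluding by \cref{pr:operator}. No gaps.
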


\begin{proof} 
It is clear that $\game_{\Pot}+\game_{\Har}+\game_{\NSG}=\game$. 
By  \cref{le:chara-NO-NS}, we know that $\game_{\NSG}$ is nonstrategic whereas $\game_{\Pot}$ and $\game_{\Har}$ are $\muprof$-normalized. 
Then, we need to verify that $\game_{\Pot}$ and $\game_{\Har}$ are potential and harmonic, respectively.

Let $\funcphi \colon \actions\to \R$ be such that $\funcphi=\gradop^{\dag}\embop(\game)$. 
We start with the $\nsgameg$-potential component. 
By using \cref{lem:flow_normalized}, we obtain
\begin{align*}
\embop(\game_{\Pot})=\embop(\opPi(\funcvec))=\embop(\funcvec).
\end{align*}
It follows  from \cref{eq:D} that
\begin{align*}
\embop(\game_{\Pot})
&=\sum_{\play\in\players}\gradop^{\play} \parens*{\gammameas^{\play}\frac{1}{\gammameas^{\play}}\funcphi}\\
& =\sum_{\play\in\players} \gradop^{\play} (\funcphi)\\
& =\gradop (\funcphi).
\end{align*}
We get for the $(\muprof,\nsgameg)$-harmonic component the same simplification of $\opPi$:
\begin{align*}
\gradop^{*}\parens*{\embop(\game_{\Har})}
&=\gradop^{*}\parens*{\embop(\opPi\parens*{\game - \funcvec})}\\
&=\gradop^{*}\parens*{\embop(\game) - \embop(\funcvec)}\\
&=\gradop^{*}\parens*{\embop(\game) - \sum\limits_{\play\in\players}\gradop^{\play}\frac{\gammameas^{\play}}{\gammameas^{\play}}\gradop^{\dag}\embop(\game)}\\
&=\gradop^{*}(\Id_{\Cone}-\gradop\gradop^{\dag})\embop(\game)\\
&=0,
\end{align*}
where the third equality is obtained by replacing $\funcvec$ with its definition, as in \cref{eq:bolf}, and the last one is due to the fact that $(\Id_{\Cone}-\gradop\gradop^{\dag})$ is the orthogonal projection onto $\Ker(\gradop^{*})$.
\end{proof}

\begin{proof}[Proof of \cref{th:ortho-sum}]
This is a consequence of \cref{pr:ortho-jeux,le:projections}.
\end{proof}

We now characterize the set of measures which yield the same $(\muprof,\nsgameg)$-decomposition.

\begin{proof}[Proof of \cref{pro:equivalence-class}]
Let $\scala,\scalb>0$. For every $\play\in\players$, let $\mumeas_{\scala}^{\play}= \scala \mumeas^{\play}$ and $\gammameas_{\scalb}^{\play}=\scalb \gammameas^{\play}$. Clearly,  $(\mumeas_{\scala},\nsgameg_{\scalb})$ yields the same decomposition as $(\muprof,\nsgameg)$. 
We now check that there is no other pair of measure and nonstrategic game which induces the same decomposition.

Let $\muprof,\mathring{\muprof}$ be two product measures and let $\nsgameg,\mathring{\nsgameg}$ be two positive nonstrategic games such that the $(\muprof,\nsgameg)$-decomposition and the $(\mathring{\muprof},\mathring{\nsgameg})$-decomposition coincide. Equivalently, the particular classes of games that appear as components of each decomposition are the same: any $\nsgameg$-potential game is $\mathring{\nsgameg}$-potential, any $\muprof$-normalized game is also $\mathring{\muprof}$-normalized, and any $(\muprof,\nsgameg)$-harmonic game is also $(\mathring{\muprof},\mathring{\nsgameg})$-harmonic.

We now consider suitable games  to obtain the relation between $\muprof,\mathring{\muprof},\nsgameg$, and $\mathring{\nsgameg}$. 
We first focus on $\muprof$-normalized games. 
Fix $\play\in\players$ and $\actalt^{\play},\actr^{\play}\in\actions^{\play}$ and consider the game $\game \in \games$ such that, for all $\actprof^{-\play}\in\actions^{-\play}$,
\begin{align*}
\payoff^{\play}(\actalt^{\play},\actprof^{-\play})=\frac{1}{\mumeas^{\play}(\actalt^{\play})}\quad\text{and}\quad \payoff^{\play}(\actr^{\play},\actprof^{-\play})=-\frac{1}{\mumeas^{\play}(\actr^{\play})}.
\end{align*}
Let the payoffs for every other player and the payoffs of player $\play$ in any other profile be equal to $0$. The game $\game$ is $\muprof$-normalized and therefore, it is also $\mathring{\muprof}$-normalized:
\[
\frac{\mathring{\mumeas}^{\play}(\actalt^{\play})}{\mumeas^{\play}(\actalt^{\play})}-\frac{\mathring{\mumeas}^{\play}(\actr^{\play})}{\mumeas^{\play}(\actr^{\play})}=0
\]
and thus,
\[
\frac{\mathring{\mumeas}^{\play}(\actalt^{\play})}{\mumeas^{\play}(\actalt^{\play})}=\frac{\mathring{\mumeas}^{\play}(\actr^{\play})}{\mumeas^{\play}(\actr^{\play})}.
\]
By changing the game, it follows that all the quotients are equal to some positive real number and thus, for every $\play\in\players$, there exists $\scala_i >0$ such that $\mathring{\mumeas}^{\play}=\scala_{\play} \mumeas^{\play}$.

We now consider $\nsgameg$-potential games. We construct a game where we focus only on two players and two strategies for each player. Let $\play,\playalt\in\players$, $\actalt^{\play} \in\actions^{\play}$ and $\actalt^{\playalt}\in\actions^{\playalt}$. Define the potential function $\potent$ on $\actions$ as follows:
\[
\forall \actprof\in\actions,\ \potent(\actprof)=
\begin{cases} 
1 & \text{ if $\act^{\play}=\actalt^{\play}$ and $\act^{\playalt}=\actalt^{\playalt}$},\\
0 & \text{ otherwise},
\end{cases}
\]
Fix $\actprof^{-(\play,\playalt)} \in \actions^{-(\play,\playalt)}$. 
Let $\actr^{\play} \neq \actalt^{\play}$ and $\actr^{\playalt} \neq \actalt^{\playalt}$. 
We focus on the profiles $(\actr^{\play},\actr^{\playalt},\argdot)$, $(\actr^{\play},\actalt^{\playalt},\argdot)$, $(\actalt^{\play},\actalt^{\playalt},\argdot)$ and $(\actalt^{\play},\actr^{\playalt},\argdot)$ where $\argdot$ is a short notation for $\actprof^{-(\play,\playalt)}$. 
The following matrices represent the potential function $\potent$ and a $\nsgameg$-potential game $\game$ associated to $\potent$: 
\begin{center}
\medskip
{\tabulinesep=1.2mm
\begin{tabu} to 45mm {X[ 1 , r ] | X[ 1 , c ] | X[ 1 , c ] | }
\multicolumn{1}{r}{}
 &  \multicolumn{1}{c}{$\actalt^{\playalt}$}
 & \multicolumn{1}{c}{$\actr^{\playalt}$} \\
\cline{2-3}
$\actalt^{\play}$ & $1$ & $0$ \\
\cline{2-3}
$\actr^{\play}$ & $0$ & $0$ \\
\cline{2-3}
\multicolumn{1}{r}{}
&\multicolumn{2}{c}{$\potent$}
\end{tabu}
\qquad
\begin{tabu} to 70mm {X[ 1 , r ] | X[ 4 , c ] X[ 1 , c ] | X[ 1 , c ] X[ 4 , c ] | }
\multicolumn{1}{r}{}
 &  \multicolumn{2}{c}{$\actalt^{\playalt}$}
 & \multicolumn{2}{c}{$\actr^{\playalt}$} \\
\cline{2-5}
$\actalt^{\play}$ & $0$ & $0$ & $0$ & $-1/\gammameas^{\playalt}(\act^{\play},\argdot)$ \\
\cline{2-5}
$\actr^{\play}$ &  $-1/\gammameas^{\play}(\actalt^{\playalt},\argdot)$ & $0$ & $0$ & $0$ \\
\cline{2-5}
\multicolumn{5}{c}{$\game$}
\end{tabu}} .
\medskip
\end{center}
Player $\play$ chooses the row and her payoff is the first coordinate whereas player $\playalt$ chooses the column and her payoff is the second coordinate.

By assumption, $\game$ is an $\mathring{\nsgameg}$-potential game and so we have
\begin{multline*}
\mathring{\gammameas}^{\play}(\actalt^{\playalt},\argdot) (\payoff^{\play}(\actr^{\play},\actalt^{\playalt},\argdot)-\payoff^{\play}(\actalt^{\play},\actalt^{\playalt},\argdot)) 
+\mathring{\gammameas}^{\playalt}(\actr^{\play},\argdot) (\payoff^{\playalt}(\actr^{\play},\actr^{\playalt},\argdot)-\payoff^{\playalt}(\actr^{\play},\actalt^{\playalt},\argdot))\\
+\mathring{\gammameas}^{\play}(\actr^{\playalt},\argdot) (\payoff^{\play}(\actalt^{\play},\actr^{\playalt},\argdot)-\payoff^{\play}(\actr^{\play},\actr^{\playalt},\argdot))+\mathring{\gammameas}^{\playalt}(\actalt^{\play},\argdot) (\payoff^{\playalt}(\actalt^{\play},\actalt^{\playalt},\argdot)-\payoff^{\playalt}(\actalt^{\play},\actr^{\playalt},\argdot))=0.
\end{multline*}
Replacing the corresponding payoff in $\game$, we get:
\[
\mathring{\gammameas}^{\play}(\actalt^{\playalt},\argdot)\parens*{-\frac{1}{\gammameas^{\play}(\actalt^{\playalt},\argdot)}-0}+\mathring{\gammameas}^{\playalt}(\actr^{\play},\argdot) (0-0)+
\mathring{\gammameas}^{\play}(\actr^{\playalt},\argdot) (0-0)+
\mathring{\gammameas}^{\playalt}(\actalt^{\play},\argdot)\parens*{0+ \frac{1}{\gammameas^{\playalt}(\actalt^{\play},\argdot)}}=0
\]
Hence, for every strategy $\actalt^{\play} \in \actions^{\play}$ and every strategy $\actalt^{\playalt} \in \actions^{\playalt}$, we obtain
\[
\frac{\mathring{\gammameas}^{\play}(\actalt^{\playalt},\argdot)}{\gammameas^{\play}(\actalt^{\playalt},\argdot)}
=\frac{\mathring{\gammameas}^{\playalt}(\actalt^{\play},\argdot)}{\gammameas^{\playalt}(\actalt^{\play},\argdot)}
\]
By changing the game, we obtain that the equality is true for every pair of players and for every pair of strategies. In particular, it is also true for two strategies of a player since they have to be equal to the same quotient for another player. It follows that there exists $\scalb >0$, such that for every $\play\in\players$, $\mathring{\gammameas}^{\play}=\scalb \gammameas^{\play}$.

Finally, we consider $(\muprof,\nsgameg)$-harmonic games. Let $\play,\playalt\in\players$, $\actalt^{\play} \in \actions^{\play}$ and $\actalt^{\playalt} \in \actions^{\playalt}$. 
We construct a particular $(\muprof,\nsgameg)$-harmonic game $\game$. The payoff of player $\play\in\players$ is defined  as follows:
\[
\payoff^{\play}(\actprof)=
\begin{dcases} 
0 &\text{ if }\act^{\play}=\actalt^{\play},\\
\frac{1-\mumeas^{\playalt}(\actalt^{\playalt})}{\gammameas^{\play}(\actprof^{-\play})} & \text{ if } \act^{\play} \neq \actalt^{\play} \text{ and }\act^{\playalt}=\actalt^{\playalt},\\
\frac{-\mumeas^{\playalt}(\actalt^{\playalt})}{\gammameas^{\play}(\actprof^{-\play})} & \text{ if } \act^{\play} \neq \actalt^{\play}\text{ and }\act^{\playalt} \neq \actalt^{\playalt}.
\end{dcases}
\]
Likewise, for player $\playalt\in\players$, we have
\[
\payoff^{\playalt}(\actprof)=
\begin{dcases} 
0 & \text{ if }\act^{\playalt}=\actalt^{\playalt},\\
\frac{1-\mumeas^{\play}(\actalt^{\play})}{\gammameas^{\playalt}(\actprof^{-\playalt})} & \text{ if } \act^{\playalt} \neq \actalt^{\playalt} \text{ and } \act^{\play}=\actalt^{\play},\\
\frac{\mumeas^{\play}(\actalt^{\play})}{\gammameas^{\playalt}(\actprof^{-\playalt})} & \text{ if } \act^{\playalt} \neq \actalt^{\playalt}  \text{ and }\act^{\play} \neq \actalt^{\play}.
\end{dcases}
\]
The payoff of all the other players is assumed to be $0$.

This game could be reduced to a game with two strategies for each player, where a player chooses either her strategy labeled by $\actalt$ or any other strategy. This yields the following representation:
\begin{center}
\medskip
{\tabulinesep=1.2mm
\begin{tabu}{ r|cc|cc|}
\multicolumn{1}{r}{}
 &  \multicolumn{2}{c}{$\actalt^{\playalt}$}
 & \multicolumn{2}{c}{$\act^{\playalt} \neq \actalt^{\playalt}$}\\
\cline{2-5}
$\actalt^{\play}$ & $0$ & $0$ & $0$ & $-(1-\mumeas^{\play}(\actalt^{\play}))$ \\
\cline{2-5}
$\act^{\play} \neq \actalt^{\play}$ &  $(1-\mumeas^{\playalt}(\actalt^{\playalt}))$ & $0$ & $-\mumeas^{\playalt}(\actalt^{\playalt})$ & $\mumeas^{\play}(\actalt^{\play})$ \\
\cline{2-5}
\multicolumn{5}{c}{\hspace{5mm} $\gammameas^{\play} \payoff^{\play}$ (left) and $\gammameas^{\playalt} \payoff^{\playalt}$ (right)}
\end{tabu}}
\medskip
\end{center}
One can check that $\game$ is indeed $(\muprof,\nsgameg)$-harmonic. 
Therefore, $\game$ is also $(\mathring{\muprof},\mathring{\nsgameg})$-harmonic. 
We have seen previously that $\mathring{\nsgameg}$ is a multiple of $\nsgameg$, hence $\game$ is also $(\mathring{\muprof},\nsgameg)$-harmonic. 
Let $\actr^{\play} \neq \actalt^{\play}$, $\actr^{\playalt} \neq \actalt^{\playalt}$ and $\actprof^{-(\play,\playalt)}\in \actions^{-(\play,\playalt)}$. 
Since all strategies different from $\actalt^{\play}$ (resp. $\actalt^{\playalt}$) are duplicate, the $(\mathring{\muprof},\nsgameg)$-harmonicity of $\game$ at $(\actr^{\play},\actr^{\playalt},\argdot)$ yields, by \cref{de:classes-of-games}\ref{it:classes-mu-eta-harmonic},
\begin{align*}
\mathring{\mumeas}^{\playalt}(\actalt^{\playalt})\gammameas^{\playalt}(\actr^{\play},\argdot)& \parens*{\payoff^{\playalt}(\actr^{\play},\actalt^{\playalt},\argdot) - \payoff^{\playalt}(\actr^{\play},\actr^{\playalt},\argdot)} +\\
& \mathring{\mumeas}^{\play}(\actalt^{\play})\gammameas^{\play}(\actr^{\playalt},\argdot)\parens*{\payoff^{\play}(\actalt^{\play},\actr^{\playalt},\argdot)-\payoff^{\play}(\actr^{\play},\actr^{\playalt},\argdot)}=0,
\end{align*}
where $\argdot$ is a short notation for $\actprof^{-(\play,\playalt)}$.
Replacing $\gammameas^{\play} \payoff^{\play}$ and $\gammameas^{\playalt}\payoff^{\playalt}$ with their definitions, we obtain
\[
\mathring{\mumeas}^{\playalt}(\actalt^{\playalt})(0-\mumeas^{\play}(\actalt^{\play}))+\mathring{\mumeas}^{\play}(\actalt^{\play})(0-(-\mumeas^{\playalt}(\actalt^{\playalt})))=0
\]
and, as a consequence,
\[
\frac{\mathring{\mumeas}^{\playalt}(\actalt^{\playalt})}{\mumeas^{\playalt}(\actalt^{\playalt})}=\frac{\mathring{\mumeas}^{\play}(\actalt^{\play})}{\mumeas^{\play}(\actalt^{\play})}.
\]
By changing the game, we obtain that the equality is true for every pair of players and for every pair of strategies. It follows that there exists a unique $\scala>0$, such that for every $\play\in\players$, $\mathring{\mumeas}^{\play}=\scala \mumeas^{\play}$. This concludes the proof.
\end{proof}


\subsection*{Proofs of \cref{suse:potential-harmonic}}

The following  lemma  states that the global flow around a set $\subactions\subset\actions$ only depends on the flow on $\subactions^{c}\coloneqq\actions\setminus\subactions$ since the flows inside $\subactions$ compensate one another.    

\begin{lemma}
\label{le:antis}
Let $\flow\in \Cone$ and  $\subactions\subset\actions$. Then, 
\begin{equation*}
\sum_{\actprof\in\subactions}\sum_{\actaltprof\in\actions}\mumeas(\actprof)\mumeas(\actaltprof)\flow(\actprof,\actaltprof)=\sum_{\actprof\in\subactions}\sum_{\actaltprof \in\subactions^{c}}\mumeas(\actprof)\mumeas(\actaltprof)\flow(\actprof,\actaltprof).
\end{equation*}
\end{lemma}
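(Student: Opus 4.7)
The plan is to split the inner sum over $\actaltprof \in \actions$ according to whether $\actaltprof \in \subactions$ or $\actaltprof \in \subactions^{c}$, and then show that the piece corresponding to $\actaltprof \in \subactions$ vanishes. Concretely, write
\[
\sum_{\actprof\in\subactions}\sum_{\actaltprof\in\actions}\mumeas(\actprof)\mumeas(\actaltprof)\flow(\actprof,\actaltprof)
=\underbrace{\sum_{\actprof\in\subactions}\sum_{\actaltprof\in\subactions}\mumeas(\actprof)\mumeas(\actaltprof)\flow(\actprof,\actaltprof)}_{=:A}
+\sum_{\actprof\in\subactions}\sum_{\actaltprof\in\subactions^{c}}\mumeas(\actprof)\mumeas(\actaltprof)\flow(\actprof,\actaltprof),
\]
so the claim reduces to proving $A=0$.

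The key step is to exploit the skew-symmetry from \cref{eq:C-one}, namely $\flow(\actprof,\actaltprof)=-\flow(\actaltprof,\actprof)$ for all $\actprof,\actaltprof\in\actions$. Since in $A$ both summation variables range over the same set $\subactions$, I would relabel the dummy variables by swapping $\actprof$ and $\actaltprof$ to obtain
\[
A=\sum_{\actprof\in\subactions}\sum_{\actaltprof\in\subactions}\mumeas(\actaltprof)\mumeas(\actprof)\flow(\actaltprof,\actprof)
=-\sum_{\actprof\in\subactions}\sum_{\actaltprof\in\subactions}\mumeas(\actprof)\mumeas(\actaltprof)\flow(\actprof,\actaltprof)=-A,
\]
where the second equality uses the skew-symmetry. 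Hence $2A=0$, so $A=0$ and the identity follows.

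This lemma is essentially a calculation, and I do not anticipate any real obstacle: it is the standard observation that a skew-symmetric form, when paired with a symmetric weight $\mumeas(\actprof)\mumeas(\actaltprof)$ and summed over a square region $\subactions\times\subactions$, vanishes. No additional structure beyond the definition of $\Cone$ and the positivity of $\mumeas$ is used, and the argument is independent of the choice of $\subactions$.
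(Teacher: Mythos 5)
Your proof is correct and follows essentially the same route as the paper's: split the inner sum over $\actions$ into the parts over $\subactions$ and $\subactions^{c}$, then kill the $\subactions\times\subactions$ block using the skew-symmetry of $\flow$ together with the symmetry of the weight $\mumeas(\actprof)\mumeas(\actaltprof)$. Your version merely spells out the dummy-variable swap that the paper leaves implicit.
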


\begin{proof}
We have:
\[\sum_{\actprof\in\subactions}\sum_{\actaltprof\in\actions}\mumeas(\actprof)\mumeas(\actaltprof)\flow(\actprof,\actaltprof)
=\sum_{\actprof\in\subactions}\sum_{\actaltprof \in\subactions}\mumeas(\actprof)\mumeas(\actaltprof)\flow(\actprof,\actaltprof)+\sum_{\actprof \in\subactions}\sum_{\actaltprof\in\subactions^{c}}\mumeas(\actprof)\mumeas(\actaltprof)\flow(\actprof,\actaltprof)\]
and due to the skew-symmetric structure of $\subactions$ for any $\actprof,\actaltprof\in\actions$, $\flow(\actprof,\actaltprof)+\flow(\actaltprof,\actprof)=0$ and thus the first term of the right-hand side is equal to $0$.
\end{proof}

\begin{proof}[Proof of  \cref{th:mu-eta-mixed-equilibrium}] 
Let $\game$ be a $(\muprof,\nsgameg)$-harmonic game. Then, $\gradop^{*}\embop(\game)(\actprof)=0$  for all $\actprof\in\actions$.  
Let  $\play\in\players$ and   $\actr^{\play} \in \actions^{\play}$. 
Call $\subactions$ the subset of strategy profiles $\braces*{(\actr^{\play},\actprof^{-\play}) \colon \actprof^{-\play} \in \actions^{-\play}}$. 
Then, multiplying by $\mumeas(\actr^{\play},\actprof^{-\play})$ and summing over $\actprof^{-\play} \in \actions^{-\play}$ we get
\begin{align*}
0&=\sum_{\actprof^{-\play} \in \actions^{-\play}}\mumeas(\actr^{\play},\actprof^{-\play})\gradop^{*}\embop(\game)(\actr^{\play},\actprof^{-\play})\\
&=-\sum_{\actrprof\in \subactions}\mumeas(\actrprof)\sum_{\playalt\in\players}\gradop^{\playalt*}\gradop^{\playalt}(\gammameas^{\playalt} \payoff^{\playalt})(\actrprof)\\
&=\sum_{\actrprof\in \subactions}\sum_{\actprof\in\actions}\mumeas(\actrprof)\mumeas(\actprof)\sum_{\playalt\in\players}\symmfunc^{\playalt}(\actrprof,\actprof)\gradop^{\playalt}(\gammameas^{\playalt} \payoff^{\playalt})(\actrprof,\actprof).
\end{align*}
Thus, in view of  \cref{le:antis}, we can eliminate some terms of the summation and then notice that $\actprof\in\subactions$ and $\actaltprof \in \subactions^c$ are not $\playalt$-comparable if $\playalt\neq\play$, hence
\begin{align*}
0&=\sum_{\actrprof \in \subactions}\sum_{\actprof \in \subactions^c}\mumeas(\actrprof)\mumeas(\actprof)\sum_{j \in N}\symmfunc^{\playalt}(\actrprof,\actprof)\gradop^{\playalt}(\gammameas^{\playalt} \payoff^{\playalt})(\actrprof,\actprof)\\
&=\sum_{\actrprof\in \subactions}\sum_{\actprof \in \subactions^c}\mumeas(\actrprof)\mumeas(\actprof) \parens*{\symmfunc^{\play}(\actrprof,\actprof)\gradop^{\play}(\gammameas^{\play} \payoff^{\play})(\actprof,\actaltprof) +\sum_{\playalt\neq\play} \symmfunc^{\playalt}(\actrprof,\actprof)\gradop^{\playalt}(\gammameas^{\playalt} \payoff^{\playalt})(\actprof,\actaltprof)}\\
&=\sum_{\actrprof\in\subactions}\sum_{\actprof \in \subactions^c}\mumeas(\actrprof)\mumeas(\actprof) \symmfunc^{\play}(\actrprof,\actprof)\gradop^{\play}(\gammameas^{\play} \payoff^{\play})(\actrprof,\actprof) + 0.
\end{align*}
We can now replace $\gradop^{\play}$ and $\symmfunc^{\play}$ with their definitions to obtain
\begin{align*}
0 &=\sum_{\actrprof\in\subactions}\sum_{\actprof \in \subactions^c}\mumeas(\actrprof)\mumeas(\actprof) \symmfunc^{\play}(\actrprof,\actprof)^{2}\parens*{(\gammameas^{\play} \payoff^{\play})(\actprof)-(\gammameas^{\play} \payoff^{\play})(\actrprof)}\\
&=\sum_{\actprof^{-\play} \in \actions^{-\play}}\sum_{\act^{\play} \in \actions^{\play}} \mumeas(\actr^{\play},\actprof^{-\play})\mumeas(\act^{\play},\actprof^{-\play})\frac{1}{\mumeas^{-\play}(\actprof^{-\play})}\parens*{(\gammameas^{\play}\payoff^{\play})(\actr^{\play},\actprof^{-\play})-(\gammameas^{\play}\payoff^{\play})(\act^{\play},\actprof^{-\play})}\\
&= \mumeas^{\play}(\actr^{\play})\sum_{\actprof^{-\play} \in \actions^{-\play}}\mumeas^{-\play}(\actprof^{-\play})\parens*{\sum_{\act^{\play} \in \actions^{\play}}\mumeas^{\play}(\act^{\play})\parens*{(\gammameas^{\play}\payoff^{\play})(\actr^{\play},\actprof^{-\play})-(\gammameas^{\play}\payoff^{\play})(\act^{\play},\actprof^{-\play})}}.
\end{align*}
Dividing by $\mumeas^{\play}(\actr^{\play})$, which is strictly positive, we obtain
\begin{align*}
0=\sum_{\actprof^{-\play} \in \actions^{-\play}}\mumeas^{-\play}(\actprof^{-\play})\parens*{\sum_{\act^{\play} \in \actions^{\play}}\mumeas^{\play}(\act^{\play})\parens*{(\gammameas^{\play}\payoff^{\play})(\actr^{\play},\actprof^{-\play})-(\gammameas^{\play}\payoff^{\play})(\act^{\play},\actprof^{-\play})}}.
\end{align*}
It follows that, for all $\actr^{\play} \in \actions^{\play}$,
\begin{multline*}
\sum_{\actprof^{-\play} \in \actions^{-\play}}\mumeas^{-\play}(\actprof^{-\play})\sum_{\act^{\play} \in \actions^{\play}}\mumeas^{\play}(\act^{\play})(\gammameas^{\play}\payoff^{\play})(\actr^{\play},\actprof^{-\play})\\
= \sum_{\actprof^{\play} \in \actions^{-\play}}\mumeas^{-\play}(\actprof^{-\play})\sum_{\act^{\play} \in \actions^{\play}}\mumeas^{\play}(\act^{\play})(\gammameas^{\play}\payoff^{\play})(\act^{\play},\actprof^{-\play}),
\end{multline*}
and thus,  equivalently,
\begin{multline*}
\parens*{\sum_{\act^{\play} \in \actions^{\play}}\mumeas^{\play}(\act^{\play})}\sum_{\actprof^{-\play} \in \actions^{-\play}}\mumeas^{-\play}(\actprof^{-\play})(\gammameas^{\play}\payoff^{\play})(\actr^{\play},\actprof^{-\play})\\
= \sum_{\actprof^{-\play} \in \actions^{-\play}}\mumeas^{-\play}(\actprof^{-\play})\sum_{\act^{\play} \in \actions^{\play}}\mumeas^{\play}(\act^{\play})(\gammameas^{\play}\payoff^{\play})(\act^{\play},\actprof^{-\play}).
\end{multline*}

Then, dividing by $\mumeas^{\play}(\actions^{\play})$, we have
\begin{align*}
\sum_{\actprof^{-\play} \in \actions^{-\play}}\norml{{\mumeas}}^{\,\,-\play}(\actprof^{-\play}) \gammameas^{\play}(\actprof^{-\play})\payoff^{\play}(\actr^{\play},\actprof^{-\play})=\sum_{\actprof^{-\play} \in \actions^{-\play}}\norml{{\mumeas}}^{\,\,-\play}(\actprof^{-\play})\sum_{\act^{\play} \in \actions^{\play}}\norml{\mumeas}^{\play}(\act^{\play})\gammameas^{\play}(\actprof^{-\play})\payoff^{\play}(\act^{\play},\actprof^{-\play}).
\end{align*}

Notice that the right-hand side is independent of $\actr^{\play}$. This concludes the proof since $\actr^{\play}$ is arbitrary.
\end{proof}


\noindent In order to prove \cref{cor:mu-eta-mixed-equilibrium-product}, we first prove the following lemma.
\begin{lemma}
\label{pr:equilibrim-preservation}
Let $\nsgameb$ be a product co-measure vector generated by $\bprof$.
If $\equils_{\actions}(\game)$ is the Nash-equilibrium set of the game $\game$, then the Nash-equilibrium set of $(\nsgameb\cdot\game)$ is given by
\begin{equation}\label{eq:equilibria-prod-scal}
\equils_{\actions}(\nsgameb\cdot\game)
=\braces*{(\norml{\mixedalt}^{\play})_{\play\in\players} : \mixedalt^{\play}(\act^{\play})=\frac{\mixed^{\play}(\act^{\play})}{\bmeas^{\play}(\act^{\play})} \text{ and } (\mixed^{\play})_{\play\in\players} \in \equils_{\actions}(\game)}.
\end{equation} 
\end{lemma}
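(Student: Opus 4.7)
The plan is to prove both inclusions by direct computation, exploiting that the scaling factor $\betameas^{\play}$ is a product over players different from $\play$ and so factorizes nicely with the rescaled mixed strategies.

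Concretely, for any strategy profile $\mixedprof=(\mixed^{\play})_{\play\in\players}$ I will define $\mixedalt^{\play}(\act^{\play})=\mixed^{\play}(\act^{\play})/\bmeas^{\play}(\act^{\play})$ and set $K^{\play}\coloneqq\sum_{\actr^{\play}\in\actions^{\play}}\mixedalt^{\play}(\actr^{\play})>0$, so that $\norml{\mixedalt}^{\play}(\act^{\play})\bmeas^{\play}(\act^{\play})=\mixed^{\play}(\act^{\play})/K^{\play}$. Using \cref{eq:gamma-scaling} together with $\betameas^{\play}(\actprof^{-\play})=\prod_{\playalt\neq\play}\bmeas^{\playalt}(\act^{\playalt})$, I compute the expected payoff of player $\play$ against $\norml{\mixedaltprof}^{-\play}$ in the scaled game:
\begin{align*}
\sum_{\actprof^{-\play}}\parens*{\prod_{\playalt\neq\play}\norml{\mixedalt}^{\playalt}(\act^{\playalt})}\betameas^{\play}(\actprof^{-\play})\payoff^{\play}(\actr^{\play},\actprof^{-\play})
&=\frac{1}{\prod_{\playalt\neq\play}K^{\playalt}}\sum_{\actprof^{-\play}}\prod_{\playalt\neq\play}\mixed^{\playalt}(\act^{\playalt})\,\payoff^{\play}(\actr^{\play},\actprof^{-\play}).
\end{align*}
Thus the expected payoff of player $\play$ in $(\nsgameb\cdot\game)$ at $\norml{\mixedaltprof}$ is a \emph{strictly positive} scalar multiple (depending only on $\play$, not on $\actr^{\play}$) of her expected payoff in $\game$ at $\mixedprof$.

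From this identity the equivalence of equilibria is immediate. Indeed, $\actr^{\play}$ maximizes $\payoff^{\play}(\argdot,\mixedprof^{-\play})$ if and only if it maximizes the left-hand side above, and the support of $\norml{\mixedalt}^{\play}$ equals the support of $\mixed^{\play}$ because $\bmeas^{\play}>0$. Hence $\mixedprof\in\equils_{\actions}(\game)$ if and only if $\norml{\mixedaltprof}\in\equils_{\actions}(\nsgameb\cdot\game)$, giving the $\supseteq$ direction of \cref{eq:equilibria-prod-scal}. For the $\subseteq$ direction, given any equilibrium $\mixedprofalt$ of the scaled game, I invert the transformation by setting $\mixed^{\play}(\act^{\play})\coloneqq\bmeas^{\play}(\act^{\play})\mixedalt^{\play}(\act^{\play})$ and normalizing, which by the same computation yields an equilibrium of $\game$ mapping back to $\mixedprofalt$.

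No step looks genuinely hard here: the product structure of $\nsgameb$ is exactly what makes the rescaling factorize so that the payoff in the scaled game at the renormalized profile is proportional to the payoff in the original game at the original profile. The only point requiring mild care is bookkeeping with the normalization constants $K^{\play}$, to ensure that supports are preserved and that the proportionality constant is independent of the player's own action.
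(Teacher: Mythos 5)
Your proof is correct and follows essentially the same route as the paper's: a direct computation exploiting that $\betameas^{\play}(\actprof^{-\play})=\prod_{\playalt\neq\play}\bmeas^{\playalt}(\act^{\playalt})$ cancels against the rescaled mixed strategies, so that expected payoffs in $(\nsgameb\cdot\game)$ at $\norml{\mixedaltprof}$ are a positive, action-independent multiple of those in $\game$ at $\mixedprof$. If anything you are slightly more careful than the paper, which argues with unnormalized strategies and only spells out one inclusion; your explicit tracking of the constants $K^{\play}$ and of the inverse map covers both directions cleanly.
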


In particular, when $\nsgameb$ is a product-scaling, if $\mixedprof$ is a pure Nash equilibrium of $\game$ then $\mixedprof$ is also a pure Nash equilibrium of $(\nsgameb\cdot\game)$. 
Knowing the set of equilibria of some game $\game$, it is possible to compute the set of equilibria of $(\nsgameb\cdot\game)$ without knowing $\game$. 
In particular, when, for every $\play\in\players$, $\bmeas^{\play}(\act^{\play})$ does not depend on $\act^{\play}$, we obtain constant scaling, and the Nash equilibrium sets of both games coincide. 
Notice that, a scaling that is not a product-scaling nevertheless preserves the set of pure Nash-equilibria.

\begin{proof}[Proof of \cref{pr:equilibrim-preservation}]
Let $\game \in \games$ and let $\mixedprof=(\mixed^{\play})_{\play\in\players}$ be a Nash equilibrium of $\game$. Let
$\supp(\mixed^{\play})=\braces{\act^{\play} \in \actions^{\play} \colon \mixed^{\play}(\act^{\play})>0}$. 
By definition of $\mixedalt^{\play}$, we have that $\supp(\mixed^{\play})=\supp(\mixedalt^{\play})$. 
Then, for any $\play\in\players$ and any $\act^{\play},\actalt^{\play} \in \supp(\mixed^{\play})$, we have $\mixed^{-\play}(\actprof^{-\play})\payoff^{\play}(\act^{\play},\actprof^{-\play})=\mixed^{-\play}(\actprof^{-\play})\payoff^{\play}(\actalt^{\play},\actprof^{-\play})$. By \cref{de:scaling,eq:prod-nu-i}, we get
\begin{align*}
\mixedalt^{-\play}(\actprof^{-\play})(\nsgameb\cdot\game)^{\play}(\act^{\play},\actprof^{-\play})
&=\dfrac{\mixed^{-\play}(\actprof^{-\play})}{\prod_{\playalt\neq\play} \bmeas^{\playalt}(\act^{\playalt})}(\nsgameb\cdot\game)^{\play}(\act^{\play},\actprof^{-\play})\\
&= \mixed^{-\play}(\actprof^{-\play})\payoff^{\play}(\act^{\play},\actprof^{-\play})\\
&= \mixed^{-\play}(\actprof^{-\play})\payoff^{\play}(\actalt^{\play},\actprof^{-\play})\\
&=\dfrac{\mixed^{-\play}(\actprof^{-\play})}{\prod_{\playalt\neq\play} \bmeas^{\playalt}(\act^{\playalt})}(\nsgameb\cdot\game)^{\play}(\actalt^{\play},\actprof^{-\play})\\
&=\mixedalt^{-\play}(\actprof^{-\play})(\nsgameb\cdot\game)^{\play}(\actalt^{\play},\actprof^{-\play}).
\end{align*}
Let $\act^{\play} \in \supp(\mixed^{\play})$ and $\actalt^{\play} \notin \supp(\mixed^{\play})$. Then, 
\begin{align*}
\mixedalt^{-\play}(\actprof^{-\play})(\nsgameb\cdot\game)^{\play}(\act^{\play},\actprof^{-\play})
&=\dfrac{\mixed^{-\play}(\actprof^{-\play})}{\prod_{\playalt\neq\play} \bmeas^{\playalt}(\act^{\playalt})}(\nsgameb\cdot\game)^{\play}(\act^{\play},\actprof^{-\play})\\
&= \mixed^{-\play}(\actprof^{-\play})\game^{\play}(\act^{\play},\actprof^{-\play})\\
&\geq \mixed^{-\play}(\actprof^{-\play})\game^{\play}(\actalt^{\play},\actprof^{-\play})\\
&= \dfrac{\mixed^{-\play}(\actprof^{-\play})}{\prod_{\playalt\neq\play} \bmeas^{\playalt}(\act^{\playalt})}(\nsgameb\cdot\game)^{\play}(\actalt^{\play},\actprof^{-\play})\\
&=\mixedalt^{-\play}(\actprof^{-\play})(\nsgameb\cdot\game)^{\play}(\actalt^{\play},\actprof^{-\play}).
\end{align*}
The same equality holds for the normalized strategy $\norml{\mixedalt}^{\play}$ and it thus follows that $\norml{\actaltprof}=\parens{\norml{\actalt}^{\play}}_{\play\in\players}$ is a Nash equilibrium.
\end{proof}

\begin{proof}[Proof of \cref{cor:mu-eta-mixed-equilibrium-product}]
By \cref{th:mu-eta-mixed-equilibrium}, $\norml{\mumeas}$ is a Nash equilibrium in $\nsgameg \cdot \game$. We have that $$\frac{1}{\nsgameg} \cdot \parens*{\nsgameg \cdot \game}=\game,$$
where $\frac{1}{\nsgameg}$ is the  co-measure vector defined as follows:
\begin{equation}\label{eq:1-over-gamma}
\forall \play \in \players, \forall \actprof^{-\play} \in \actions^{-\play}, \,\, \parens*{\frac{1}{\nsgameg}}^{\play}\parens*{\actprof^{-\play}}=\frac{1}{\gammameas^{\play}(\actprof^{-\play})}.
\end{equation}
By \cref{pr:equilibrim-preservation}, we get that $\overline{\norml{\mumeas}\nsgameg}=\norml{\mumeas \nsgameg}$ is a Nash equilibrium in $\game$.
\end{proof}


\subsection*{Proofs of \cref{suse:closest-potential}}

\begin{proof}[Proof of \cref{pr:closest-potential}]
According to the definition of the induced norm, for any $\play \in \players$ and for all $\actprof \in \actions$ we have
\begin{equation}
\label{eq:majnorm}
\abs*{\payoff^{\play}(\actprof)-\payoff_{\Pot}^{\play}(\actprof)} 
\leq \frac{1}{\gammameas^{\play}(\actprof^{-\play})\sqrt{\mumeas^{\play}(\actions^{\play})}}\norm*{\game - \game_{\Pot}}_{\muprof,\nsgameg} 
\leq \max\limits_{\playalt\in\players}\max\limits_{\actprof^{-\playalt} \in \actions^{-\playalt}}\frac{\dggp}{\gammameas^{\playalt}(\actprof^{-\playalt})\sqrt{\mumeas^{\playalt}(\actions^{\playalt})}}.
\end{equation}
Let $\actprof \in \actions$ be an equilibrium in $\game_{\Pot}$ and  consider $\actaltprof \in \actions$ such that $\act^{\play} \neq \actalt^{\play}$ for some $\play \in \players$ and $\act^{\playalt}=\actalt^{\playalt}$ for any $\playalt\neq\play$. 
Then we have
\begin{align*}
\payoff^{\play}(\actaltprof)-\payoff^{\play}(\actprof)
&\leq \payoff^{\play}(\actaltprof)-\payoff^{\play}(\actprof)-\parens*{\payoff_{\Pot}^{\play}(\actaltprof)-\payoff_{\Pot}^{\play}(\actprof)}\\
&\leq 2 \max\limits_{\playalt\in\players}\max\limits_{\actprof^{-\playalt} \in \actions^{-\playalt}}\frac{\dggp}{\gammameas^{\playalt}(\actprof^{-\playalt})\sqrt{\mumeas^{\playalt}(\actions^{\playalt})}},
\end{align*}
where the first inequality follows from the fact that $\actprof \in \actions$ is an equilibrium in $\game_{\Pot}$ and the second one is due to \cref{eq:majnorm}.
\end{proof}


\subsection*{Proofs of \cref{suse:preserving}}

\begin{proof}[Proof of \cref{pr:decomposition-translation}]
Given \cref{eq:pseudo-translation}, it is clear that $\game_{(\muprof,\nsgameg)\NSd}+\func(\game)$ is nonstrategic whereas $\game_{(\muprof,\nsgameg)\Potd}$ is $\muprof$-normalized $\nsgameg$-potential, and $\game_{(\muprof,\nsgameg)\Hard}$ is $\muprof$-normalized $(\muprof,\nsgameg)$-potential. 
By uniqueness of the decomposition in \cref{th:ortho-sum}, we obtain the result.
\end{proof}

\begin{proof}[Proof of \cref{th:scaling}]
Let $\muprof$ be a positive measure on $\actions$. Let $\nsgameg$ and $\nsgameb$ be two  co-measure vectors.
By  \cref{th:ortho-sum}, there exist a $\nsgameg$-potential $\muprof$-normalized game $\game_{(\muprof,\nsgameg)\Potd}$, a $(\muprof,\nsgameg)$-harmonic $\muprof$-normalized game $\game_{(\muprof,\nsgameg)\Hard}$, and a nonstrategic game $\game_{(\muprof,\nsgameg)\NSd}$ such that
\begin{equation}
\game=\game_{(\muprof,\nsgameg)\NSd}+\game_{(\muprof,\nsgameg)\Potd}+\game_{(\muprof,\nsgameg)\Hard}.
\end{equation}
For any $\play\in\players$, $\act^{\play},\actalt^{\play} \in \actions^{\play}$ and $\actprof^{-\play} \in \actions^{-\play}$, we have:
\begin{align*}
\potent(\actalt^{\play},\actprof^{-\play})-\potent(\act^{\play},\actprof^{-\play})
=\gammameas^{\play}(\actprof^{-\play}) \payoff_{(\muprof,\nsgameg)\Potd}^{\play}(\actalt^{\play},\actprof^{-\play})-\gammameas^{\play}(\actprof^{-\play}) \payoff_{(\muprof,\nsgameg)\Potd}^{\play}(\act^{\play},\actprof^{-\play}),
\end{align*}
therefore,
\begin{align*}
\potent(\actalt^{\play},\actprof^{-\play})-\potent(\act^{\play},\actprof^{-\play}) &=\frac{\gammameas^{\play}(\actprof^{-\play})}{\betameas^{\play}(\actprof^{-\play})} \parens*{\betameas^{\play}(\actprof^{-\play})\payoff_{(\muprof,\nsgameg)\Potd}^{\play}(\actalt^{\play},\actprof^{-\play})-\betameas^{\play}(\actprof^{-\play})\payoff_{(\muprof,\nsgameg)\Potd}^{\play}(\act^{\play},\actprof^{-\play})}\\
&=\frac{\gammameas^{\play}(\actprof^{-\play})}{\betameas^{\play}(\actprof^{-\play})} \parens*{(\nsgameb\cdot\payoff_{(\muprof,\nsgameg)\Potd})^{\play}(\actalt^{\play},\actprof^{-\play})-(\nsgameb\cdot\payoff_{(\muprof,\nsgameg)\Potd})^{\play}(\act^{\play},\actprof^{-\play})}.
\end{align*}
Hence $(\nsgameb\cdot\game_{(\muprof,\nsgameg)\Potd})$ is $\nsgameg/\nsgameb$-potential. Furthermore, one can check easily that $(\nsgameb\cdot\game_{(\muprof,\nsgameg)\Potd})$ is  $\muprof$-normalized.

Moreover, we  have
\begin{align*}
0& =\sum_{\play\in\players} \sum_{\actalt^{\play}\in \actions^{\play}} \mumeas^{\play}(\actalt^{\play})\gammameas^{\play}(\actprof^{-\play})
\parens*{\payoff_{(\muprof,\nsgameg)\Hard}^{\play}(\act^{\play},\actprof^{-\play})-\payoff_{(\muprof,\nsgameg)\Hard}^{\play}(\actalt^{\play},\actprof^{-\play})}\\
& =\sum_{\play\in\players} \sum_{\actalt^{\play}\in \actions^{\play}} \frac{\mumeas^{\play}(\actalt^{\play})\gammameas^{\play}(\actprof^{-\play})}{\betameas^{\play}(\actprof^{-\play})}\parens*{\betameas^{\play}(\actprof^{-\play}) \payoff_{(\muprof,\nsgameg)\Hard}^{\play}(\act^{\play},\actprof^{-\play})-\betameas^{\play}(\actprof^{-\play}) \payoff_{(\muprof,\nsgameg)\Hard}^{\play}(\actalt^{\play},\actprof^{-\play})}\\
& =\sum_{\play\in\players} \sum_{\actalt^{\play}\in \actions^{\play}} \mumeas^{\play}(\actalt^{\play})\frac{\gammameas^{\play}(\actprof^{-\play})}{\betameas^{\play}(\actprof^{-\play})} 
\parens*{(\nsgameb\cdot\payoff_{(\muprof,\nsgameg)\Hard})^{\play}(\actalt^{\play},\actprof^{-\play})-(\nsgameb\cdot\payoff_{(\muprof,\nsgameg)\Hard})^{\play}(\act^{\play},\actprof^{-\play})}.
\end{align*}
One can also check easily that $(\nsgameb\cdot\game_{(\muprof,\nsgameg)\Hard})$ is  $\muprof$-normalized. 
Since 
$(\nsgameb\cdot\game)=(\nsgameb\cdot\game_{(\muprof,\nsgameg)\Potd})+(\nsgameb\cdot\game_{(\muprof,\nsgameg)\Hard})+(\nsgameb\cdot\game_{(\muprof,\nsgameg)\NSd})$, 
the result follows from uniqueness of the $(\muprof,\nsgameg/\nsgameb)$-decomposition.
\end{proof}


\subsection*{Proofs of \cref{suse:duplicate}}

To prove \cref{th:reduce-replicate}, it is sufficient to establish \cref{le:replica}. Uniqueness of the decomposition then implies the theorem.

\begin{proof}[Proof of \cref{le:replica}]
\ref{it:le:replica-1}
Assume that $\game$ is $\nsgameg$-potential. By \cref{pr:operator}, there exists $\potent \colon \actions \to  \R$ such that $\embop(\game)=\gradop(\potent)$.

Define $\widecheck{\potent}$ the extension of the potential $\potent$ to $\widecheck{\actions}$ by
\[
\forall \actprof\in \actions^{-\play}, \widecheck{\potent}(\act_{0}^{\play},\actprof^{-\play})=\potent(\act_{1}^{\play},\actprof^{-\play}).
\]
Then $\widecheck{\potent}$ is a potential function for $\widecheck{\game}$ and hence $\widecheck{\game}$ is $\widecheck{\nsgameg}$-potential. 

\ref{it:le:replica-2}
Assume now that $\game$ is nonstrategic. 
We know that $\embop(\game)=0$.
Clearly, the flow induced by  $\widecheck{\game}$ is identical to the flow induced by $\game$ over the strategy profile set $\actions$. 
Moreover, the flow in $\widecheck{\game}$ between the strategy profile  $(\act_{0}^{\play},\actprof^{-\play})$ and another profile is equal to the flow between  $(\act_{1}^{\play},\actprof^{-\play})$ and another profile. 
Therefore, $\embop(\widecheck{\game})=0$ and $\widecheck{\game}$ is nonstrategic. 

\ref{it:le:replica-3}
For the rest, assume that $\game$ is $\muprof$-normalized. 
By definition, we know that for all $\playalt\in \players$ and for all $\actprof^{-\playalt}\in \actions^{-\playalt}$,
\[
\sum_{\act^{\playalt} \in \actions^{\playalt}} \mumeas^{\playalt}(\act^{\playalt})\payoff^{\playalt}(\act^{\playalt},\actprof^{-\playalt})=0.
\]
We need to distinguish two cases. If $\playalt \neq \play$, then, for all $\actprof^{-\playalt} \in \reduc{\actions}^{-\playalt}\subset \actions^{-\playalt}$, we have
\[
\sum_{\act^{\playalt} \in \actions^{\playalt}} \widecheck{\mumeas}^{\playalt}(\act^{\playalt})\payoff^{\playalt}(\act^{\playalt},\actprof^{-\playalt})
=\sum_{\act^{\playalt} \in \actions^{\playalt}} \mumeas^{\playalt}(\act^{\playalt})\payoff^{\playalt}(\act^{\playalt},\actprof^{-\playalt})=0.
\]
If $\playalt=\play$, then, for all $\actprof^{-\play}\in \actions^{-\play}$, we have
\begin{align*}
\sum_{\act^{\play} \in \widecheck{\actions^{\play}}} \widecheck{\mumeas}^{\play}(\act^{\play})\payoff^{\playalt}(\act^{\play},\actprof^{-\play})
&= \widecheck{\mumeas}^{\play}(\act_{0}^{\play})\payoff^{\play}(\act_{0}^{\play},\actprof^{-\play})+ \widecheck{\mumeas}^{\play}(\act_{1}^{\play})\payoff^{\play}(\act_{1}^{\play},\actprof^{-\play})\\
&\quad+ \sum_{\act^{\play} \in \actions^{\play}\setminus\braces{\act_{1}^{1}}} \widecheck{\mumeas}^{\play}(\act^{\play})\payoff^{\play}(\act^{\play},\actprof^{-\play})\\
&=\mumeas^{\play}(\act_{1}^{\play})\payoff^{\play}(\act_{1}^{\play},\actprof^{-\play})+\sum_{\act^{\play} \in \actions^{\play}\setminus\braces{\act_{1}^{\play}}} \mumeas^{\play}(\act^{\play})\payoff^{\play}(\act^{\play},\actprof^{-\play})\\
&= \sum_{\act^{\play} \in \actions^{\play}} \mumeas^{\play}(\act^{\play})\payoff^{\play}(\act^{\play},\actprof^{-\play})\\
&=0.
\end{align*}

\noindent Therefore, $\widecheck{\game}$ is $\widecheck{\muprof}$-normalized.

\ref{it:le:replica-4}
Finally, assume that $\game$ is $(\muprof,\nsgameg)$-harmonic. Let $\actprof=(\act^{\playalt})_{\playalt\in\players}$ and for all $\playalt\in\players$ and $\actalt^{\playalt} \in \actions^{\playalt}$, put $\payoff^{\playalt}_{\actalt^{\playalt}}(\actprof)=\payoff^{\playalt}(\act^{\playalt},\actprof^{-\playalt})-\payoff^{\playalt}(\actalt^{\playalt},\actprof^{-\playalt})$. 
Then, since $\game$ is assumed to be $(\muprof,\nsgameg)$-harmonic, for any $\actprof\in\actions$, we have
\begin{align*}
\gradop^{*} \embop(\game) (\actprof)
&= \sum_{\playalt\in\players}\sum_{\actalt^{\playalt} \in \actions^{\playalt}}\mumeas^{\playalt}(\actalt^{\playalt})\gammameas^{\playalt}(\actprof^{-\playalt})\payoff^{\playalt}_{\actalt^{\playalt}}(\actprof)\\
&=\sum_{\actalt^{\play} \in \actions^{\play}}\mumeas^{\play}(\actalt^{\play})\gammameas^{\play}(\actprof^{-\play})\payoff^{\play}_{\actalt^{\play}}(\actprof) + \sum_{\playalt\neq\play}\sum_{\actalt^{\playalt} \in \actions^{\playalt}}\mumeas^{\playalt}(\actalt^{\playalt})\gammameas^{\playalt}(\actprof^{-\playalt})\payoff^{\playalt}_{\actalt^{\playalt}}(\actprof)\\
&=0.
\end{align*} 
In the extended game $\widecheck{\game}=(\widecheck{\payoff}^{\playalt})_{\playalt\in\players}$ at any $\actprof\in\widecheck{\actions}$ we get:
\begin{align*}
\gradop^{*} \embop(\widecheck{\game}) (\actprof)&=\sum_{\actalt^{\play} \in \widecheck{\actions^{\play}}}\widecheck{\mumeas}^{\play}(\actalt^{\play})\widecheck{\gammameas}^{\play}(\actprof^{-\play})\widecheck{\payoff}^{\play}_{\actalt^{\play}}(\actprof) + \sum_{\playalt\neq\play}\sum_{\actalt^{\playalt} \in \actions^{\playalt}}\widecheck{\mumeas}^{\playalt}(\actalt^{\playalt})\widecheck{\gammameas}^{\playalt}(\actprof^{-\playalt})\widecheck{\payoff}^{\playalt}_{\actalt^{\playalt}}(\actprof)\\
&=\widecheck{\mumeas}^{\play}(\act_{0}^{\play})\widecheck{\gammameas}^{\play}(\actprof^{-\play})\widecheck{\payoff}^{\play}_{\act_{0}^{\play}}(\actprof)+\widecheck{\mumeas}^{\play}(\act_{1}^{\play})\widecheck{\gammameas}^{\play}(\actprof^{-\play})\widecheck{\payoff}^{\play}_{\act_{1}^{\play}}(\actprof)+
\sum_{\actalt^{\play} \in \widecheck{\actions^{\play}}\setminus{\{\act_{0}^{\play},\act_{1}^{\play}\}}}\widecheck{\mumeas}^{\play}(\actalt^{\play})\widecheck{\gammameas}^{\play}(\actprof^{-\play})\widecheck{\payoff}^{\play}_{\actalt^{\play}}(\actprof) \\
& + \sum_{\playalt\neq\play}\sum_{\actalt^{\playalt} \in \actions^{\playalt}}\widecheck{\mumeas}^{\playalt}(\actalt^{\playalt})\widecheck{\gammameas}^{\playalt}(\actprof^{-\playalt})\widecheck{\payoff}^{\playalt}_{\actalt^{\playalt}}(\actprof).
\end{align*} 
Notice that $\widecheck{\gammameas}^{\play}\equiv\gammameas^{\play}$ and, for any $\playalt\neq\play$, we  have 
\[
\widecheck{\gammameas}^{\playalt}(\act_{0}^{\play},\actprof^{-\parens{\play,\playalt}})=\widecheck{\gammameas}^{\playalt}(\act_{1}^{\play},\actprof^{-\parens{\play,\playalt}})=\gammameas^{\playalt}(\act_{1}^{\play},\actprof^{-\parens{\play,\playalt}}).
\]
By \cref{de:extended} and since $\widecheck{\payoff}^{\play}_{\act_{0}^{\play}}(\actprof)=\widecheck{\payoff}^{\play}_{\act_{1}^{\play}}(\actprof)$, it follows that
\begin{multline*}
\gradop^{*} \embop(\widecheck{\game}) (\actprof)
=\parens*{\widecheck{\mumeas}^{\play}(\act_{0}^{\play})+\widecheck{\mumeas}^{\play}(\act_{1}^{\play})}\gammameas^{\play}(\actprof^{-\play})\payoff^{\play}_{\act_{1}^{\play}}(\actprof)
+\sum_{\actalt^{\play} \neq \act_{0}^{\play},\act_{1}^{\play}}\gammameas^{\play}(\actprof^{-\play})\mumeas^{\play}(\actalt^{\play})\payoff^{\play}_{\actalt^{\play}}(\actprof)\\+\sum_{\playalt\neq\play}\sum_{\actalt^{\playalt} \in \actions^{\playalt}}\gammameas^{\playalt}(\actprof^{-\playalt})\mumeas^{\playalt}(\actalt^{\playalt})\payoff^{\playalt}_{\actalt^{\playalt}}(\actprof)=0.
\end{multline*} 
Since $\mumeas^{\play}(\act_{1}^{\play})=\widecheck{\mumeas}^{\play}(\act_{0}^{\play})+\widecheck{\mumeas}^{\play}(\act_{1}^{\play})$, it  follows that $\gradop^{*} \embop(\widecheck{\game}) (\actprof)=\gradop^{*} \embop(\game) (\actprof)=0$ and we  obtain that $\widecheck{\game}$ is $(\widecheck{\muprof},\widecheck{\nsgameg})$-harmonic.
\end{proof}

\begin{proof}[Proof of \cref{th:reduce-duplicate}]
Let $\game\in \games_{\actions}^{\dup}$ and let $\nsgameg$ be coherent with $\games_{\actions}^{\dup}$. 
Then, we see that $\game=\widecheck{\reduc{\game}}$, $\nsgameg =\widecheck{\reduc{\nsgameg}}$ and $\muprof$ is an extension of $\reduc{\muprof}$.
By applying \cref{th:reduce-replicate} to the $(\reduc{\muprof},\reduc{\nsgameg})$-decomposition of the game $\reduc{\game}$ and, by uniqueness of the decomposition, we obtain 
\[
\widecheck{({\reduc{\game})_{(\reduc{\muprof},\reduc{\nsgameg})\Potd}}}= \parens*{\widecheck{\reduc{\game}}}_{\parens*{\widecheck{\reduc{\muprof}},\widecheck{\reduc{\nsgameg}}}\Potd}.
\]
After simplification of the notation and applying one more time $\reduc{\mapT}$, we get 
\[
(\reduc{\game})_{(\reduc{\muprof},\reduc{\nsgameg})\Potd}= \reduc{{\game_{(\muprof,\nsgameg)\Potd}}}.
\]
A similar computation yields the result concerning the $(\muprof,\nsgameg)$-harmonic component and the nonstrategic component.
\end{proof}


\bibliographystyle{apalike}
\bibliography{../bibtex/bibdecomposition}

\newpage


\section{List of symbols}

\begin{longtable}{p{.10\textwidth} p{.85\textwidth}}

$\cmeas$ & generator of the positive product-strategic game $\nsgameg$\\

$\Czero$ & vector space of functions $\func: \actions \to \R$\\

$\Cone$ & set of flows, defined in \cref{eq:C-one}\\

$\embop$ & joint embedding operator\\

$\equils_{\actions}$ & Nash-equilibrium correspondence\\

$\game$ & game\\

$\reduc{\game}$ & reduced game where duplication of strategies has been eliminated\\

$\widecheck{\game}$ & extended game, defined in \cref{de:extended}\\ 

$\game_{(\muprof,\nsgameg)\NSd}$ & nonstrategic game\\

$\game_{(\muprof,\nsgameg)\Potd}$ & $\nsgameg$-potential $\muprof$-normalized game\\

$\game_{(\muprof,\nsgameg)\Hard}$ & $(\muprof,\nsgameg)$-harmonic $\muprof$-normalized game\\

$\payoff^{\play}$ & payoff function of player $\play$\\

$\games_{\actions}$ & space of games with strategy-profile set $\actions$\\

$\games_{\actions}^{\dup}$ & space of games with one duplicate strategy\\

$\games_{\actions}^{\aredred}$ & set of games with an $\ared$-redundant strategy, defined in \cref{de:redund}\\ 

$\Id$ & identity function\\

$\nonstratf$ & nonstrategic game\\

$\measures(\setA)$ & set of measures over $\setA$\\

$\measures_{+}(\setA)$ & set of positive measures over $\setA$\\

$\players$ & a finite set of players\\

$\NSG$ & class of nonstrategic games\\

$\act^{\play}$ & strategy of player $\play$\\

$\actprof$ & strategy profile\\

$\actions$ & set of strategy profiles\\

$\widecheck{\actions}$ & extended set of strategy profiles, defined in \cref{de:extended}\\ 

$\reduc{\actions}$ & reduced set of strategy profiles, defined in \cref{de:reduced}\\ 

$\actions^{\play}$ & strategy set of player $\play$\\

$\widecheck{\actions}^{\play}$ & extended strategy set of player $\play$, defined in \cref{de:extended}\\ 

$\reduc{\actions}^{\play}$ & reduced strategy set of player $\play$, defined in \cref{de:reduced}\\

$\actions^{-\play}$ & set of strategy subprofiles that exclude player $\play$\\

$\mapT$ & map from $\games_{\actions}$ to $\games_{\actionsalt}$\\

$\widecheck{\mapT}$ & extending map, defined in \cref{de:extended}\\ 

$\reduc{\mapT}$ & reducing map, defined in \cref{de:reduced}\\ 

$\maps$ & set of game transformations\\

$\symmfunc^{\play}$ & symmetric function, defined in \cref{eq:symm-func}\\

$\threedecmaps$ & family of \aclp{3DM}\\

$\varscal^{\play}$ & function from $\actions^{-\play}$ to $\R_{++}$\\

$\nsgameb$ & nonstrategic game\\

$\nsgameg$ & nonstrategic game \\

$\nsgameg\PGd$ & class of $\nsgameg$-potential games\\

$\threedecmap_{\actions}$ & \acl{3DM}, defined in \cref{eq:3-decomposition}\\

$\gradop$ & gradient operator\\

$\gradop^{\play}$ & partial gradient operator\\

$\gradop^{*}$ & $\sum_{\play\in\players} \gradop^{\play*}$\\

$\gradop^{\play*}$ & adjoint of $\gradop^{\play}$\\

$\gammameas^{-\play}\func$ & function in $\Czero$, defined in \cref{eq:eta-minus-i}\\

$\muprof$ & measure on $\actions$\\

$\permut$ & permutation of $\actions$\\

$\reduc{\muprof}$ &  reduced measure vector, defined in \cref{de:reduced}\\ 

$(\mumeas\gammameas)^{\play}$ & measure on $\actions^{\play}$, defined in \cref{eq:mu-eta-i}\\

$\muprof\NoGd$ & class of $\muprof$-normalized games\\

$(\muprof,\nsgameg)\HGd$ & class of $(\muprof,\nsgameg)$-harmonic games\\

$\numeas$ & product measure on $\actions$\\

$\numeas^{\play}$ & measure  on $\actions^{\play}$\\

$\numeas^{-\play}$ & product measure  on $\actions^{-\play}$\\

$\norml{\numeas}^{\play}$ & normalized measure\\

$\potent$ & potential function\\

$\simplex(\setA)$ & set of probability distributions over $\setA$\\

$\inner{\argdot}{\argdot}_{0}$ & inner product of $\Czero$\\

$\inner{\argdot}{\argdot}_{\muprof,\nsgameg}$ & inner product of $\games_{\actions}$\\

$\oplus_{\muprof,\nsgameg}$ & direct orthogonal sum with respect to $\inner{\argdot}{\argdot}_{\muprof,\nsgameg}$.

\end{longtable}

\end{document}